\titleformat*{\section}{\large\bfseries}
\titleformat*{\subsection}{\normalsize\bfseries}
\pgfplotsset{compat = 1.10}
\newtheorem{claim}{Claim}
\newtheorem{definition}{Definition}
\newtheorem{lemma}{Lemma}
\newtheorem{proposition}{Proposition}
\newcounter{thmnum}
\newcounter{lemnum}
\newcounter{clmnum}
\newcounter{exnum}
\begin{document}

\title{\Large Observability, Dominance,  and Induction in Learning Models\thanks{We thank Amanda Friedenberg, Ying Gao, and George Mailath  for helpful conversations, Jonathan Brownrigg for research assistance, and NSF grant 1951056 for financial support.}}
\author{Daniel Clark,\thanks{Department of Economics, MIT. Email: \href{mailto:dgclark@mit.edu}{dgclark@mit.edu}} ~ Drew Fudenberg,\thanks{Department of Economics, MIT. Email: \href{mailto:drew.fudenberg@gmail.com}{drew.fudenberg@gmail.com}} ~ and Kevin He\thanks{Department of Economics, University of Pennsylvania: \href{mailto:hesichao@gmail.com}{hesichao@gmail.com}}}
\date{January 3, 2022}

\maketitle
\thispagestyle{empty}
\vspace{-2em}

\begin{abstract}
\begin{singlespace}
% We examine the relationship between the steady states of learning models and the concepts of backward and forward induction.  

% We show that when agents experiment and learn about others’ play in a game, a strategy that is strictly dominated in the game may be the optimal experiment. Moreover, we show that learning can select an equilibrium that is eliminated by the deletion of this dominated strategy. Thus the iterated-dominance requirement that \citet{Kohlberg1986} stated as an axiom is not consistent with the idea that equilibrium in games arise from non-equilibrium learning.

Learning models do not in general imply that weakly dominated strategies are irrelevant or justify the related concept of ``forward induction,'' because rational agents may use dominated strategies as experiments to learn how opponents play, and may not have enough data to rule out a strategy that opponents never use. Learning models also do not support the idea that the selected equilibria should only depend on a game's  normal form, even though two games with the same  normal form present players with the same decision problems given fixed beliefs about how others play.  However, playing the extensive form of a game is equivalent to playing the  normal form augmented with the appropriate \emph{terminal node partitions} so that two games are \emph{information equivalent}, i.e., the players receive the same feedback about others' strategies.

 \end{singlespace}
\end{abstract}
\noindent \textbf{Keywords:} learning in games, equilibrium refinements, iterated dominance, forward induction
\newpage

\setcounter{page}{1}

\section{Introduction}
 The learning in games literature asks which equilibria are likely to persist in environments where new players are initially uncertain about the prevailing strategies  and learn about the strategy distribution by repeatedly playing the game.   One reason for the success of strategic stability and associated refinements is that they select intuitive equilibria in  signaling games, as shown by \citet{Banks1987} and \citet{Cho1987}. Learning models make similar predictions in these games (\citet{Fudenberg2018b}, \citet{fudenberg2020payoff},  \citet{clark2021justified}).  In this paper, we show  that these two sorts of refinements can have very different predictions in other games. Specifically, we show  that  learning models do not in general support either the iterated deletion of weakly dominated strategies or the related concept of ``forward induction.''\footnote{There are many related definitions of forward induction in the literature, see the papers surveyed in \citet{govindan2009forward}. As far as we know, none of these definitions has been accompanied by  a theory of how players come to have equilibrium beliefs, or why they should maintain their beliefs in the equilibrium or in others' rationality after observing a deviation.} We also show that learning models support only some  of the invariance axioms proposed by \citet{Kohlberg1986} and \citet{elmes1994strategic}.

There are two distinct reasons that the outcomes of learning models need not satisfy forward induction or iterated weak dominance. First, a dominated strategy may be used as an experiment to gain information about opponents' play at some off-path information sets, and the opponents may then correctly believe that the rare deviations from the equilibrium path use this dominated strategy. Second, even if a dominated strategy is never used, agents in other player roles may not learn this if they start with a prior belief to the contrary and don't obtain enough data to learn the truth.

% \citet{Kohlberg1986} postulated that  "... a good concept of "strategically
% stable equilibrium" should satisfy both the backwards induction rationality of
% the extensive form and the iterated dominance rationality of the normal form,
% and at the same time be independent of irrelevant details in the description of
% the game."   TheNo sin
% it is The key idea of forward induction is that players always believe others must have acted ``rationally'' in the past, even if they see unexpected play \citep{hillas2002foundations}.  % \citet{Kohlberg1986} Propsition 6B. (Forward Induction) A
% % stable set contains a stable set of any game obtained by deletion of a strategy which
% is an inferior response in all the equilibria of the set.
% \citet{govindan2009forward}’s definition of forward induction formalizes this in the setting of equilibrium refinements: It adds to weakly sequential equilibrium the requirement  that, even off the equilibrium path, players assign  probability zero to others using strategies that do not best respond to the equilibrium play

The \citet{Kohlberg1986} argument that a solution concept for games should only depend  on the  normal form  is based on the claim that the differences between extensive forms with the same normal form are ``irrelevant details'' because they do not change the decision problem of a player who faces the \emph{same fixed and known} strategies of the opponents.  Because the  normal form abstracts from many aspects of game play that are relevant for how people  learn what strategies are used by others, there is no reason to expect learning to depend only on this very abstract representation of strategic interaction. Instead, the set of learning outcomes is only invariant to transformations that  are both \emph{decision invariant}, i.e.,  lead to the same best responses as a function of opponent strategies, and \emph{information invariant} in the sense of providing the same feedback to the agents in their learning problems.   Specifically, learning outcomes, unlike sequential equilibria, are invariant to the coalescing of consecutive moves by the same player. However, like sequential equilibria and unlike the various definitions of strategic stability, learning outcomes are not invariant to replacing an extensive form game with the corresponding game in normal  form: In the latter case  there are no unreached information sets,  and the terminal node that is reached reveals the strategy used by each player.  

To capture what is essential for learning outcomes in the normal form, we augment it with \emph{terminal node partitions} (\citet{Fudenberg2015}, \citet{Fudenberg2018b})
 which describe the information players observe when the game is played.   We show that playing the extensive form game is equivalent to playing the normal form with the  terminal node partitions that gives players the same information as would be revealed by the terminal nodes in the extensive form, so that the two games are information invariant. We also show that  if agents play the normal form derived from an extensive form game and observe their opponents' strategies, the learning outcome is a refinement of backward induction and of  $S^{\infty}W$ (\citet{dekel1990rational}),  but does not imply iterated weak dominance.

\section{Informal Overview of the Learning Model}

 We begin with an informal overview of the learning model, deferring the full description of the learning model until Section \ref{sec:full_model}.  We consider an overlapping generations learning environment where time is discrete and doubly infinite, $t \in \{...,-2,-1,0,1,2,...\}$. There is a continuum of agents of mass $1$ in each player  role $i\in\{1,...,I\}.$ The agents have geometric lifespans, with  i.i.d. survival probability $\gamma$ per period. Each period newborn agents replace the departing agents so the sizes of the various populations are constant, and then agents are anonymously matched to play a fixed  extensive form stage game with  perfect recall.

 The game has information sets $\mathcal{H}_{i}$
 for each player $i\in\{1, ..., I\}$, with available actions $A_{h}$ at each $h\in\mathcal{H}_{i}$. A pure strategy $s_{i} \in S_{i}$
 of $i$ assigns an action $s_{i}(h)\in A_{h}$ to every information
 set $h$ of $i.$ Denote the set of terminal nodes of game tree as $Z$, and let $\mathbf{z}(s)$ denote the terminal node reached by strategy profile $s$. Player $i$ has a utility function defined on terminal nodes, $u_{i}:Z\to\mathbb{R}$ and a corresponding utility function on strategy profiles  $u_{i}(s)=u_{i}(\mathbf{z}(s))$.

 Each agent has a \emph{terminal node partition} $\mathcal{P}_i$  (\citet{Fudenberg2015}, \citet{Fudenberg2018b})
 over $Z$, and they observe which partition element contains the terminal node of their match at the end of each period.\footnote{\citet{Fudenberg2015}  analyze  settings where each player moves only once, and players who choose an \textbf{Out} action do not observe the choices made by others. The rationalizable conjectural equilibria of \citet{RCE} and \citet{esponda2013rationalizable} use \emph{signal functions} to model what players observe when the game is played. These papers do not explicitly consider extensive form games so their signal functions are more abstract.}
 In previous analyses of explicit learning models,  this partition is discrete, i.e., all agents observe the realized terminal node, and this will be our default assumption. However, in some settings it is natural to assume that agents observe less; for example, in a sealed-bid first price auction, agents might only observe the winning bid.

All agents are rational Bayesians who choose \emph{policies} (maps from history of past observations to current play) that maximize their expected discounted payoff.  They are born with priors over the prevailing steady-state distribution of play in the opponent populations, which they update using their observations.
In every period $t$, the \emph{state} of the system is the shares of agents in a given player role with the various possible histories. The state and the optimal policies  induce an   \emph{aggregate  strategy} that describes the distribution of strategies in each player-role population, and thus  an \emph{update rule} that maps states in period $t$ to states in period $t+1$.  We study this system's steady states, which are the fixed points of the update rule. 

Agent's observations can depend on their play, so their optimal policies may incorporate a value for  ``experimenting'' with various strategies that have the potential to improve  payoff.  The size of the experimentation incentive depends on their continuation probability, their discount factor $\delta \in [0,1)$, and how much they have already learned: inexperienced agents have more incentive to experiment, and they cease experimenting when they have enough data.  

We focus on the limits of steady-state play when $\gamma$ tends to $1$, so  agents can acquire enough observations to outweigh their prior. We also assume that $\delta$ goes to $1$. Otherwise, agents may not experiment enough to rule out limits that are not  Nash equilibria. We call the strategy profiles that emerge in this limit \emph{patiently stable}.

\section{Examples}

\subsection{Failures of  Forward Induction and Iterated Weak Dominance}

 We give  simple examples to show that  equilibria  that violate minimal notions of forward induction or the related  concept of iterated weak dominance can be patiently stable.
 
%  , for one of two reasons: (1) a strategy eliminated by forward induction may still get used  as an experiment that provides information about how opponents play; (2) even when a strategy $s$ is never used while some other strategy $s'$ that reaches the same opponent information set is used, agents another  player role may not believe this if they start with a strong prior belief to the contrary and  observe $s'$ too rarely. 

\subsubsection{Information Value of Dominated Strategies}\label{Payoff-Dominated Action}

Consider the following game: P1 chooses from $\textbf{Out}$, $\textbf{In1}$, and $\textbf{In2}$. If P1 chooses $\textbf{Out}$, the game is over and each player gets $0$. If P1 chooses $\textbf{In1}$ or $\textbf{In2}$, P2 plays $\textbf{L}$ or $\textbf{R}$ without knowing P1’s choice. The figure below shows the game in its extensive-form and normal-form representations.

\begin{figure}[H]
\begin{centering}
\begin{minipage}[c]{0.4\columnwidth}%
\includegraphics[scale=0.4]{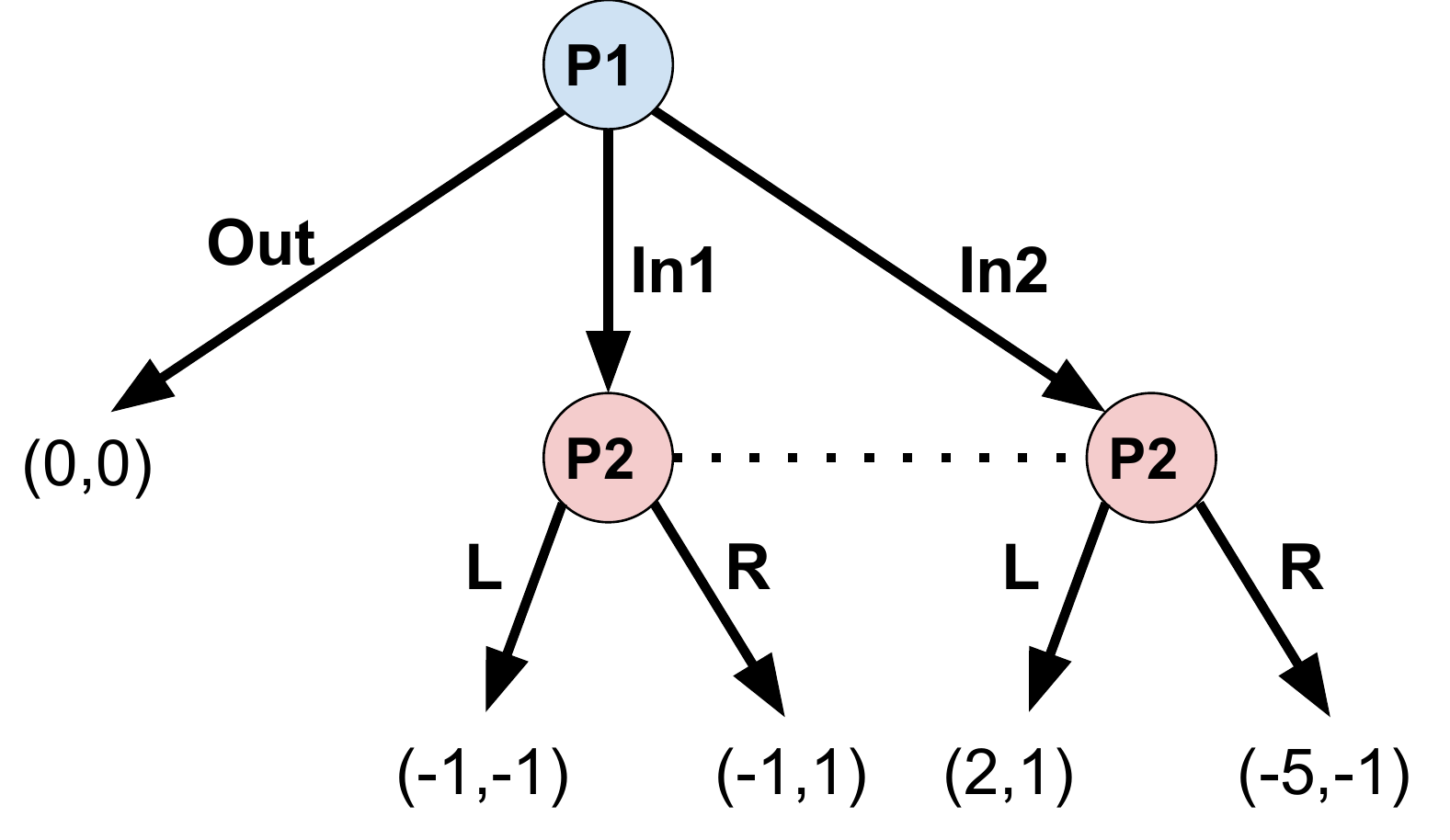}%
\end{minipage} \qquad{}%
\begin{minipage}[c]{0.4\columnwidth}%
\begin{center}
\begin{tabular}{|c|c|c|}
\hline 
 & \textbf{L} & \textbf{R}\tabularnewline
\hline 
\hline 
\textbf{Out} & 0,0 & 0,0\tabularnewline
\hline 
\textbf{In1} & -1,-1 & -1,1\tabularnewline
\hline 
\textbf{In2} & 2,1 & -5,-1\tabularnewline
\hline 
\end{tabular}
\par\end{center}%
\end{minipage}
\par\end{centering}
\caption{$\textbf{In1}$ is strictly dominated by $\textbf{Out}$ but provides the same information as \textbf{In2} and performs better than \textbf{In2} against some P2 strategies.}
\label{Example 2 Payoffs}
\end{figure}

The strategy $\textbf{In1}$ is strictly dominated by $\textbf{Out}$ for P1, and the iterated-dominance criterion of \citet{Kohlberg1986} requires that ``A solution of a game $G$ contains a solution of any game $G'$ obtained from $G$ by deletion of a dominated strategy.'' In the game $G'$ that results from the deletion of $\textbf{In1}$, $(\textbf{In2},\textbf{L})$ is the only sequential equilibrium and so the only strategically stable equilibrium. Thus the Nash equilibrium $(\textbf{Out}, \textbf{In1})$ is ruled out by forward induction. 

In contrast, when an inexperienced P1 agent plays this game and  observes the terminal node at the each of each match, the agent may find it optimal to play $\textbf{In1}$. This is because $\textbf{In1}$ and $\textbf{In2}$ are informationally equivalent experiments: they provide the same signal about how P2s play. But if P1’s current belief puts much higher probability on P2s playing $\textbf{R}$ than $\textbf{L}$, then P1’s expected payoff from $\textbf{In1}$  exceeds that of $\textbf{In2}$. A sufficiently patient P1 agent will choose to experiment and learn about P2’s play in order to figure out whether $\textbf{Out}$ or $\textbf{In2}$ is a better response against the aggregate P2  play, but the cheapest such experiment may be $\textbf{In1}$.\footnote{\citet{Fudenberg1993}, footnote 10 pointed out the possibility that this might occur in their closely related learning model but did not provide a proof that it does.} 
% Moreover, the example shows a pitfall in trying to incorporate agents' knowledge of opponents' payoff functions into their priors, as requiring priors to assign probability $0$ to strictly dominated strategies can cause Bayesian updating to be undefined when agents observe outcomes they deem impossible.\footnote{\citet{fudenberg2020payoff} incorporated payoff knowledge into learning models in signaling games, but this problem did not arise there because senders believed that the response to one signal is uninformative about the consequences of others.}

\begin{claim}
$(\textbf{Out}, \textbf{R})$ is a patiently stable strategy profile for the game in Figure \ref{Example 2 Payoffs}.
\label{Out is Stable with Payoff-Dominated Actions Result}
\end{claim}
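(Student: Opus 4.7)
The plan is to exhibit, for each sufficiently large $(\gamma,\delta)$, a steady state whose aggregate play converges to $(\textbf{Out},\textbf{R})$ as $\gamma,\delta\to 1$. The core idea is that $\textbf{In1}$ and $\textbf{In2}$ are informationally equivalent experiments for P1---they induce the same partition on terminal nodes and so reveal P2's action identically---but when P1's current belief puts enough probability on $\textbf{R}$, the flow cost of $\textbf{In1}$ relative to $\textbf{Out}$ (a constant $1$) is strictly smaller than that of $\textbf{In2}$ (equal to $7\,\Pr[\textbf{R}]-2$, which exceeds $1$ once $\Pr[\textbf{R}]>3/7$). Hence a patient P1 who experiments at all strictly prefers $\textbf{In1}$ to $\textbf{In2}$, and a P2 who sees almost exclusively $\textbf{In1}$ deviations best-responds with $\textbf{R}$.

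I would first specify suitable full-support priors. P1's prior over P2's play should assign probability greater than $3/7$ to $\textbf{R}$ (so that $\textbf{Out}$ is the myopic best response and $\textbf{In1}$ is the cheaper experiment) but strictly positive probability to $\textbf{L}$ (so that information about P2's play has positive value). P2's prior over P1's play should concentrate enough mass on $\textbf{In1}$ relative to $\textbf{In2}$, conditional on P1 not choosing $\textbf{Out}$, that $\textbf{R}$ is the best response at P2's information set regardless of the precise identity of the small mass of experimenters.

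Next, I would characterize P1's optimal policy. Because $\textbf{In1}$ and $\textbf{In2}$ induce the same posterior after every observation, they share the same continuation value, so a P1 who experiments picks whichever has higher flow payoff, and this is $\textbf{In1}$ whenever the current belief on $\textbf{R}$ exceeds $3/7$. Along almost every sample path in a candidate steady state where P2 actually plays $\textbf{R}$, observations shift P1's posterior further toward $\textbf{R}$ and preserve the $3/7$ threshold. Standard patient-experimentation arguments from the learning-in-games literature then imply that for $\delta$ close to $1$, a P1 with the specified prior experiments with $\textbf{In1}$ a finite number of times and then settles on $\textbf{Out}$ (or, on the rare paths with many $\textbf{L}$ observations, switches to $\textbf{In2}$).

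Combining the two sides, the aggregate P1 play at the candidate steady state has mass close to $1$ on $\textbf{Out}$, mass of order $1-\gamma$ on $\textbf{In1}$ coming from the newborn and still-experimenting agents, and essentially no mass on $\textbf{In2}$; conditional on P2's information set being reached, play is concentrated on $\textbf{In1}$, so $\textbf{R}$ is indeed P2's best response, closing the fixed point. Taking $\gamma\to 1$ drives the experimenting share to zero and aggregate play to $(\textbf{Out},\textbf{R})$, while $\delta\to 1$ sustains the experimentation incentive. The main obstacle is the two-sided consistency check: I must ensure that along essentially every sample path, P1's posterior on $\textbf{R}$ stays above $3/7$ so that $\textbf{In1}$ rather than $\textbf{In2}$ is chosen, while simultaneously verifying that the tiny flow of transient $\textbf{In2}$ play (from unlucky experimenters before they settle) is small enough not to tip P2 away from $\textbf{R}$. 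Managing this joint convergence, rather than each side in isolation, is where the real work lies.
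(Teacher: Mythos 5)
Your intuition matches the paper's: \textbf{In1} and \textbf{In2} are informationally equivalent experiments, \textbf{In1} is the cheaper one when the belief on \textbf{R} exceeds $3/7$, and a P2 who only ever sees \textbf{In1} entries best-responds with \textbf{R}. You also compute the $3/7$ threshold correctly and identify the right qualitative prior restrictions. But the step you flag at the end as ``where the real work lies''---the two-sided consistency check---is exactly the part that is missing, and your framing makes it harder than it needs to be. By allowing P1 agents who see many \textbf{L}'s to switch to \textbf{In2}, and then asking whether the resulting ``tiny flow of transient \textbf{In2} play'' tips P2 away from \textbf{R}, you set up a feedback loop between two small probabilities (P2's \textbf{L}-probability and P1's \textbf{In2}-probability) that would have to be controlled jointly and uniformly in $(\delta,\gamma)$; nothing in your sketch does this.

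The paper (Proposition \ref{Dominated Action Stability General Result} and Appendix \ref{Proof of Proposition Dominated Action Stability General Result}) closes the argument by making the desired behavior exactly self-reinforcing rather than approximately so. It defines \emph{supportive} priors: P1's posterior must favor \textbf{In1} over \textbf{In2} after every history that has never recorded an \textbf{L} (for a prior whose mean probability of \textbf{R} exceeds $3/7$ this follows because observing only \textbf{R}'s pushes the posterior mean of \textbf{R} upward), and P2's posterior must favor \textbf{R} after every history that has never recorded an \textbf{In2}. It then restricts the aggregate response map to the set $\widetilde{\Pi}_{1}\times\widetilde{\Pi}_{2}$ of profiles where P1 puts probability exactly $0$ on \textbf{In2} and P2 puts probability exactly $1$ on \textbf{R} at the off-path information set. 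Supportiveness makes this set invariant: if no P1 ever plays \textbf{In2}, no P2 ever observes \textbf{In2}, so every P2 plays \textbf{R}; if every P2 plays \textbf{R}, no P1 ever observes \textbf{L}, so no P1 ever plays \textbf{In2}. Hence there is no transient \textbf{In2} flow and no unlucky sample path to worry about. Existence of a steady state in this set is then obtained by applying Brouwer's theorem to a modified response map that additionally forces probability at least $1-\eta$ on \textbf{Out}, and a final limit argument (the standard ``patiently stable implies Nash'' machinery) shows the modification is slack for large $\delta$ and $\gamma$, so the fixed points are genuine steady states converging to $(\textbf{Out},\textbf{R})$. Without some device of this kind your sketch does not yet constitute a proof.
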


In Section \ref{dominated_action_general} we establish  more general  sufficient conditions for patient stability in two-player games where each player moves at most once.  These conditions give us a class of games where patiently  stable profiles fail forward induction because of the informational value of dominated strategies. The idea of the proof is to choose ``supportive'' priors that lead the early mover to experiment in a way consistent with the desired equilibrium (such as choosing  \textbf{In1} instead of  \textbf{In2} in the example above) unless they have previously seen an out-of-equilibrium response from the second mover. 

\subsubsection{Insufficient Data to Eliminate Weakly Dominated Opponent Play}\label{sec:no_data}

In the previous example, there is a dominated strategy that is still used by a rational agent as it provides information about their opponents' aggregate play. By contrast, for the game in Figure \ref{fig:doubly_dom}, the strategy \textbf{In2} is \emph{doubly dominated} by \textbf{In1} for P2 agents: it  provides the same
information about opponent play but, whenever these actions can be played, \textbf{In2} always gives
 a strictly lower payoff than \textbf{In1}.  A rational P2 agent will therefore
never play \textbf{In2} even as an experiment, which makes it more
surprising that the learning outcome for patient and long-lived agents
can select a profile where P1 and P2 are deterred from entering by P3's \textbf{R}, which is strictly inferior to \textbf{L} against \textbf{In1}.
 
 \begin{figure}[H]
    \centering
    \includegraphics[scale=.45]{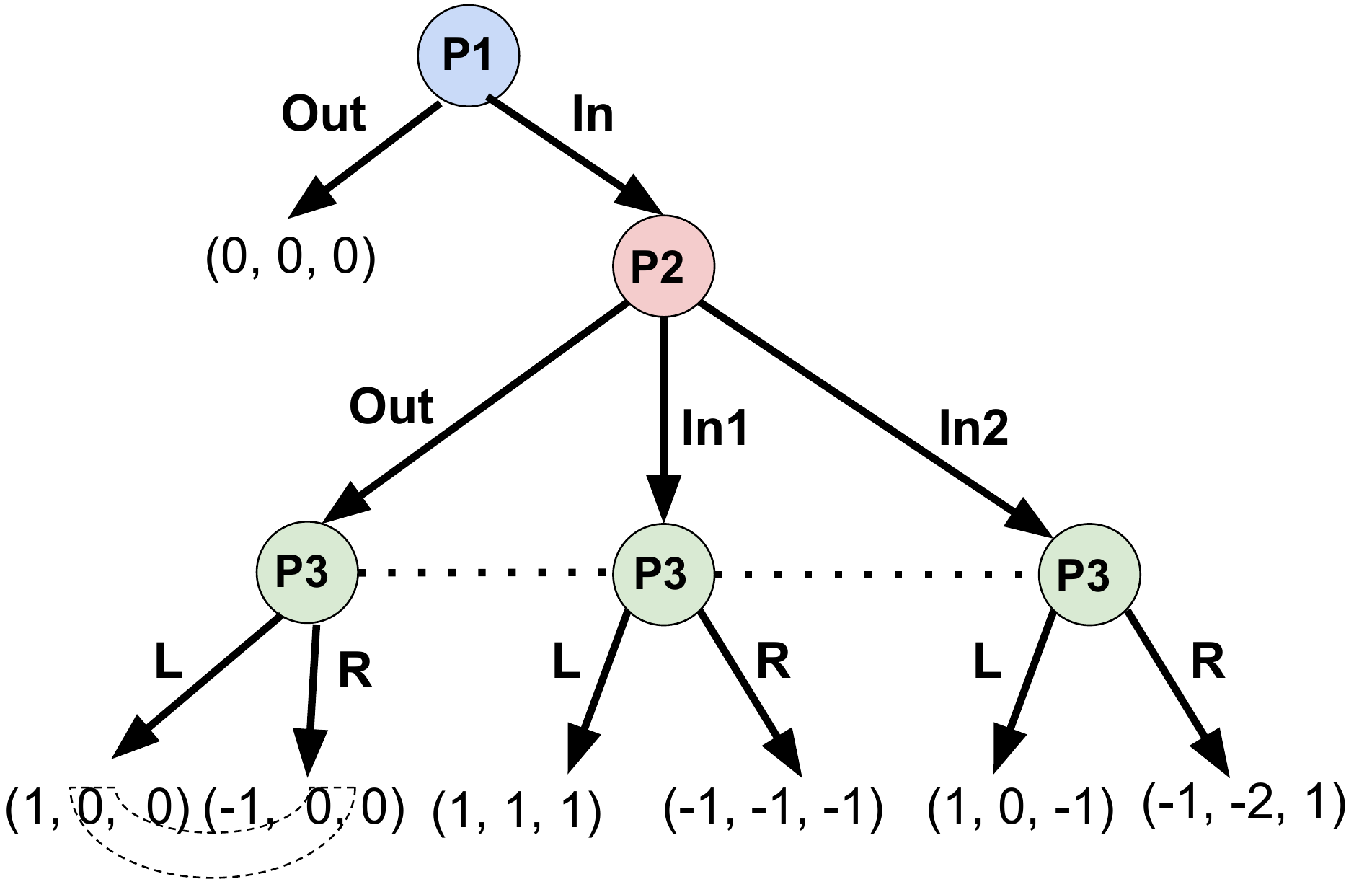}
    \caption{A three-player game where the strategy $\textbf{In2}$ is doubly dominated by $\textbf{In1}$ for P2. The dotted lines connecting two terminal nodes represent P2's terminal node partition.}
    \label{fig:doubly_dom}
\end{figure}
 Here we suppose that P1 and P3 always observe the terminal node, but P2's terminal node partition is such that they do not learn how P3 plays if they choose $\textbf{Out}$. Note that once the doubly dominated $\textbf{In2}$ is deleted for P2, $\textbf{L}$ is a strictly better strategy than  $\textbf{R}$ for P3 against any strictly mixed play of P1 and P2. But:

\begin{claim}\label{doubly_dominated_Result}
$(\textbf{Out}, \textbf{Out}, \textbf{R})$ is a patiently stable strategy profile for the game in Figure \ref{fig:doubly_dom}.
\end{claim}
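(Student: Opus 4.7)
The plan is to construct supportive priors for each player role such that the prescribed steady-state play is strictly optimal under those priors, the induced aggregate behavior confirms them, and this self-consistency persists as $\gamma, \delta \to 1$. Since $(\textbf{Out}, \textbf{Out}, \textbf{R})$ is a Nash equilibrium of the stage game, the substantive question is only whether rational experimentation can destabilize it.

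First, I would fix priors $\mu_1, \mu_2, \mu_3$ for the three player roles. The prior $\mu_1$ should concentrate enough on P2 playing $\textbf{Out}$ (and/or on P3 playing $\textbf{R}$) that $\textbf{Out}$ strictly beats $\textbf{In}$ for P1 in expectation; $\mu_2$ should concentrate enough on P3 playing $\textbf{R}$ that $\textbf{Out}$ strictly beats $\textbf{In1}$ for P2 (note that $\textbf{In2}$ is ruled out at every belief by double dominance); and $\mu_3$ should assign enough conditional probability to P2 choosing $\textbf{In2}$ rather than $\textbf{In1}$ at P3's information set that $\textbf{R}$ strictly beats $\textbf{L}$ for P3. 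These required concentrations can be chosen uniformly over $(\gamma, \delta)$ sufficiently close to $(1,1)$.

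Second, I would verify that experimentation does not destabilize the profile. The key feature of the game is that P2's terminal node partition prevents P2 from learning about P3 by playing $\textbf{Out}$, so the only informative experiment available to P2 is $\textbf{In1}$. In the candidate steady state P3 does play $\textbf{R}$, so any newborn P2 who experiments with $\textbf{In1}$ sees the prior confirmed and quickly reverts to $\textbf{Out}$; thus the per-period population share of P2s playing $\textbf{In1}$ is of order $1-\gamma$, and similarly for the share of P1s playing $\textbf{In}$. Consequently P3's information set is reached at rate $O\bigl((1-\gamma)^{2}\bigr)$ per period, and over a typical P3 lifetime of $O\bigl((1-\gamma)^{-1}\bigr)$ periods this yields only $O(1-\gamma)\to 0$ expected observations. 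Because P3 has no experimentation motive in choosing between $\textbf{L}$ and $\textbf{R}$---the observation at P3's information set does not depend on the action taken there---almost all P3 agents retain posteriors close to $\mu_3$ and continue to play $\textbf{R}$.

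Third, I would apply the standard fixed-point argument on the update rule of the overlapping-generations system to obtain, for each $(\gamma, \delta)$ sufficiently close to $(1,1)$, a steady state whose aggregate strategy is within $o(1)$ of $(\textbf{Out}, \textbf{Out}, \textbf{R})$, and then pass to the patient limit. The main obstacle is the two-way consistency in the second step: the small mass of P3 agents who do observe their information set will see P2 play $\textbf{In1}$ (never $\textbf{In2}$), so their posteriors shift toward $\textbf{L}$, and one must rule out a cascade in which this $\textbf{L}$-play makes $\textbf{In1}$ attractive to experienced P2s, which would make P3's information set reached frequently and unravel the equilibrium. The argument is that the deviant $\textbf{L}$-mass is itself $O(1-\gamma)$, which preserves strict optimality of $\textbf{Out}$ for P2 under a sufficiently concentrated $\mu_2$, thereby closing the fixed point.
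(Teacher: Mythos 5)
Your overall architecture (supportive priors, a self-consistency check, a fixed-point argument, then the patient limit) matches the paper's, and your treatment of P1 and of P3's lack of an experimentation motive is essentially right. The gap is in your mechanism for why the P2 population abandons $\textbf{In1}$. You argue that a newborn P2 ``experiments with $\textbf{In1}$, sees the prior confirmed and quickly reverts to $\textbf{Out}$,'' so that the share of P2s playing $\textbf{In1}$ is $O(1-\gamma)$. But P2 can only run this experiment when their information set is reached, i.e., when their matched P1 plays $\textbf{In}$, which in your own candidate steady state happens with per-period probability $O(1-\gamma)$. An agent whose policy is ``play $\textbf{In1}$ until I have observed P3's response'' therefore remains in the experimenting state for $\Theta\bigl(1/(1-\gamma)\bigr)$ periods --- a non-vanishing fraction of their lifetime --- and the aggregate behavior strategy counts every such agent as playing $\textbf{In1}$. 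So your mechanism yields $\pi_{2}(\textbf{In1})=\Theta(1)$, not $O(1-\gamma)$; P3's information set is then reached at rate $\Theta(1-\gamma)$ rather than $O\bigl((1-\gamma)^{2}\bigr)$, a typical P3 gets $\Theta(1)$ observations over their lifetime (all of them $\textbf{In1}$, never $\textbf{In2}$), and under the kind of P3 prior you and the paper both posit a non-vanishing fraction of P3s would switch to $\textbf{L}$ --- which is exactly what the claim must rule out.

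The paper closes this gap with a different mechanism, the \citet{Fudenberg2006} point that off-path players have weak experimentation incentives: P2 agents observe P1's entry decision every period even when they play $\textbf{Out}$, so after enough periods of seeing only $\textbf{Out}$ from P1 they learn their information set is rarely reached. The option value of the information obtainable from $\textbf{In1}$ is then bounded by $\delta\nu$ (with $\nu$ the posterior probability of P1 entering) while its myopic cost is of order $(1-\delta)\epsilon$ given a prior with expected probability of $\textbf{L}$ below $1/2$, so such agents never experiment at all. This gives $\pi_{2}(\textbf{Out})\to 1$, and combined with the bound $\pi_{1}(\textbf{In})/(1-\gamma)\le\kappa$ it makes the expected number of P3 observations at most $\kappa\cdot\pi_{2}(\textbf{In1})\to 0$. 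The fixed point is then closed not by a ``sufficiently concentrated'' P2 prior but by showing the aggregate response maps the set $\{\pi_{3}(\textbf{L})\le 1/4\}$ into itself for $\gamma$ near $1$. To repair your proof you would need to replace the ``experiment and quickly revert'' step with this option-value argument (or some other reason P2s decline to experiment), since as written the quantitative core of your second and third steps fails.
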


The presence of a third player is critical to this conclusion. The idea is that although  aggregate P2 play puts zero probability on \textbf{In2} (as required by the elimination of weakly dominated strategies) and positive probability on \textbf{In1},  a P3 agent may not have enough data to learn this aggregate play, as P3s only observe a P2 agent entering when they encounter both a P1 and a P2 agent experimenting with some \textbf{In} action. The incentive for P2 to experiment is weak because  they are located off the equilibrium path and do not expect to play often, as in \citet{Fudenberg2006}. As a result, most P3 agents will never obtain any data to correct a prior belief that says it is more likely for P2's to choose \textbf{In2} than \textbf{In1}, so they find it optimal to play $\textbf{R}$. Even though the  aggregate steady-state play  of the P2s puts zero probability on the weakly dominated strategy,   most P3s fail to learn this. We formally analyze this example in Section \ref{sec:double_dominated_analysis}.

\subsection{Invariance}

% \begin{figure}[H]
%     \centering
%     \includegraphics{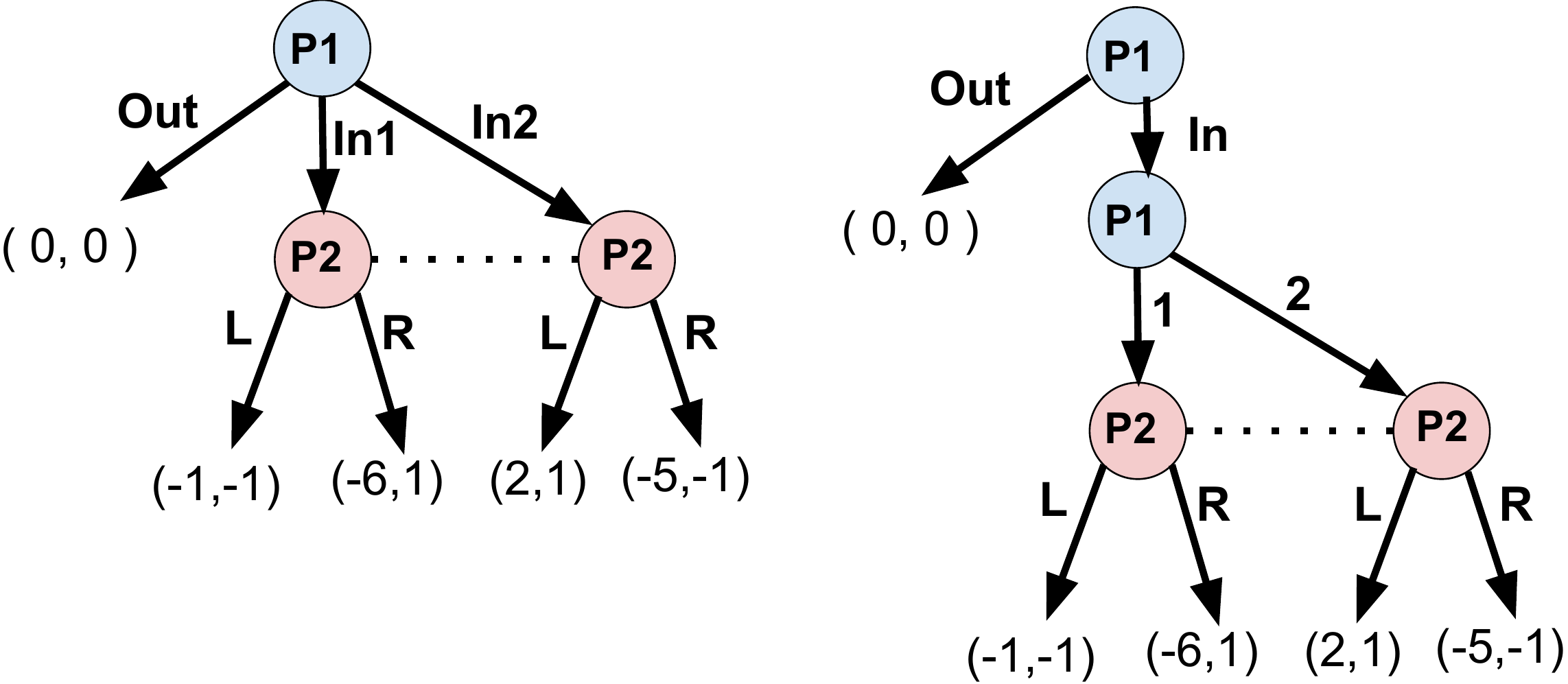}
%     \caption{Both games have the same set of patiently stable profiles. But  $\textbf{Out}$ is only a sequential equilibrium outcome for the game on the left.}
%     \label{fig:fig3}
% \end{figure}

The refinements literature following \citet{Kohlberg1986} argues that the selected set of equilibria should only depend on the  normal form, so that any two extensive forms with the same  normal form generate the same predictions. \citet{Kohlberg1986}  say that this follows from the fact that the  normal form ``captures all the relevant
information for decision purposes...''  To this we would add ``for fixed beliefs about the play of the opponents.''  Simply splitting a decision node for one player, without changing any of the information sets of the opponents, does not  change what any player observes either during the game or at the end of it, and so has no effect on the patiently stable outcomes, as in the following example.

% \begin{figure}[H]
%     \centering
%     \includegraphics{figures/fig3_paper.pdf}
%     \caption{Both games have the same set of patiently stable profiles. But  $\textbf{Out}$ is only a sequential equilibrium outcome for the game on the left.}
%     \label{fig:fig3}
% \end{figure}

In Figure \ref{fig:fig3}, both games have the same set of patiently stable profiles   (by Claim \ref{claim:split} below),  but the outcome $\textbf{Out}$ is only a sequential equilibrium outcome in the extensive form on the left.\footnote{In any sequential equilibrium of the game on the right,  P1 must play action 2 so P2 must play $\textbf{L}$, so P1 must play \textbf{In}.} 

\begin{figure}[H]
    \centering
    \includegraphics[scale=0.4]{}
    \caption{Both games have the same set of patiently stable profiles. But  $\textbf{Out}$ is only a sequential equilibrium outcome for the game on the left.}
    \label{fig:fig3}
\end{figure}

% \begin{figure}[H]
%     \centering
%     \includegraphics[scale=.75]{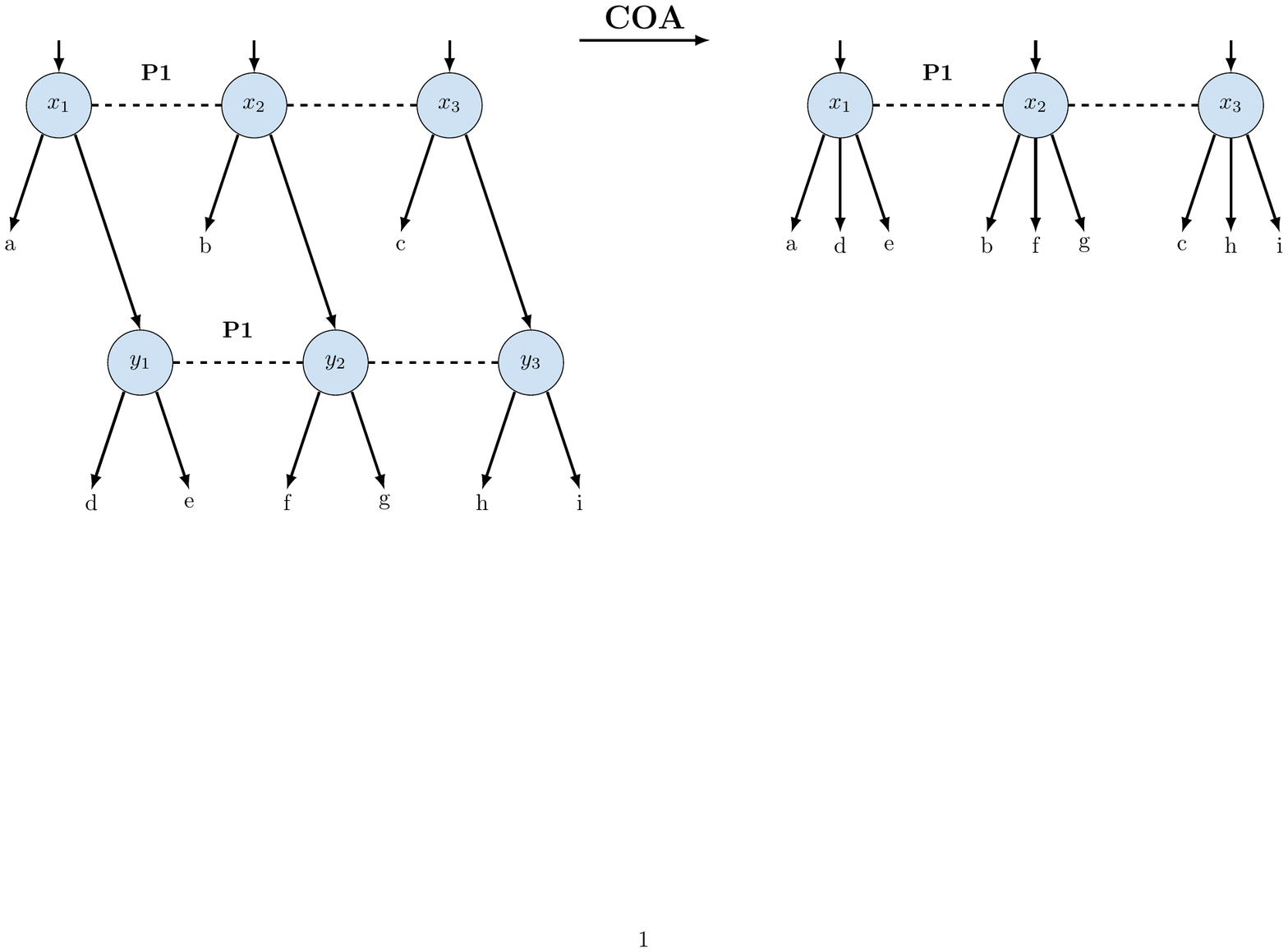}
%     \caption{The coalesce operation is applied to two information sets of P1: $\{x_{1},x_{2},x_{3}\}$ and $\{y_{1},y_{2},y_{3}\}$.}
%     \label{fig:COA}
% \end{figure}

More generally, suppose extensive form $\hat{\mathcal{G}}$ is obtained by coalescing two consecutive information sets $h_{i}^{'}$
and $h_{i}^{''}$ of $i$ in $\mathcal{G}$ into one information set $h_{i}^{\star}$
in $\hat{\mathcal{G}}$ (according to \citet{elmes1994strategic}'s ``COA'' definition of coalescence, as in Figure \ref{fig:COA}). Then they must have the same set of patiently stable profiles. 

\begin{figure}[H]
    \centering
    \includegraphics[scale=.75]{}
    \caption{The coalesce operation is applied to two information sets of P1: $\{x_{1},x_{2},x_{3}\}$ and $\{y_{1},y_{2},y_{3}\}$.}
    \label{fig:COA}
\end{figure}

\begin{claim} \label{claim:split}
If $\mathcal{G}$ and $\hat{\mathcal{G}}$ are related by coalescing $h_{i}^{'}$
and $h_{i}^{''}$ into $h_{i}^{\star}$, then they have the same set of patiently stable profiles (up to identifying $i$'s two actions at $h_{i}^{'}$
and $h_{i}^{''}$ in $\mathcal{G}$ with their one action at $h_{i}^{\star}$ in $\hat{\mathcal{G}}$.)
\end{claim}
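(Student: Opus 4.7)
The plan is to show that $\mathcal{G}$ and $\hat{\mathcal{G}}$ are isomorphic as learning problems via a natural bijection $\phi$ between strategy profiles that commutes with the update rule, and then to deduce that their steady states (and hence their patiently stable profiles) correspond. First, I would define $\phi : S_i(\mathcal{G}) \to S_i(\hat{\mathcal{G}})$ as the map that sends a strategy with $s_i(h_i') = a$ and $s_i(h_i'') = b$ to the strategy with $\hat{s}_i(h_i^{\star}) = (a,b)$, leaving the choices at all other information sets of $i$ unchanged. Strategies of players $j \neq i$ are unaffected because COA touches only $i$'s decision nodes; $\phi$ is a bijection because COA by construction identifies $A_{h_i^{\star}}$ with $A_{h_i'} \times A_{h_i''}$.

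Next, I would verify that $\phi$ preserves outcomes: the path induced by $s$ in $\mathcal{G}$ and by $\phi(s)$ in $\hat{\mathcal{G}}$ reaches the same terminal node in $Z$, which is essentially the defining property of coalescing. Since $Z$, the utility functions $u_j$, and the terminal node partitions $\mathcal{P}_j$ are shared by the two games, every agent's end-of-match observation and expected payoff given any belief over opponents' play are identical under $\phi$. Consequently, the spaces of histories, beliefs, and policies are in bijection across the two games, and a policy is optimal in one game, for any fixed $\gamma$ and $\delta$, if and only if the corresponding policy is optimal in the other. In particular, experimentation incentives are preserved because they depend only on the continuation value function, which is determined by the same payoff and signal structure.

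Having established that individual decision problems are isomorphic, I would conclude that aggregate strategies, the update rule mapping period-$t$ states to period-$t{+}1$ states, and the resulting steady states are all conjugate under $\phi$. Taking $\gamma \to 1$ and $\delta \to 1$ then gives the identification of patiently stable profiles required by the claim. The main task is bookkeeping rather than a substantive obstacle: once one observes that COA leaves every player's informational position and every terminal outcome intact, the learning dynamics transparently coincide under the identification of $(a,b) \in A_{h_i'} \times A_{h_i''}$ with the corresponding action at $h_i^{\star}$.
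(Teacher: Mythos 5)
Your high-level strategy (exhibit an isomorphism of learning problems and conclude that steady states, hence patiently stable profiles, correspond) matches the paper's, but two of your steps are wrong or unsupported. First, COA does not identify $A_{h_{i}^{\star}}$ with the product $A_{h_{i}'}\times A_{h_{i}''}$. Writing $A_{h_{i}'}=\{a_{1},\dots,a_{m},a_{\text{pass}}\}$ with only $a_{\text{pass}}$ leading to $h_{i}''$, and $A_{h_{i}''}=\{a_{m+1},\dots,a_{m+n}\}$, the coalesced action set is $A_{h_{i}^{\star}}=\{a_{1},\dots,a_{m+n}\}$, which has $m+n$ elements rather than $(m+1)n$. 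Consequently your map $\phi$ on pure strategies is not a bijection into $\hat{\mathcal{G}}$: all $\mathcal{G}$-strategies that differ only in the action prescribed at the unreached $h_{i}''$ (following a non-pass choice at $h_{i}'$) collapse to a single strategy of $\hat{\mathcal{G}}$. The object the paper puts in bijection is instead the pair of strictly mixed actions $(\alpha_{h_{i}'},\alpha_{h_{i}''})$ with the strictly mixed action $\alpha_{h_{i}^{\star}}$, via $\alpha_{h_{i}^{\star}}(a_{k})=\alpha_{h_{i}'}(a_{k})$ for $k\le m$ and $\alpha_{h_{i}^{\star}}(a_{k})=\alpha_{h_{i}'}(a_{\text{pass}})\,\alpha_{h_{i}''}(a_{k})$ for $k>m$.

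Second, and more substantively, you assert that ``the spaces of histories, beliefs, and policies are in bijection'' as if this were bookkeeping, but the opponents' priors are densities over different spaces: $\Delta(A_{h_{i}'})\times\Delta(A_{h_{i}''})$ in $\mathcal{G}$ versus $\Delta(A_{h_{i}^{\star}})$ in $\hat{\mathcal{G}}$. For the learning dynamics to coincide you must produce, for each non-doctrinaire prior density $g_{j}$, a non-doctrinaire $\hat{g}_{j}$ such that $g_{j}$ assigns the same probability to every event $E$ as $\hat{g}_{j}$ assigns to $\phi(E)$, so that the two posteriors continue to agree after every observation. Because the map above is nonlinear, this requires dividing the transported density by the Jacobian-type factor $(\alpha_{h_{i}'}(a_{\text{pass}}))^{n-1}$, and one must check that the result remains strictly positive on the interior so that the transported prior is still non-doctrinaire. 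Without this renormalization the ``corresponding'' beliefs assign different probabilities to corresponding events and the steady states need not match. This two-directional density construction is the actual content of the paper's proof, and it is precisely the step your proposal skips.
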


When two information sets of $i$ are coalesced, the domain of $-i$'s prior beliefs about $i$'s play must be modified so that they are about $i$'s single action at the combined information set. Appendix \ref{Proof of Claim claim:split} gives the proof of Claim \ref{claim:split}, which establishes a bijection between non-doctrinaire prior densities $g$ in the game $\mathcal{G}$  and $\hat{g}$ in the game $\hat{\mathcal{G}}$, so that the set of steady states are the same under $g$ in $\mathcal{G}$  and $\hat{g}$ in the  $\hat{\mathcal{G}}$ for any $0 \le \delta, \gamma < 1$.

However, other transformations of extensive form that leave the  normal form unchanged can change the feedback players obtain in the course of play and so change the set of patiently stable outcomes.

\begin{figure}[H]
\begin{centering}
\begin{minipage}[c]{0.4\columnwidth}%
\includegraphics[scale=1]{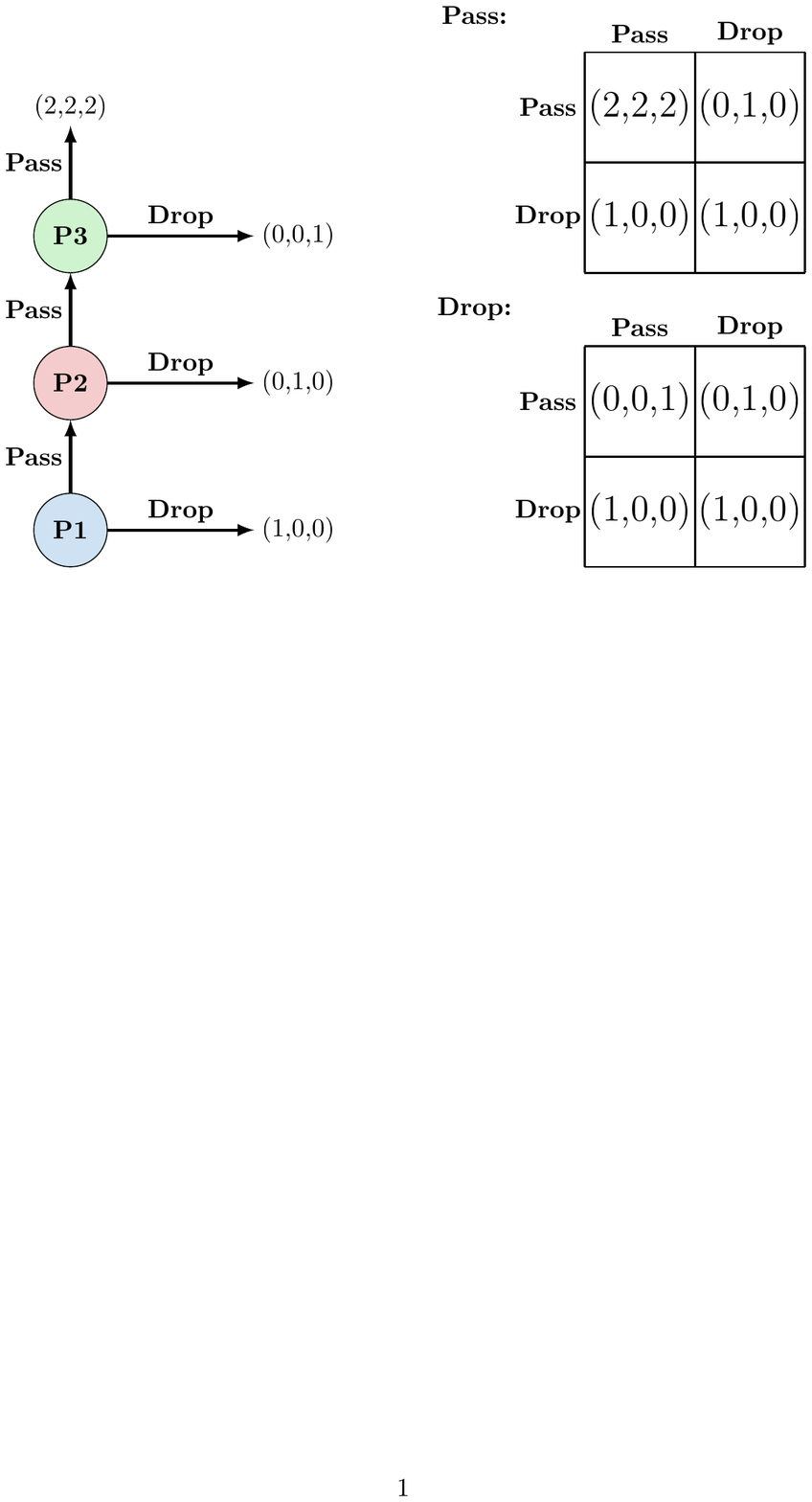}%
\end{minipage} \qquad{}%
\begin{minipage}[c]{0.4\columnwidth}%
\textbf{Pass}
\begin{center}
\begin{tabular}{|c|c|c|}
\hline 
& \textbf{Pass} & \textbf{Drop}\tabularnewline
\hline 
\hline 
\textbf{Pass} & 2,2,2 & 0,1,0\tabularnewline
\hline 
\textbf{Drop} & 1,0,0 & 1,0,0\tabularnewline
\hline 
\end{tabular}
\end{center}
\vspace*{.5em}
\textbf{Drop}
\begin{center}
\begin{tabular}{|c|c|c|}
\hline 
& \textbf{Pass} & \textbf{Drop}\tabularnewline
\hline 
\hline 
\textbf{Pass} & 0,0,1 & 0,1,0\tabularnewline
\hline 
\textbf{Drop} & 1,0,0 & 1,0,0\tabularnewline
\hline 
\end{tabular}
\par\end{center}%
\end{minipage}
\par\end{centering}
\caption{The two games have the same normal form, and (\textbf{Pass}, \textbf{Pass}, \textbf{Pass}) is the unique backward-induction profile for the extensive form on the left.  (\textbf{Drop}, \textbf{Drop}, \textbf{Pass}) is patiently stable for the game on the left, but not for the game on the right.}
\label{fig:fig4}
\end{figure}

As an example, compare the two games in Figure \ref{fig:fig4}. In the  game on the left, the unique backwards induction outcome is (\textbf{Pass}, \textbf{Pass}, \textbf{Pass}), but we know from \citet{Fudenberg2006} that the outcome (\textbf{Drop}, \textbf{Drop}, \textbf{Pass}) is also patiently stable: in the steady state, the P2s play so rarely that they choose not to experiment with  \textbf{Pass} and so never learn that the P3s  \textbf{Pass}. But this outcome is ruled out when agents play the normal form. 

\begin{claim} \label{claim:pass}
Suppose agents play the normal form on the right of Figure  \ref{fig:fig4} and observe opponents' strategies at the end of each match. Then   the only patiently stable outcome is (\textbf{Pass}, \textbf{Pass}, \textbf{Pass}). 
\end{claim}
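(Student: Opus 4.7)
The plan is to use the fact that, under normal-form play with strategy observation, every P2 sees P3's choice after each match regardless of her own action, which breaks the \citet{Fudenberg2006} mechanism that sustains $(\textbf{Drop}, \textbf{Drop}, \textbf{Pass})$ in the extensive form on the left of Figure \ref{fig:fig4}. I would run the backward-induction cascade from P3 up to P1 at the level of the learning dynamics, showing first that aggregate P3 play must be \textbf{Pass}, then that aggregate P2 play must converge to \textbf{Pass}, and finally that aggregate P1 play must converge to \textbf{Pass}.

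Starting from P3: \textbf{Pass} weakly dominates \textbf{Drop}, and strictly so against $(\textbf{Pass}, \textbf{Pass})$. Under any non-doctrinaire prior, every P3 agent's posterior at every history assigns positive density to $(\textbf{Pass}, \textbf{Pass})$ while the two actions tie at all other opponent profiles, so \textbf{Pass} is a strict best response regardless of which observations the agent has seen. Hence aggregate P3 play equals \textbf{Pass} in every steady state, for every $\delta, \gamma \in [0, 1)$.

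Fix such a steady state and consider P2. A direct payoff computation shows that, with independent posteriors across opponents, \textbf{Pass} strictly beats \textbf{Drop} for P2 whenever her posterior mean of $\Pr(\text{P3} = \textbf{Pass})$ exceeds $1/2$ (the comparison reduces to $2\,\mathbb{E}[\Pr(\text{P1}=\textbf{Pass})]\mathbb{E}[\Pr(\text{P3}=\textbf{Pass})] > \mathbb{E}[\Pr(\text{P1}=\textbf{Pass})]$). Because P2 observes P3's strategy every period irrespective of her own choice, any P2 who has accumulated at least $N$ observations has that posterior mean arbitrarily close to $1$ (for $N$ large) by Bayesian consistency. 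As $\gamma \to 1$ the share of P2 agents with at least $N$ observations tends to $1$ for any fixed $N$, so aggregate P2 play converges to \textbf{Pass}. Iterating the argument, a sufficiently experienced P1 who observes near-\textbf{Pass} aggregate play from both P2 and P3 has posterior mean on $\Pr(\text{P2} = \textbf{Pass}, \text{P3} = \textbf{Pass})$ close to $1$, so \textbf{Pass} (payoff near $2$) strictly beats \textbf{Drop} (payoff $1$), and aggregate P1 play also converges to \textbf{Pass}.

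Existence of the patiently stable profile is immediate since $(\textbf{Pass}, \textbf{Pass}, \textbf{Pass})$ is a strict Nash equilibrium of the stage game (each deviation loses exactly $1$ unit of payoff), and strict Nash profiles are patiently stable under standard arguments in this class of learning models. The main obstacle is a careful uniform-in-$\gamma$ control of the share of inexperienced P2 and P1 agents who, based on their prior alone, might still prefer \textbf{Drop}; one needs to show that this share vanishes fast enough relative to $\gamma \to 1$ that the respective best-response thresholds ($1/2$ for P2's posterior mean on P3, and $1/2$ for P1's posterior mean on the joint P2-P3 event) are strictly crossed by the aggregate play in the limit, which is a delicate coupling between the geometric age distribution and the rate of Bayesian concentration.
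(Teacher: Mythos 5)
Your proposal is correct and is essentially the paper's argument: the paper derives this claim from Proposition \ref{prop:simple_game_BI} (backward induction in simple games under observed strategies), whose proof via Proposition \ref{prop:normal_form_elimination} is exactly your cascade --- P3's \textbf{Pass} survives as the unique best reply under full-support beliefs because \textbf{Drop} is weakly dominated, and then posterior concentration (the Diaconis--Freedman-type Lemma \ref{Adaptation of Diaconis-Freedman}) combined with the geometric age distribution propagates \textbf{Pass} to P2 and then P1. The ``delicate coupling'' you flag at the end is handled routinely in the paper: under maximal observability an agent's number of observations equals its age, so for fixed $M$ the share of agents with at least $M$ observations is $\gamma^{M}\to 1$, which is all that is needed.
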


% \begin{claim} \label{claim:pass}
% Suppose agents play the normal form on the right of Figure  \ref{fig:fig4} and observe opponents' strategies at the end of each match, and that each agent
% starts with independent prior beliefs about different opponent populations'
% play and these prior beliefs satisfy Condition $\mathcal{P}$ from
% \citet*{fudenberg2017bayesian}. Then   the only patiently stable outcome is (\textbf{Pass}, \textbf{Pass}, \textbf{Pass}). 
% \end{claim}

This claims follows from Proposition \ref{prop:simple_game_BI} that we discuss later in Section \ref{sec:BI_simple_games}. In the game on the right of Figure  \ref{fig:fig4}, P3s always \textbf{Pass} because they have full-support beliefs about what others do. Unlike for the game on the left, now P2s do not need to experiment to learn this. Once P2s learn that P3s play \textbf{Pass}, they themselves play \textbf{Pass}. This means that, when agents are long-lived, the vast majority of P2s in the population play \textbf{Pass}, so P1s learn to play \textbf{Pass} over \textbf{Drop} as well. 

As these examples suggest, the problem is that the normal form does not distinguish between extensive forms that differ in what is observable during the learning process\footnote{\citet{fudenberg1993self} point out the implications of this for self-confirming equilibrium, and \citet{sorin1995note} discusses its implication for the equilibria of repeated extensive form games.}. We should only expect the set of learning outcomes to be invariant to transformations that  are both \emph{decision invariant}, i.e.,  lead to the same best responses as a function of opponent strategies, and \emph{information invariant} in the sense of providing the same feedback to the agents in their learning problems.  In the  example above, this can be done by augmenting  the normal form with the  terminal node partition shown in Figure \ref{fig:cent_dash}. 

\begin{figure}[H]
    \centering
    \includegraphics[scale=1]{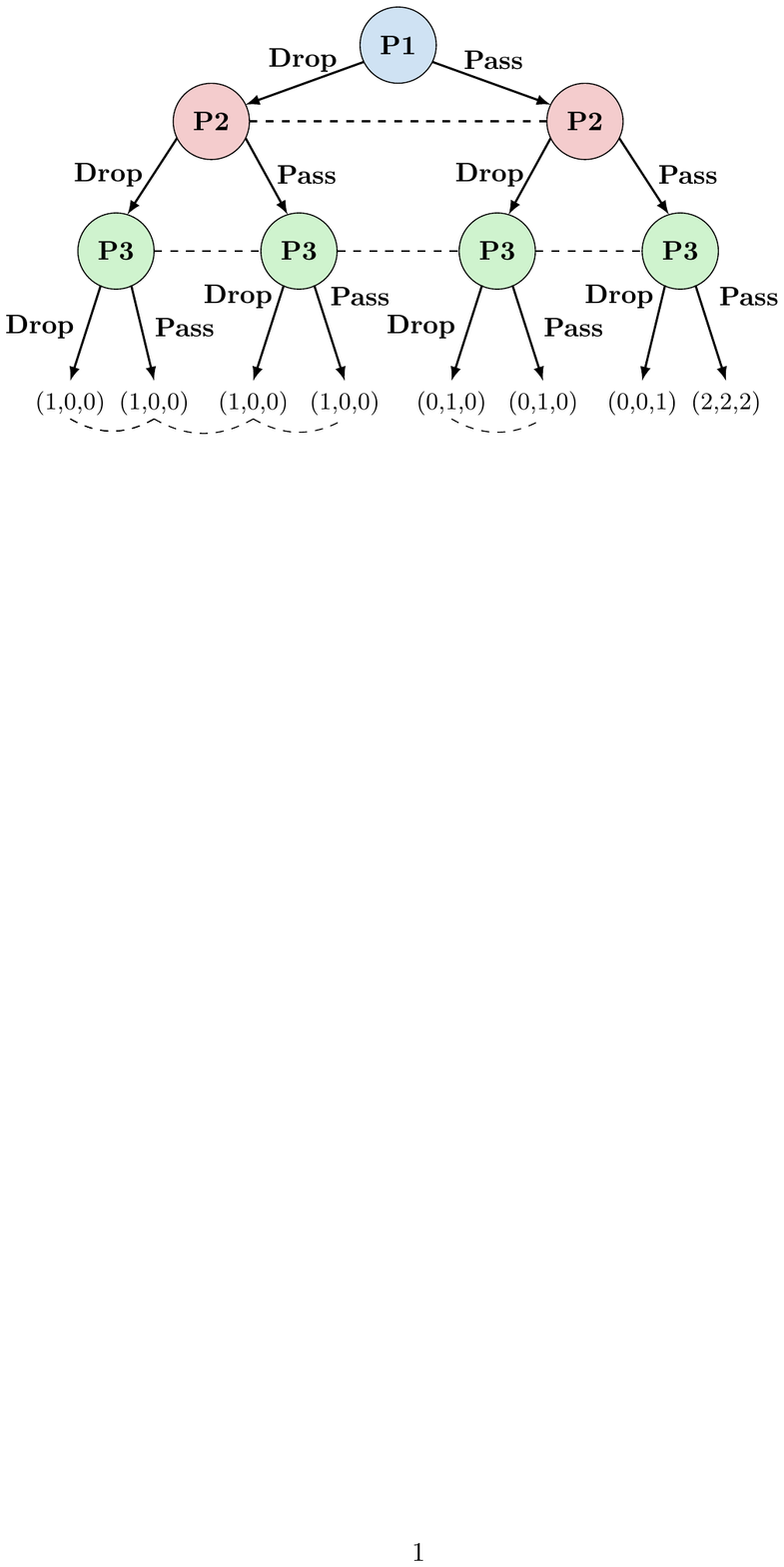}
    \caption{The game on the right of Figure \ref{fig:fig4} equipped with terminal node partitions. This game provides the same feedback to players as the game on the left of Figure \ref{fig:fig4}.}
    \label{fig:cent_dash}
\end{figure}

The partition, which is common to all three players, says that if P1 plays \textbf{Drop}, players do not observe the choices of P2 and P3, and that if P1 plays \textbf{Pass} and P2 plays \textbf{Drop} then they do not observe the choice of P3. Under this partition, (\textbf{Drop}, \textbf{Drop}, \textbf{Pass}) again becomes patiently stable.   Section \ref{sec:observability} discusses how the terminal node partitions influence which profiles are patiently stable.

\section{The Learning Model}\label{sec:full_model}

%  The stage game is an extensive-form game with players $i\in\{1,...,I\}.$ There is a collection of information sets $\mathcal{H}_{i}$
%  for each player $i\in\{1, ..., I\}.$ At each $h\in\mathcal{H}_{i}$, player
%  $i$ has a set of available actions $A_{h}$. A pure strategy $s_{i}$
%  of $i$ assigns an action $s_{i}(h)\in A_{h}$ to every information
%  set $h$ of $i.$ Denote the set of pure strategies of $i$ as $\mathbb{S}_i$, 
%  the set of behavior strategies of $i$ as $\Pi_i$,
%  and the set of terminal nodes of game tree as $Z$. Player $i$ has a utility function defined
% on terminal nodes, $u_{i}:Z\to\mathbb{R}$.

There is a unit mass population of agents who play each role $1\le i\le I$
in the game. In every period, each agent is anonymously matched with  opponents from the other populations uniformly at random to play the stage game. At the end of each play of the game, each agent observes the element of their \emph{terminal node partition} $\mathcal{P}_i$ that contains the realized terminal node of the game, where we require that $u_i(z)=u_i(z')$ if $z$ and $z'$ are  in the same cell of $i$'s terminal node partition. The agent uses this information to update their beliefs about the distribution of play in  opponent populations.

As in \citet{Fudenberg2018} and \citet{clark2021justified}, we assume that the agents have geometrically distributed lifetimes: At the end of every period, each  agent exits the system with probability $0 < 1 - \gamma \leq 1$, and a mass of newcomers is added to each population to replace the departing agents.\footnote{Previous work by  \citet{Fudenberg1993, Fudenberg2006}  assumed agents have fixed finite lifetimes. All of our results extend to this alternate lifetime specification.} Agents maximize expected discounted utility,  discounting future payoffs with a psychological discount factor  $0\le \delta < 1$.

Denote the set of pure strategies of $i$ in the game as $\mathbb{S}_i$ and the set of behavior strategies of $i$ as $\Pi_i$. Agents believe that the aggregate distribution of play in the opponent population is constant, but they do not know what that distribution is. %Let $\Pi_{1} = \Delta(A_{1})$ and 1$. 
Each agent in population $i$ starts with a prior belief $g_{i}\in\Delta(\times_{h\in\mathcal{H}_{-i}}\Delta(A_{h}))$ about the aggregate behavior strategy profile that describes play in opponent  populations $j \neq i$ at different information sets. We assume that, for each $i$, the prior $g_{i}$ is \emph{non-doctrinaire}, meaning that it has a density which is strictly positive on the interior of $\times_{h\in\mathcal{H}_{-i}}\Delta(A_{h})$.\footnote{The strict positivity assumption lets us appeal to the classic \citet{Diaconis1990} result on the rate of convergence of Bayesian posteriors to the empirical distribution. Note that if agents  believe that they know their opponents' payoff functions, strict positivity requires that they  assign positive probability to opponent strategies they believe are dominated. We discuss this issue more in the conclusion.}

As agents play the game and accumulate histories of past play and
observations, they update their beliefs using Bayes' rule (which is
always applicable because the priors assign positive probability to
any finite sequence of observations) and modify their behavior. Let $Y_{i,t}=(\mathbb{S}_{i}\times\mathcal{P}_{i})^{t}$
be the set of possible histories that can be observed by
an  $i$ agent of age $t$. (By convention, $\Omega^{0}=\emptyset$ for any set $\Omega$.) %Similarly, 
Let $Y_{i}=\cup_{t\in\mathbb{N}}Y_{i,t}$ be the
collection of all possible histories of agents from population $i$. We assume that all agents in each population $i$ use the same  optimal dynamic policy  $\mathbf{s}_{i}^{\delta,\gamma}:Y_{i}\to\mathbb{S}_{i}$
that depends on both their discount factor, $\delta$, and their lifetime
parameter $\gamma$.\footnote{ This does not mean that they all play in the same way, as  agents with the same policy may meet different opponents, and so have different histories and play different strategies.}

In every period $t$, the \emph{state} of the system, denoted $\mu_{t}=(\mu_{1, t}, ..., \mu_{I, t}) \in\times_{i}\Delta(Y_{i})$,
gives the shares of agents in the different player roles with the
various possible histories. Given $\mu_{t}$, the player $i$ policy
$\mathbf{s}_{i}^{\delta,\gamma}$ induces a player $i$ behavior
strategy $\mathbf{\sigma}_{i}^{\delta,\gamma}(\mu_{i,t})\in\Pi_{i}$
that we call the \emph{aggregate strategy} of population $i$. We call $\mathbf{\sigma}^{\delta,\gamma}(\mu_{t})=(\mathbf{\sigma}_{i}^{\delta,\gamma}(\mu_{i,t}))_{i}\in\times_{i}\Pi_{i}$
the \emph{aggregate strategy profile}.\footnote{Formally, $\mathbf{\sigma}^{\delta,\gamma}(\mu_{t})[s_{i}]=\sum_{y_{i} \in Y_i \text{ s.t. }\mathbf{s}_{i}^{\delta,\gamma}(y_{i})=s_{i}}\mu_{i,t}[y_{i}]$.}

A policy profile generates an \emph{update rule} $\mathbf{f}^{\delta,\gamma}:\times_{i}\Delta(Y_{i})\rightarrow\times_{i}\Delta(Y_{i})$,
taking the state in period $t$ to the state in period $t+1$, and
the mappings $\mathscr{R}_{i}^{\delta,\gamma}:\Pi_{-i}\rightarrow\Pi_{i}$
that describes the limit of the aggregate $i$ strategy as $t\rightarrow\infty$
when the aggregate $-i$ strategy is fixed at $\pi_{-i}$. We refer to
the mapping $\mathscr{R}^{\delta,\gamma}(\pi)\equiv(\mathscr{R}_{1}^{\delta,\gamma_{1}}(\pi_{-1}),...,\mathscr{R}_{I}^{\delta,\gamma}(\pi_{-I}))$
as the \emph{aggregate response mapping}. Similar arguments to those
in \citet{clark2021justified} show that this mapping is continuous.

We study this system's steady states, those $\mu$ satisfying $\mathbf{f}^{\delta,\gamma }(\mu)=\mu$.
We call the corresponding aggregate strategy profiles the \emph{steady
state profiles}, and denote them by $\Pi^{*}(g,\delta,\gamma)\subseteq \times _ {1 \le i \le I} \Pi_{i}$.
Again, similar arguments to those in \citet{clark2021justified} show that
these are the fixed points of the aggregate response mapping. Continuity
of the aggregate response mapping, along with Brouwer's fixed point
theorem, then implies that steady state profiles always exist.

\begin{proposition} $\Pi^{*}(g,\delta,\gamma)$ consists
of the strategy profiles that are fixed points of the aggregate response
mapping, and it is non-empty for all $g$, $\delta$,
and $\gamma$. \label{Steady State Existence Result}
\end{proposition}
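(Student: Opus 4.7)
The plan is to reduce the existence of steady states to a Brouwer fixed-point problem on the compact convex product of behavior-strategy simplices $\times_i \Pi_i$. The starting observation is that when the aggregate opponent profile $\pi_{-i}$ is held fixed, the state $\mu_{i,t}$ in population $i$ evolves according to a linear Markov operator on $\Delta(Y_i)$: a fraction $1-\gamma$ of mass is replenished at the age-zero empty history, while surviving agents with history $y_i$ transition to history $(y_i, \mathbf{s}_i^{\delta,\gamma}(y_i), p_i)$, where the partition signal $p_i$ has distribution determined by $\pi_{-i}$ and the random matching. Since the policy $\mathbf{s}_i^{\delta,\gamma}$ is fixed once $(g,\delta,\gamma)$ is specified and geometric renewal keeps the age distribution tight, standard arguments (as in \citet{clark2021justified}) show this operator has a unique invariant distribution $\mu_i^*(\pi_{-i})$.

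With this in hand, I would first show the equivalence $\pi \in \Pi^*(g,\delta,\gamma)$ iff $\pi = \mathscr{R}^{\delta,\gamma}(\pi)$. If $\mu$ is a fixed point of $\mathbf{f}^{\delta,\gamma}$, then for each $i$ the coordinate $\mu_i$ is invariant under the Markov operator induced by $\pi_{-i} := \sigma_{-i}^{\delta,\gamma}(\mu_{-i})$, hence equals $\mu_i^*(\pi_{-i})$ by uniqueness; its induced aggregate is then $\mathscr{R}_i^{\delta,\gamma}(\pi_{-i})$ by the definition of $\mathscr{R}_i^{\delta,\gamma}$ as the limiting aggregate strategy, so $\pi_i = \mathscr{R}_i^{\delta,\gamma}(\pi_{-i})$. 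Conversely, given $\pi$ with $\pi = \mathscr{R}^{\delta,\gamma}(\pi)$, set $\mu_i := \mu_i^*(\pi_{-i})$ coordinate-by-coordinate; by construction $\mu$ is fixed under $\mathbf{f}^{\delta,\gamma}$ and has aggregate strategy profile $\pi$.

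For existence, I would verify the hypotheses of Brouwer's theorem for $\mathscr{R}^{\delta,\gamma}$ on the nonempty compact convex set $\times_i \Pi_i$. The main obstacle, and where the real work lies, is continuity of $\mathscr{R}^{\delta,\gamma}$, which I would establish in two steps. First, $\pi_{-i} \mapsto \mu_i^*(\pi_{-i})$ is continuous in total variation: the mass that $\mu_i^*(\pi_{-i})$ places on histories of length at most $T$ is a finite sum of products of transition probabilities that are themselves continuous (in fact polynomial) in $\pi_{-i}$, while geometric survival gives a uniform bound $\gamma^{T}$ on the tail, which vanishes as $T\to\infty$. Second, $\mathscr{R}_i^{\delta,\gamma}(\pi_{-i})$ is the pushforward of $\mu_i^*(\pi_{-i})$ under the fixed measurable policy $\mathbf{s}_i^{\delta,\gamma}$, and a pushforward under a fixed measurable map is continuous in the input measure in total variation; this is where the non-doctrinaire prior is used in \citet{clark2021justified} to ensure that any tie-breaking convention in defining $\mathbf{s}_i^{\delta,\gamma}$ is irrelevant on the ages that carry positive mass. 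The argument goes through here with only notational changes to accommodate the terminal node partitions $\mathcal{P}_i$. Brouwer's theorem then supplies a fixed point of $\mathscr{R}^{\delta,\gamma}$, and by the equivalence above this fixed point lies in $\Pi^*(g,\delta,\gamma)$.
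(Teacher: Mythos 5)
Your proposal is correct and follows essentially the same route as the paper, which does not give a standalone proof but argues exactly as you do: it establishes (by reference to \citet{clark2021justified}) that steady states correspond to fixed points of the aggregate response mapping $\mathscr{R}^{\delta,\gamma}$ and that this mapping is continuous, and then invokes Brouwer's fixed point theorem on $\times_i \Pi_i$. Your write-up simply fills in the details the paper delegates to that citation (uniqueness of the invariant distribution for fixed $\pi_{-i}$, the truncation-plus-geometric-tail argument for continuity), and does so correctly.
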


When the agents are short-lived they have little chance to learn, and simply play a best response to their priors. When agents are long-lived but impatient, they do learn the steady state path of play, but need not learn how opponents respond to deviations, so any self-confirming equilibrium in strategies that are not weakly dominated could arise.  We will focus on steady states where agents are both long-lived and patient. More specifically, we focus on steady state  profiles  in the limit where agents become  long lived  ($\gamma \rightarrow 1$) and patient ($\delta \rightarrow 1$). Moreover, following the literature, we assume continuation probability $\gamma$ converges to $1$ faster than $\delta$. We call these the \emph{patiently stable} strategy profiles. The order of limits corresponds to an environment where agents are long-lived relative to their effective discount factors. This implies that people spend most of their lives myopically responding to their current beliefs.% \footnote{Something about not knowing what happens when delta =1}

\begin{definition}
Strategy profile $\pi$ is \textbf{\emph{patiently stable}} if there are  sequences $\{\delta_{j}\}_{j \in \mathbb{N}}$, $\{\gamma_{j,k}\}_{j,k \in \mathbb{N}}$   and associated steady-state profiles $\{\pi_{j,k} \in \Pi^{*}(g,\delta_{j},\gamma_{j,k})\}_{j,k \in \mathbb{N}}$ such that $\lim_{j \rightarrow \infty} \delta_{j} = 1$, $\lim_{k \rightarrow \infty} \gamma_{j,k} =  1$ for each $j$ and $\lim_{j \rightarrow \infty} \lim_{k \rightarrow \infty} \pi_{j,k} = \pi$.
\label{Stability Definition}
\end{definition}

% As in  \citet{Fudenberg1993}, \citet{Fudenberg2018}, and \citet{clark2021justified}, patient stability requires a strategy profile be a Nash equilibrium, and for essentially the same reason: patient players will experiment enough to learn the of any profitable deviations from the equilibrium path. 

% \begin{proposition} If each agent $i$'s payoffs are measurable with respect to $i$'s terminal node partitions, then  every patiently stable stable strategy profile is a Nash equilibrium. %, and for essentially the same reason: patient players will experiment enough to learn the of any profitable deviations from the equilibrium path. Th
% \label{Patiently Stable Profiles Are Nash Equilibria Result}
% \end{proposition}

% The proof is given in Appendix \ref{Proof of Proposition Patiently Stable Profiles Are Nash Equilibria Result}. The main complication posed by allowing terminal node partitions is...

% In some games, patient stability has additional implications, but  in the class we consider here  stability  need not have much additional selection power. 

The literature has previously shown that patiently stable strategy profiles must be Nash equilibria when agents observe the realized terminal nodes in the games they play.\footnote{\citet{Fudenberg1993} established this in a learning model where players had fixed finite lifetimes rather than geometric lifespans. The adaptations of these arguments given in the supplementary information of \citet{Fudenberg2018} show that this extends to geometric lifespans in general games, although the main text of \citet{Fudenberg2018} only states this result  for signaling games.} 

% The question then arises of whether the result that patient stability implies Nash equilibrium extends to other feedback structures.

 Appendix \ref{sec:auxiliary_game_proof} shows that this is also true for the game and terminal node partition given in Figure \ref{fig:doubly_dom}, which is the only example in the paper that uses a  non-discrete terminal node partition to exhibit a patiently stable profile that is ruled out by classic refinements. We conjecture that patiently stable profiles must be Nash equilibria in any game provided each agent's payoff is measurable with respect to their terminal node partition, but we have not shown this. Instead, Appendix \ref{sec:example_auxiliary} gives a number of other examples from the literature where this conclusion does hold.

\section{Patient Stability,  Forward Induction, and Dominance}

\subsection{Dominated Actions in a Family of Two-Player Games}\label{dominated_action_general}

This section provides a sufficient condition for patient stability that generalizes the example from Section \ref{Payoff-Dominated Action}. We consider a family of two-player games where P1 first chooses an
action $a_{1}\in A_{1}$, which may end the game or give the play
to P2. For each P2 information set $h_{2}$, P2 chooses among the actions $\mathcal{A}_{2}(h_{2})$, and we let $\rho(h_{2})$ denote the P1 actions that lead to $h_{2}$. Write $u_{i}(a_{1},a_{2})$ for $i$'s utility at the terminal node reached by P1 playing $a_{1}$ and P2 playing $a_{2}$. We also  write $u_i(\pi_1, \pi_2)$ for $i$'s expected utility when players use behavior strategies $\pi_1$ and $\pi_2$.

We will show that equilibria of the following form are patiently stable under some non-doctrinaire prior that we construct.

\begin{compactenum}

\item P1 plays a single action $a_{1}^{*} \in A_{1}$ that uniquely maximizes their payoff given P2's strategy. (Formally, $\pi_{1}^{*}(a_{1}^{*}) = 1$ for the $a_{1}^{*}$ that satisfies $u_{1}(a_{1}^{*},\pi_{2}^{*}) > u_{1}(a_{1},\pi_{2}^{*})$ for all $a_{1} \neq a_{1}^{*}$.)

\item For each P2 information set $h_{2}$, P2 plays some response $a_{2}^{*}(h_{2})$ that is optimal given some $a_{1}^{*}(h_{2}) \in \rho(h_{2})$. Moreover, out of $\rho(h_{2})$, $a_{1}^{*}(h_{2})$ is optimal for P1 given that P2 plays $a_{2}^{*}(h_{2})$. 

\item If $a^*_1$ leads to P2 information set $h_{2}^{*}$, then $a_{2}^{*}(h_{2}^{*})$ uniquely maximizes P2's payoff against $a_{1}^{*}$.
% (Formally, $u_{2}(a_{1}^{*},a_{2}^{*}(h_{2}^{*})) > u_{2}(a_{1}^{*},a_{2})$ for all $a_{2} \in \mathcal{A}_{2}(h_{2}) \setminus \{a_{2}^{*}(h_{2}^{*})\}$.) 

\end{compactenum}
The $(\textbf{Out},\textbf{In2})$ equilibrium from Section \ref{Payoff-Dominated Action} is of this form: $\textbf{Out}$ serves the role of $a_{1}^{*}$, and for P2's only information set, P2's prescribed response of $\textbf{R}$ is the unique best response to $\textbf{In1}$. In turn, $\textbf{In1}$ is the best action out of $\{\textbf{In1},\textbf{In2}\}$ for P1 when P2 chooses $\textbf{R}$. 

In the equilibria we construct,  P2 may  best reply to dominated P1 actions $a_1^*(h_2)$ at some information sets $h_2$.  Nevertheless, we show in Proposition \ref{Dominated Action Stability General Result} below that every such equilibrium is patiently stable, which implies Claim \ref{Out is Stable with Payoff-Dominated Actions Result}. 

% The key is to choose priors that are ``supportive'' of the equilibrium.

% \begin{definition}
% Priors $g_{1}$ and $g_{2}$ are \textbf{\emph{supportive}} priors for $\pi^{*}$ if, for every off-path P2 information set $h_{2}$, (1) $\mathbb{E}_{g_{1}}[u_{1}(a_{1}^{*}(h_{2}),a_{2}(h_{2}))|y_{1}] \geq \mathbb{E}_{g_{1}}[u_{1}(a_{1},a_{2}(h_{2}))|y_{1}]$ for all $a_{1} \in \rho(h_{2})$ and P1 histories $y_{1}$ that have never recorded a P2 agent play some action other than $a_{2}^{*}(h_{2})$ at $h_{2}$,
% and (2) $\mathbb{E}_{g_{2}}[u_{2}(a_{1},a_{2}^{*}(h_{2}))|y_{2},h_{2}] \geq \mathbb{E}_{g_{2}}[u_{2}(a_{1},a_{2}))|y_{2},h_{2}]$ for all $a_{2} \in \mathcal{A}_{2}(h_{2})$ and histories $y_{2}$ that have never recorded a P1 agent play any $a_{1}\in\rho(h_{2})\backslash\{a_{1}^{*}(h_{2})\}$. 
% \label{Supportive Prior Definition 1}
% \end{definition}
% \noindent
% A supportive P1 prior is such that, for every off-path $\alpha_{1}$, a P1 agent prefers to experiment with $a_{1}^{*}(\alpha_{1})$ over any other action in $\alpha_{1}$ unless they have previously experienced a P2 response to $\alpha_{1}$ for which $a_{1}^{*}(\alpha_{1})$ is not conditionally optimal. Similarly, a supportive P2 prior leads P2 agents to want to respond to $\alpha_{1}$ with $a_{2}^{*}(\alpha_{1})$ unless they have previously witnessed a P1 agent play some action in $\alpha_{1}$ other than $a_{1}^{*}(\alpha_{1})$. These properties facilitate the proof of the following proposition, which is given in Appendix \ref{Proof of Proposition Dominated Action Stability General Result}.

\begin{proposition}
Suppose that $\pi^{*}$ is an equilibrium of the form given above. Then $\pi^{*}$ is patiently stable for any pair of non-doctrinaire P1 and P2 priors that are supportive of $\pi^{*}$.
\label{Dominated Action Stability General Result}
\end{proposition}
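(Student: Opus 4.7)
The plan is to construct, for each sufficiently large pair $(\delta,\gamma)$, a steady state close to $\pi^{*}$ and show the limit is $\pi^{*}$. I would proceed in three main stages.

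Stage 1 (Content of supportive priors). First, unpack what supportiveness of $g_{1},g_{2}$ must deliver: (a) the myopic best response for P2 at each $h_{2}$ under $g_{2}$ is $a_{2}^{*}(h_{2})$; (b) the myopic best response for P1 under $g_{1}$ is $a_{1}^{*}$, with the further property that if a P1 agent (after updating on any history consistent with aggregate P2 play near $\pi_{2}^{*}$) prefers to experiment about information set $h_{2}$, then the cheapest such experiment is $a_{1}^{*}(h_{2})\in\rho(h_{2})$ rather than any other element of $\rho(h_{2})$. In the example this is the requirement that P1's prior puts enough mass on P2 playing \textbf{R} that experimenting at $h_{2}$ via \textbf{In1} is strictly cheaper than via \textbf{In2}.

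Stage 2 (Self-map on a neighborhood of $\pi^{*}$). Fix $\epsilon>0$ and let $B_{\epsilon}\subset\times_{i}\Pi_{i}$ be the compact, convex set of behavior profiles within $\epsilon$ of $\pi^{*}$. I would show that for any fixed $\delta<1$ and all $\gamma$ close enough to $1$, the aggregate response mapping $\mathscr{R}^{\delta,\gamma}$ sends $B_{\epsilon}$ into itself. On the P1 side, take $\pi_{2}$ within $\epsilon$ of $\pi_{2}^{*}$: by \citet{Diaconis1990}, long-lived P1 posteriors concentrate near the truth, so condition (1) makes those agents myopically play $a_{1}^{*}$; the vanishing mass of young agents contributes only to $a_{1}^{*}$ or, by Stage 1(b), to some $a_{1}^{*}(h_{2})$, so the aggregate lies within $\epsilon$ of $\pi_{1}^{*}$ once $\gamma$ is large. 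On the P2 side, take $\pi_{1}$ within $\epsilon$ of $\pi_{1}^{*}$ and split P2's information sets into the on-path $h_{2}^{*}$ (if any) and the off-path ones. At $h_{2}^{*}$ most P2 agents learn that P1 plays $a_{1}^{*}$, so by the uniqueness in condition (3) they play $a_{2}^{*}(h_{2}^{*})$; at each off-path $h_{2}$ the reach probability is at most the P1 experimentation rate, which vanishes as $\gamma\to 1$, so almost all P2 agents at $h_{2}$ follow the supportive prior and play $a_{2}^{*}(h_{2})$, while the few with substantial data have observed P1 playing mainly $a_{1}^{*}(h_{2})$ and so by condition (2) also play $a_{2}^{*}(h_{2})$.

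Stage 3 (Brouwer and the double limit). Continuity of $\mathscr{R}^{\delta,\gamma}$ (by arguments analogous to those cited for Proposition \ref{Steady State Existence Result}) together with Stage 2 and Brouwer's theorem yields a steady state $\pi^{\delta,\gamma}\in B_{\epsilon}$. A diagonal extraction with $\gamma\to 1$ first, $\delta\to 1$ second, and $\epsilon\to 0$ slowly produces a sequence of steady states converging to $\pi^{*}$, establishing patient stability.

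Main obstacle. The delicate part is Stage 2 on the P2 side at off-path information sets: we need simultaneously that (i) the mass of P2 agents with significant data at such $h_{2}$ is negligible (so the prior carries the behavior) and (ii) the few agents who do have data observe the ``right'' conditional distribution over $\rho(h_{2})$, namely one concentrated on $a_{1}^{*}(h_{2})$. Fact (i) follows from the vanishing experimentation rate as $\gamma\to 1$, but fact (ii) depends on the Stage 1(b) claim that patient experimentation by P1 selects $a_{1}^{*}(h_{2})$, which must be shown robust to perturbations in P1's posterior and uniform over ages within the experimenting cohort---this robustness is where the supportive-prior construction needs the most care.
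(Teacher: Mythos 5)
Your overall architecture (supportive priors give an invariant region, Brouwer gives steady states in it, a double limit gives patient stability) matches the paper's, but the key step --- your Stage 2 --- is carried out on the wrong set, and in the form you state it the step fails. You try to show that $\mathscr{R}^{\delta,\gamma}$ maps an $\epsilon$-ball $B_{\epsilon}$ around $\pi^{*}$ into itself. The problem is at the off-path P2 information sets: if $\pi_{1}\in B_{\epsilon}$ merely puts probability at most $\epsilon$ (rather than exactly $0$) on the actions in $\rho(h_{2})\setminus\{a_{1}^{*}(h_{2})\}$, then over a lifetime of expected length $1/(1-\gamma)$ the fraction of P2 agents who have observed at least one such deviating action tends to $1$ as $\gamma\to1$ for any fixed $\epsilon>0$. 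The supportive-prior condition for P2 only pins down behavior after histories that have \emph{never} recorded such an action, and $a_{2}^{*}(h_{2})$ is only required to be optimal against $a_{1}^{*}(h_{2})$ exactly, not against nearby conditional mixtures over $\rho(h_{2})$; so nothing prevents these contaminated agents from switching away from $a_{2}^{*}(h_{2})$, and the conditional play at $h_{2}$ can exit $B_{\epsilon}$. You flag exactly this as your ``main obstacle'' but do not resolve it, and it cannot be patched by taking $\gamma$ large, because the contamination worsens as $\gamma\to1$.

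The paper's resolution is to replace the $\epsilon$-ball by the exact product set $\widetilde{\Pi}_{1}\times\widetilde{\Pi}_{2}$, where P1 puts probability exactly $0$ on every action in $\rho(h_{2})\setminus\{a_{1}^{*}(h_{2})\}$ for each off-path $h_{2}$ and P2 plays $a_{2}^{*}(h_{2})$ with probability exactly $1$ there. On this set no agent ever observes a disqualifying history, so the (history-indexed) supportive-prior conditions apply after every realizable history and $\mathscr{R}^{\delta,\gamma}$ maps the set into itself exactly, for all $\delta,\gamma\in[0,1)$ --- no limiting argument is needed for the off-path components. The on-path components are then handled not by an approximate self-map but by composing $\mathscr{R}^{\delta,\gamma}$ with continuous retractions $\xi$ and $\phi$ that force $\pi_{1}(a_{1}^{*})\ge1-\eta$ and $\pi_{2}(a_{2}^{*}(h_{2}^{*})|h_{2}^{*})\ge1-\eta$, applying Brouwer to the retracted map, and then showing that along the double limit the actual aggregate responses converge to $\pi_{1}^{*}$ and $\pi_{2}^{*}$, so the retractions become inactive and the fixed points are genuine steady states converging to $\pi^{*}$. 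To complete your argument you would essentially need to adopt both of these devices, together with the history-based (rather than prior-based) formulation of supportiveness that makes the exact invariance go through.
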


The key is to choose priors that are ``supportive'' of the equilibrium. A supportive P1 prior is such that, for every off-path P2 information set $h_{2}$, a P1 agent prefers to experiment with $a_{1}^{*}(h_{2})$ over any other action in $\rho(h_{2})$ unless they have previously experienced a P2 response at $h_{2}$ for which $a_{1}^{*}(h_{2})$ is not conditionally optimal. Similarly, a supportive P2 prior leads P2 agents to want to play $a_{2}^{*}(h_{2})$ at an information set $h_{2}$ unless they have previously witnessed a P1 agent play some action in $\rho(h_{2})$ other than $a_{1}^{*}(h_{2})$. These properties are formalized in Appendix \ref{Proof of Proposition Dominated Action Stability General Result}, which also contains the proof of Proposition \ref{Dominated Action Stability General Result}.

\subsection{Stability and Doubly Dominated Actions: An Example}\label{sec:double_dominated_analysis}

The example from Section \ref{sec:no_data} does not fit with the sufficient conditions for stability we gave in Section \ref{dominated_action_general}: it involves P3 best replying to the action  $\textbf{In2}$ by P2, a doubly dominated action that is not optimal among the P2 actions $\{ $\textbf{In1}$,  $\textbf{In2}$ \}$ that reach the same P3 information set. We use a different argument to show that the $(\textbf{Out}, \textbf{Out}, \textbf{R})$ outcome is patiently stable.

\begin{proposition}
For the game in Figure \ref{fig:doubly_dom}, (\textbf{Out}, \textbf{Out}, \textbf{R}) is a patiently stable profile for any non-doctrinaire P1 prior $g_{1}$, non-doctrinaire P2 prior $g_{2}$ under which the expected probability of $\textbf{L}$ is strictly less than $1/2$, and non-doctrinaire P3 prior $g_{3}$ that leads a P3 agent to only play $\textbf{L}$ when they have previously observed a P2 agent play $\textbf{In1}$.
\label{Relatively Experienced P1 Stability Result}
\end{proposition}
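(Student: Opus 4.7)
The plan is to construct, for each $\delta < 1$ close to $1$, a family of steady states indexed by $\gamma$ that converges to $(\textbf{Out}, \textbf{Out}, \textbf{R})$ as $\gamma \to 1$, and then check that these limits remain close to this profile as $\delta \to 1$. The existence of steady states is given by Proposition~\ref{Steady State Existence Result}; the work is in showing we can pin down one whose aggregate strategy is near $(\textbf{Out}, \textbf{Out}, \textbf{R})$. The argument rests on the observation that P3 agents, whose priors by hypothesis tip them to $\textbf{R}$ absent direct observation of $\textbf{In1}$, have too few opportunities to observe P2's action in any candidate steady state to overturn that default.

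First I would argue that the profile is self-reinforcing at the aggregate level. If P3 aggregate play is close to $\textbf{R}$, then the prior condition $\mathbb{E}_{g_2}[\text{prob}(\textbf{L})] < 1/2$ makes uninformed P2 agents strictly prefer $\textbf{Out}$ to $\textbf{In1}$, and since $\textbf{In2}$ is doubly dominated it is never chosen by a rational Bayesian P2 at any history; hence P2 agents enter only as experimentation in their youngest cohorts. If P2 aggregate play is in turn close to $\textbf{Out}$, then P1 agents quickly learn this (every match reveals whether P2 enters, regardless of P1's own action) and play $\textbf{Out}$ except as brief early experimentation. The remaining piece is to sustain aggregate P3 play close to $\textbf{R}$.

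The crux is an information-scarcity estimate. A P3 agent learns P2's action only when both P1 plays $\textbf{In}$ and P2 enters. For fixed $\delta$, the optimal Bayesian policy for P1 and for P2 involves only a bounded expected number of experimental deviations from the dominant action over a lifetime, with the bound depending on $\delta$ but not $\gamma$. In any steady state of the postulated form, the aggregate P1 entry share $\epsilon_1$ and the P2 entry share $\epsilon_2$ are therefore each $O(1-\gamma)$. A P3 agent then sees P2's choice in any given period with probability $\epsilon_1 \epsilon_2 = O((1-\gamma)^2)$, so the expected number of informative observations over their lifetime of order $1/(1-\gamma)$ is $O(1-\gamma) \to 0$. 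By the assumed prior, any P3 agent who has never seen $\textbf{In1}$ plays $\textbf{R}$, so aggregate P3 play converges to $\textbf{R}$ as $\gamma \to 1$.

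I would formalize this by defining, for each $(\delta, \gamma)$, the convex compact set of aggregate profiles satisfying approximate versions of these three conditions, verifying that the aggregate response mapping $\mathscr{R}^{\delta,\gamma}$ sends it into itself, and invoking Brouwer to obtain a fixed point $\pi_{\delta,\gamma}$; then $\pi_{\delta,\gamma} \to (\textbf{Out}, \textbf{Out}, \textbf{R})$ as $\gamma \to 1$ for each $\delta$, and taking $\delta \to 1$ along any sequence verifies Definition~\ref{Stability Definition}. The main obstacle is the bookkeeping needed to make the $O(1-\gamma)$ bounds on entry rates uniform: I must show that the expected lifetime experimentation of a P1 or P2 agent is bounded by a function of $\delta$ alone, using that once posterior uncertainty about $\textbf{L}$ versus $\textbf{R}$ is small the marginal value of another experiment falls below the one-period opportunity cost. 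A secondary subtlety is that the slight mass of P3 agents who have observed $\textbf{In1}$ play $\textbf{L}$, and the resulting small perturbation in P2's optimal experimentation must be controlled so as not to feed back and destabilize the fixed point.
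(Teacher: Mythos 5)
Your overall architecture---an invariant set of aggregate profiles mapped into itself by $\mathscr{R}^{\delta,\gamma}$, Brouwer, and an information-scarcity bound showing most P3 agents never observe \textbf{In1}---matches the paper's proof in Appendix \ref{Proof of Proposition Relatively Experienced P1 Stability Result}. But the step where the real content of the example lives, the bound on P2's entry, rests on the wrong mechanism. You propose to bound P2's experimentation by arguing that ``once posterior uncertainty about \textbf{L} versus \textbf{R} is small the marginal value of another experiment falls below the one-period opportunity cost.'' P2 can only reduce that uncertainty by playing \textbf{In1} on the rare occasions when P1 enters, which in your own candidate steady state happens with probability $O(1-\gamma)$ per period; resolving the uncertainty would therefore occupy a constant fraction of P2's lifetime, during which the aggregate P2 strategy would put non-vanishing weight on \textbf{In1}. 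Relatedly, your inference from ``bounded expected number of experimental deviations'' to $\epsilon_2 = O(1-\gamma)$ is a non sequitur for P2: the aggregate strategy counts agents whose policy \emph{prescribes} \textbf{In1} at the (rarely reached) information set, and an agent can be in experimentation mode for order $1/(1-\gamma)$ periods while actually playing \textbf{In1} only a handful of times. The paper's Lemma \ref{P2 Out Probability Bound Result} uses a different and essential mechanism: P2 observes P1's action every period, so after $N(\delta)$ observations of \textbf{Out} they conclude P1's entry rate is below $(1-\delta)\epsilon/\delta$, which makes the information about P3 not worth its myopic cost because it will almost never be \emph{used}---this is the \citet{Fudenberg2006} off-path experimentation point, and it only delivers $\pi_2(\textbf{In1}) = o(1)$, not $O(1-\gamma)$. (Fortunately $o(1)$ suffices: combined with $\pi_1(\textbf{In})/(1-\gamma) \le \kappa$, the expected number of P3 observations of P2's play still vanishes.) A careful bookkeeping over P2 agents who did see P1 enter a few times (the paper's $\{N_j\}$ induction) is also needed and is not addressed by your sketch.

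Two secondary problems. First, your causal chain runs P3 $\to$ P2 $\to$ P1, whereas the correct chain is P3 $\to$ P1 $\to$ P2: P2's choice of \textbf{Out} is driven by P1's low entry rate (plus the prior condition on $g_2$), not by P3's aggregate play, which P2 essentially never observes. Second, the parenthetical ``every match reveals whether P2 enters, regardless of P1's own action'' is false for this game---the game ends when P1 plays \textbf{Out}, and P1 observes P3's (and P2's) play only by entering. Consequently P1's convergence to \textbf{Out} cannot come from passively learning P2's play; in the paper's Lemma \ref{P1 In Probability Bound Result} it comes from P1 experimenting with \textbf{In}, observing P3 play \textbf{R}, and becoming pessimistic, with Hoeffding's inequality controlling the agents with atypical samples.
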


This proposition specifies the prior beliefs that make patient stability hold in Claim \ref{doubly_dominated_Result}. The proof of this result in Appendix \ref{Proof of Proposition Relatively Experienced P1 Stability Result} first notes that P1 observes P3's play if and only if they experiment with \textbf{In}. This lets us  bound the number of periods that P1s will typically experiment with  \textbf{In} before becoming pessimistic and switching to  \textbf{Out} forever in a steady state where P3s play \textbf{R} with high enough probability, so most  P2 agents will learn that their information set is rarely reached.  Thus they will   choose  $\textbf{Out}$ instead of experimenting with $\textbf{In1}$,  since they do not value information they will rarely get to use. This lets us construct a steady state where most P3 agents have never observed any instance of matched P2 agents choosing any action other than $\textbf{Out}$, and therefore choose \textbf{R} based on their prior belief.

\section{Observability and Patiently Stable Profiles}\label{sec:observability}

In this section, we study the effect of what agents observe at the end of each play of the game  on the patiently stable profiles.  Sections \ref{sec:BI_simple_games} and \ref{sec:iterative} show that in normal forms arising from simple games, where agents always observe matched opponents' extensive-form strategies, patiently stable profiles must select the same outcome as the backward induction outcome of the original game. Section \ref{sec:native_partition} says that if the normal form of an extensive form is equipped with the right terminal node partitions, it leads to the same patiently stable profiles as the extensive form. Section \ref{sec:coarser} provides an example where patiently stable profiles satisfy the iterated deletion of weakly dominated strategies with coarser observations but not finer ones.

\subsection{Backward Induction in Simple Games when Agents Observe Strategies}\label{sec:BI_simple_games}
A \emph{simple game} is an extensive-form game of perfect information where no one moves more than once along any path and no player is indifferent between any two terminal nodes, so  there exists a unique backward induction strategy profile. 

Consider the normal form of the simple game where agents simultaneously choose strategies from the extensive-form game tree and observe  opponents' strategies at the end of the match.  The next result shows that the only patiently stable profile of the normal form is the backward induction strategy profile.   In fact, we show something stronger: this is the only profile that is \emph{$\delta$-stable}.

\begin{definition}
For $0 \le \delta < 1$ and non-doctrinaire prior $g$, strategy profile $\pi$ is \textbf{\emph{$\mathbf{\delta}$-{stable}}} under $g$ 
if there is a collection of parameter sequences $\{\gamma_{k}\}_{k \in \mathbb{N}}$  and associated steady-state profiles $\{\pi_{k} \in \Pi^{*}(g,\delta,\gamma_{k})\}_{k \in \mathbb{N}}$ such that $\lim_{k \rightarrow \infty} \gamma_{k} =  1$  and $\lim_{k \rightarrow \infty} \pi_{k} = \pi$.
\label{delta stable definition}
\end{definition}

\begin{proposition}
\label{prop:simple_game_BI} Suppose agents play the normal-form
representation of a simple game. Then, every $\delta$-stable profile puts probability $1$ on a
backward-induction outcome. 
\end{proposition}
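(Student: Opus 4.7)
The plan is to exploit the key structural feature of the model: because each agent observes opponents' full strategies at the end of every match, the signal in each period is drawn i.i.d.\ from the true aggregate $-i$ distribution \emph{independent of the agent's own action}. This kills any experimentation motive. For any $\delta<1$, the optimal policy therefore reduces to a myopic best reply against the current posterior belief, and because the prior is non-doctrinaire, every such posterior has full support on opponent profiles.

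I will prove the proposition by induction on the depth $d(h)$ of a decision node $h$ (the length of the longest path from $h$ to a terminal node), showing: in any sequence of steady states $\pi_k \in \Pi^{*}(g,\delta,\gamma_k)$ with $\gamma_k\to 1$, the marginal aggregate play at $h$ puts probability tending to $1$ on the backward-induction action $a^{*}(h)$. For the base case $d(h)=1$, the moving player has strict preferences over the terminal children of $h$; since every posterior derived from a non-doctrinaire prior has full support, reaching $h$ has strictly positive subjective probability, so the strategy that plays $a^{*}(h)$ at $h$ strictly dominates (in expected utility) any strategy differing from it only at $h$. Hence no agent ever chooses a strategy that fails to play $a^{*}(h)$ at any depth-$1$ node, and the aggregate marginal at $h$ is already a point mass on $a^{*}(h)$ in \emph{any} steady state, for \emph{any} $\gamma$.

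For the inductive step at a depth-$d$ node $h$ of player $i$, the inductive hypothesis says the true aggregate opponent profile in $\pi_k$ assigns mass tending to $1$ to opponent strategies that play $a^{*}(h')$ at every $h'$ with $d(h')<d$. Because agents observe full opponent strategies, each opponent's aggregate is an identifiable parameter, and by standard consistency of Bayesian posteriors under full-support priors (\citet{Diaconis1990}), an $i$-agent of age $t$ has a posterior that concentrates on this true aggregate as $t\to\infty$. No-ties implies that when opponents follow backward induction below $h$, $a^{*}(h)$ is the \emph{strictly} best action at $h$; by continuity of expected utility in the belief, there is an open neighborhood of the backward-induction profile on which $a^{*}(h)$ remains the strict best response at $h$. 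Hence one can pick an age threshold $T$ (independent of $k$) such that every $i$-agent of age at least $T$ with beliefs within this neighborhood plays $a^{*}(h)$ at $h$. Since the geometric age distribution with parameter $\gamma_k$ concentrates on ages $\ge T$ as $\gamma_k\to 1$, and the consistency guarantee applies to a vanishing-exceptional set, the aggregate marginal at $h$ converges to a point mass on $a^{*}(h)$, closing the induction.

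Applying the induction at the root and then at each node along the backward-induction path yields that the limit profile places probability $1$ on the backward-induction outcome. I expect the main obstacle to be formalizing the uniformity across depths: the inductive hypothesis delivers convergence at depth $<d$ only in the limit $\gamma_k\to 1$, while the inductive step needs the neighborhood of concentration at truth to lie inside the strict-best-response region for $a^*(h)$. This requires choosing the posterior-concentration tolerance, the age threshold, and the $\gamma_k$ tolerance in tandem; finiteness of the game tree together with the uniform strict-preference gap provided by the no-ties hypothesis makes this feasible.
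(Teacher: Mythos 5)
Your proof is correct and takes essentially the same route as the paper's: the paper derives this proposition from its general elimination result (Proposition \ref{prop:normal_form_elimination}) by letting $D_i^{(m)}$ be the strategies that first violate backward induction at nodes $m+1$ steps from the terminal nodes, which is exactly your depth induction, with the no-experimentation observation, weak dominance at depth one, and posterior concentration plus the vanishing mass of young agents playing the same roles. One phrase to tighten: ``an open neighborhood of the backward-induction profile'' is not quite right, since at profiles where $h$ is unreached the choice at $h$ is not strict; the argument should instead run through the conditional distribution of play below $h$ given that $h$ is reached (which in a simple game equals the unconditional marginal, because the players moving before $h$ and below $h$ are disjoint) together with the strictly positive reaching probability under full-support posteriors --- precisely what the conditional-probability clause in Definition \ref{Valid Sequence Definition} is for.
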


% \begin{proposition}
% \label{prop:simple_game_BI} Suppose agents play the normal-form
% representation of a simple game. Suppose each agent
% starts with independent prior beliefs about different opponent populations'
% play and these prior beliefs satisfy Condition $\mathcal{P}$ from
% \citet*{fudenberg2017bayesian}. Then, every $\delta$-stable profile puts probability one on the unique
% backward-induction strategy profile. 
% \end{proposition}

In particular, this implies that for the normal form in Figure \ref{fig:fig4}, (\textbf{Pass}, \textbf{Pass}, \textbf{Pass}) is the only learning outcome when agents are sufficiently long lived. As we show in Appendix \ref{Proof of Proposition prop:simple_game_BI}, Proposition \ref{prop:simple_game_BI} follows from a more general result in the next section about patient stability in  normal form games. 

\subsection{An Iterative Deletion Refinement in Normal Forms} \label{sec:iterative}

The next proposition discusses the implications of patient stability in environments of ``maximal observability'': that is, agents play the normal form derived from an extensive form game and observe their opponents' strategies. This result gives us a benchmark of what long-lived agents will learn in games if they do not need to experiment. The result takes the form of an iterative procedure that eliminates at
each step some of the remaining strategies that do not best respond
to strictly mixed conjectures that put arbitrarily low conditional probabilities
on eliminated opponent strategies. Let $\mathcal{S} = \{\times_{i} \widetilde{S}_{i} : \forall i, ~ \widetilde{S}_{i} \subseteq S_{i}\}$ be the set of product spaces generated by the subsets of the player strategy spaces. 
% Below is the condition for a sequence of player strategy sets to be consistent with this iterative deletion procedure.

\begin{definition} A sequence $(S^{(0)}, D^{(0)}),(S^{(1)},D^{(1)})... \in \mathcal{S}^{2}$ is a \textbf{\emph{valid elimination sequence}} if
\begin{compactenum}

\item For each $i$, 
$S_{i}^{(0)}=S_{i} \setminus D_{i}^{(0)},$ and $D_{i}^{(0)}$ is any subset of $i$'s weakly dominated strategies,

\item  For each $i$ and $m > 0$, $D_{i}^{(m)}$ is a subset of $S_{i}^{(m-1)}$
such that, for every $s_{i} \in D_{i}^{(m)}$, there exists some $\epsilon>0$ where  $\mathbb{E}_{\sigma_{-i}}[u_{i}(s_{i},s_{-i})] < \max_{s_{i}' \in S_{i}} \mathbb{E}_{\sigma_{-i}}[u_{i}(s_{i}',s_{-i})]$ for all correlated opponent strategy profiles $\sigma_{-i} \in \Delta(S_{-i})$ satisfying \textup{$\sigma_{-i}(S_{j}^{(m-1)}|s_{-ij})\ge1-\epsilon$}
for every $j\ne i$ and $s_{-ij} \in S_{-ij}$, and

\item For each $i$ and $m > 0$, $S_{i}^{(m)}=S_{i}^{(m-1)}\backslash D_{i}^{(m)}.$

\end{compactenum}
\label{Valid Sequence Definition}
\end{definition}

In a valid elimination sequence, at every stage of the iteration, the only player $i$ strategies that can be eliminated are those for which the following condition holds: There is an  $\epsilon > 0$ such that the strategy is suboptimal under any conjecture that, for each opponent $j$, puts probability at least $1 - \epsilon$ on  $j$ strategies that have not yet been eliminated conditional on any strategy profile of the opponents other than $j$.

\begin{proposition}

\label{prop:normal_form_elimination} 
For a valid elimination sequence $(S^{(0)}, D^{(0)}),(S^{(1)},D^{(1)})... \in \mathcal{S}^{2}$, let $S_{i}^{*}=\cap_{m=0}^{\infty}S_{i}^{(m)}$. If agents observe matched opponents'
strategy choices at the end of each game, then every $\delta$-stable
strategy profile is supported on the non-empty set $\times_{i}S_{i}^{*}$. 
\end{proposition}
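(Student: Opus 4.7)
The plan is to induct on $m$ to show that every $\delta$-stable profile $\pi$ satisfies $\pi_i(S_i^{(m)})=1$ for each $i$ and each $m\ge 0$. The key preliminary observation is that in this maximal-observability environment each opponent pure strategy profile $s_{-i}$ is observed at the end of every match regardless of the agent's own choice. Hence the signal stream is independent of the current action, experimentation has no value, and the optimal policy $\mathbf{s}_i^{\delta,\gamma}$ reduces at every history $y_i$ to a myopic best response to the posterior-mean conjecture $\sigma_{-i}(s_{-i}\mid y_i)=\int\prod_{j\ne i}\pi_j(s_j)\,dg_i(\pi\mid y_i)$ on $\Delta(S_{-i})$. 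Because $g_i$ is non-doctrinaire and Bayesian updating preserves strict positivity on the interior, $\sigma_{-i}(\cdot\mid y_i)$ has full support on $S_{-i}$ after any finite history.

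For the base case $m=0$, any $s_i\in D_i^{(0)}$ is weakly dominated by some $s_i'\in S_i$; against a full-support conjecture $s_i'$ strictly outperforms $s_i$, so $s_i$ is never a myopic best response at any history. Hence $\pi_{k,i}(s_i)=0$ for every steady-state $\pi_k$, and therefore $\pi_i(s_i)=0$.

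For the inductive step, fix $s_i\in D_i^{(m)}$ with associated $\epsilon>0$. The set $C_\epsilon$ of correlated conjectures $\sigma_{-i}\in\Delta(S_{-i})$ satisfying $\sigma_{-i}(S_j^{(m-1)}\mid s_{-ij})\ge 1-\epsilon$ for every $j$ and every $s_{-ij}$ with $\sigma_{-i}(s_{-ij})>0$ can be rewritten as a system of linear inequalities and is therefore closed in the compact space $\Delta(S_{-i})$; by continuity of expected payoffs, there is a uniform gap $\zeta>0$ and some $s_i^{\star}\in S_i$ such that $\mathbb{E}_{\sigma_{-i}}[u_i(s_i^{\star},s_{-i})]-\mathbb{E}_{\sigma_{-i}}[u_i(s_i,s_{-i})]\ge\zeta$ for all $\sigma_{-i}\in C_\epsilon$. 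Now let $(\pi_k)_{k\in\mathbb{N}}$ witness $\delta$-stability, with $\gamma_k\to 1$, $\pi_k\in\Pi^{*}(g,\delta,\gamma_k)$, and $\pi_k\to\pi$. By the inductive hypothesis $\pi_j(S_j^{(m-1)})=1$ for each $j$, so for large $k$, $\pi_{k,j}(S_j^{(m-1)})\ge 1-\epsilon$; since the true aggregate opponent distribution $\pi_{k,-i}$ is the product $\prod_j\pi_{k,j}$, independence gives $\pi_{k,-i}(S_j^{(m-1)}\mid s_{-ij})=\pi_{k,j}(S_j^{(m-1)})\ge 1-\epsilon$ whenever the conditioning event has positive probability, so $\pi_{k,-i}\in C_\epsilon$. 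By Bayesian consistency (\citet{Diaconis1990}), as age $T\to\infty$ the posterior-mean conjecture of an age-$T$ $i$-agent facing i.i.d.\ samples from $\pi_{k,-i}$ converges to $\pi_{k,-i}$ in total variation, uniformly over the compact set of possible aggregate profiles. Pick $T$ large enough that the implied perturbation of expected stage payoffs is below $\zeta/2$; then any agent of age $\ge T$ strictly prefers $s_i^{\star}$ to $s_i$ and does not play $s_i$. The steady-state mass of agents of age $<T$ is $1-\gamma_k^{T}\to 0$ as $\gamma_k\to 1$, so $\pi_{k,i}(s_i)\to 0$, giving $\pi_i(s_i)=0$.

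Non-emptiness of $\times_i S_i^{*}$ then follows because steady states exist for every $(\delta,\gamma_k)$ by Proposition \ref{Steady State Existence Result}, the product of strategy simplices is compact, so a $\delta$-stable limit exists, and the argument above shows it is supported on $\times_i S_i^{*}$. The main technical obstacle is ensuring that the Diaconis--Freedman concentration holds uniformly in $k$ along the sequence $(\pi_{k,-i})$; this is handled by the compactness of the set of aggregate profiles together with the strict positivity of the prior density, as in the analogous uniformity arguments of \citet{clark2021justified}. A secondary subtlety is that the elimination definition quantifies over \emph{correlated} conjectures while the true aggregate $\pi_{k,-i}$ is a product distribution, but this only makes the required condition easier to check, as product distributions satisfy the conditional inequalities automatically from their marginals.
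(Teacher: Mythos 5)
Your proof follows the same two-step architecture as the paper's: (i) observations are independent of own play, so experimentation has no value, agents myopically best respond to full-support posteriors, and $D^{(0)}$ is never used; (ii) an induction in which posterior concentration (Diaconis--Freedman plus the law of large numbers) forces experienced agents off $D^{(m)}$, while the mass of inexperienced agents vanishes as $\gamma\to1$. Where you differ is in how the inductive step is closed: the paper verifies that an experienced agent's mixture-of-products conjecture itself satisfies the conditional-probability constraints (obtaining the bound $\sigma(S_j^{(m)}\mid s_{-ij})\ge(1-\epsilon)^2$), whereas you establish a uniform payoff gap $\zeta$ on the compact set $C_\epsilon$, note $\pi_{k,-i}\in C_\epsilon$, and transfer the gap to the agent's conjecture by Lipschitz continuity of expected payoffs in total variation. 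This is a legitimate and arguably cleaner route, since it sidesteps controlling conditional probabilities of the conjecture on low-probability events $s_{-ij}$. Two repairs are needed. First, a single $s_i^{\star}$ need not dominate $s_i$ uniformly over $C_\epsilon$; work instead with the continuous function $\sigma_{-i}\mapsto\max_{s_i'}\mathbb{E}_{\sigma_{-i}}[u_i(s_i',s_{-i})]-\mathbb{E}_{\sigma_{-i}}[u_i(s_i,s_{-i})]$, which is bounded below by $\zeta$ on $C_\epsilon$ and is itself TV-Lipschitz. Second, and more substantively, ``any agent of age $\ge T$ strictly prefers $s_i^{\star}$ to $s_i$'' is false as stated: at any finite age a positive-probability set of agents has an atypical sample and a posterior mean far from $\pi_{k,-i}$. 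The correct statement is that all but a fraction (vanishing in $T$, uniformly in $k$) of such agents abandon $s_i$, so that $\pi_{k,i}(s_i)\le(1-\gamma_k^{T})+o_T(1)$; one then sends $k\to\infty$ for fixed $T$ and finally $T\to\infty$, exactly the $\eta$/$\nu$ bookkeeping the paper carries out explicitly. With those adjustments the argument is sound, and your compactness argument for non-emptiness of $\times_i S_i^{*}$ (existence of steady states plus extraction of a convergent subsequence) is a nice addition the paper's proof leaves implicit.
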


The idea behind the proof is that agents never use weakly dominated strategies in $D_i^{(0)}$ because they have full-support  beliefs about others' play, and experienced agents learn that these strategies are rarely used by an extension of \citet{Diaconis1990}'s result in \citet{pathwise}. This implies strategies in $D_i^{(1)}$ only get used with very low probabilities in the steady state, as they are only played  by the very young agents. Iterating this argument lets us eliminate the strategies in $D_i^{(2)}, D_i^{(3)}$, and so forth.

Different valid elimination sequences may lead
to different strategy sets $S_{i}^{*}$ in the end. Proposition \ref{prop:normal_form_elimination}, which we prove in Appendix \ref{Proof of Proposition prop:normal_form_elimination} gives a family of necessary conditions of patient stability, corresponding
to different valid elimination sequences. 

Some of the valid  elimination sequences
 correspond to well-known solution concepts. One example is backward induction in simple games: Proposition \ref{prop:simple_game_BI} follows from Proposition \ref{prop:normal_form_elimination} by letting $D_{i}^{(m)}$ be
those extensive-form strategies of $i$ that are inconsistent with backward induction at some decision node $m+1$ steps away from the terminal nodes, but agree with it at all decision nodes $m$ or fewer steps away from the terminal nodes.  The proof of Proposition \ref{prop:simple_game_BI} verifies that these $D_{i}^{(m)}$ form a valid elimination sequence.

A second example is the solution concept $S^{\infty}W$ (\citet{dekel1990rational}), which \citet{borgers1994weak} shows is equivalent to players having full support beliefs about the play of others and that this and the rationality of the players are ``almost common knowledge.'' This solution concept corresponds
to choosing $D_{i}^{(0)}$ to be all weakly dominated strategies of
$i$ in the original game, and, at each step $m,$ choosing $D_{i}^{(m+1)}\subseteq S_{i}^{(m)}$
to be the strictly dominated strategies of $i$ in the reduced game
where $i$ has the strategy set $S_{i}^{(m)}.$  To see that this is a valid elimination sequence, note that if $s_{i}$
is strictly dominated, then there is some $\sigma_{i}\in\Delta(S_{i}^{(m)})$
and $\eta>0$ so that $u_{i}(\sigma_{i},s_{-i})>u_{i}(s_{i},s_{-i})+\eta$
for all $s_{-i}\in S_{-i}^{(m)}.$ By continuity, there exists some
$\epsilon>0$ so that for any full-support correlated opponent strategy $\sigma_{-i}$
of the original game where $\sigma_{-i}(S_{-i}^{(m)})\ge1-\epsilon$, we have $u_{i}(\sigma_{i},\sigma_{-i})>u_{i}(s_{i},\sigma_{-i})+\eta/2$,
so in particular $s_{i}$ is not a best response to any such $\sigma_{-i}$.

While the refinement in Proposition \ref{prop:normal_form_elimination} is stronger than $S^\infty W$, it is weaker than iterated elimination of weakly dominated strategies.  This is because in defining $D_i^{(m)}$ in the iterative procedure, we consider conjectures where the probabilities assigned to deleted strategies can be arbitrarily small, but need not be zero. Provided there are at least two remaining strategies, this does not imply that the highest probability assigned to a deleted strategy must be lower than the lowest probability assigned to a remaining strategy. This distinguishes the Proposition \ref{prop:normal_form_elimination}  refinement from other refinement concepts like the iterated admissibility of  \citet{brandenburger2008admissibility} and the consistent pairs of \citet{borgers1992cautious}.\footnote{Consistent pairs capture the implications of assuming that players  maximize expected utility, and that players form ``cautious expectations.'' Such pairs are only defined for two-player games, and do not always exist.} For instance, for the game in Figure \ref{fig:IA}, there is no valid elimination sequence that uniquely selects the (\textbf{A}, \textbf{X}) strategy profile, even though (\textbf{A}, \textbf{X}) is the unique iteratively admissible profile. From a learning perspective, the idea is that although   \textbf{C} is strictly dominated for P1, if P1 always play \textbf{B} then P2 can still maintain a belief that  \textbf{C} is relatively more likely than \textbf{A} and thus choose \textbf{Y}. Indeed, it is easy to see that (\textbf{B}, \textbf{Y}) is a steady-state profile for any $0\le\delta,\gamma<1$ (and therefore, patiently stable) if P1 starts with a strong prior belief that P2s play \textbf{Y} and P2s start with a Dirichlet prior with weights (1, 1, 10) on the P1 actions (\textbf{A}, \textbf{B}, \textbf{C}).

\begin{figure}
\begin{centering}
\begin{tabular}{|c|c|c|}
\hline 
 & \textbf{X} & \textbf{Y}\tabularnewline
\hline 
\hline 
\textbf{A} & 2, 2 & 0, 0\tabularnewline
\hline 
\textbf{B} & 1, 1 & 1, 1\tabularnewline
\hline 
\textbf{C} & -10, 0 & -10, 1\tabularnewline
\hline 
\end{tabular}
\par\end{centering}
\caption{In this game, (\textbf{A}, \textbf{X}) is the unique iteratively admissible
outcome \citep{brandenburger2008admissibility}, but (\textbf{B}, \textbf{Y}) is also patiently stable.} \label{fig:IA}
\end{figure}

\subsection{Information-Equivalent Normal Forms}\label{sec:native_partition}
For an extensive form $\mathcal{G}$ and terminal node partitions $\mathcal{P}$, consider the normal form $\mathcal{N}$
whose terminal nodes correspond to strategy profiles in $\mathcal{G},$
that is $\times_{i}\mathbb{S}_{i}$. Learning from the terminal node partition $\mathcal{P}$ in  $\mathcal{G}$ and learning in $\mathcal{N}$ (with the standard assumption that the normal-form strategies played are observed by all the players at the end of each game) lead to different patiently stable
profiles in general, as shown above. However, when $\mathcal{N}$ is equipped with the appropriate terminal
node partitions, it will have the same set of patiently stable profiles
as $\mathcal{G}$.

The $\mathcal{P}-$\emph{equivalent} terminal node partitions are 
$\mathcal{\hat{P}}_{i}$ for $i$ in $\mathcal{N}$ are such that $\mathcal{\hat{P}}_{i}(s)=\mathcal{\hat{P}}_{i}(s')$
if and only if $\mathcal{P}_{i}(\mathbf{z}(s))=\mathcal{P}_{i}(\mathbf{z}(s'))$. Players hold beliefs over opponents' behavior strategies in $\mathcal{G}$
and mixed strategies in $\mathcal{N}$, but we can transform a non-doctrinaire
belief over behavior strategies into one over mixed strategies and
vice versa when $\mathcal{G}$ has perfect recall, by Kuhn's theorem.

\begin{proposition} \label{prop:naive_terminal_nodes}
The patiently stable profiles
of $(\mathcal{G},\mathcal{P})$ are the same as the patiently stable profiles of $\mathcal{N}$
with the $\mathcal{P}-$equivalent terminal node partitions.
\end{proposition}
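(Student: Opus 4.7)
The plan is to establish, at each fixed $(\delta, \gamma)$, a bijection between the steady-state profiles of the two learning systems via Kuhn's theorem; the claim about patiently stable profiles then follows by taking $\gamma \to 1$ and then $\delta \to 1$. The approach parallels the translation of priors used in the proof of Claim \ref{claim:split}, but now operates at the level of the Kuhn identification rather than a single coalesce move.

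First I would identify priors across the two settings. For each $i$, a non-doctrinaire prior $g_i \in \Delta(\times_{h \in \mathcal{H}_{-i}} \Delta(A_h))$ in $\mathcal{G}$ corresponds, by Kuhn's theorem and perfect recall, to a non-doctrinaire prior $\hat g_i \in \Delta(\times_{j \neq i} \Delta(\mathbb{S}_j))$ in $\mathcal{N}$, where the correspondence is required to preserve the induced marginal on the signal kernel $s_i \mapsto \mathrm{Law}(\mathcal{P}_i(\mathbf{z}(s_i, s_{-i})))$. Kuhn's theorem says that for a perfect-recall opponent, each behavior strategy is outcome-equivalent to the mixed strategy given by the product of its action marginals, and conversely; in our setting, each non-doctrinaire density on behavior strategies can be paired with a non-doctrinaire density on mixed strategies that induces the same marginal on these kernels.

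Next I would show that under this identification, the two learning dynamics coincide. By the defining property of the $\mathcal{P}$-equivalent partition, the history space $Y_i$ in $\mathcal{G}$ is canonically isomorphic to its counterpart in $\mathcal{N}$. For any fixed own strategy $s_i$ and any aggregate opponent play (whether treated as a behavior strategy profile $\pi_{-i}$ in $\mathcal{G}$ or its Kuhn-equivalent mixed strategy profile $\sigma_{-i}$ in $\mathcal{N}$), the realized signal and flow payoff have the same conditional distribution in both games, using that $u_i$ is constant on each cell of $\mathcal{P}_i$ (and hence of $\hat{\mathcal{P}}_i$). Hence the Bayesian updating rules and expected discounted payoffs agree, so the optimal dynamic policy $\mathbf{s}_i^{\delta, \gamma}$ corresponds across the two games. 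Consequently, the aggregate response mapping $\mathscr{R}^{\delta, \gamma}$ from Proposition \ref{Steady State Existence Result} has fixed-point sets $\Pi^*(g, \delta, \gamma)$ and $\Pi^*(\hat g, \delta, \gamma)$ in bijection via Kuhn's map, and taking $\gamma \to 1$ followed by $\delta \to 1$ preserves this bijection by continuity of the Kuhn map.

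The main obstacle is the prior-identification step. The pushforward under the Kuhn map of a non-doctrinaire prior on behavior strategies is concentrated on the lower-dimensional set of product mixed-strategy profiles, so it is not non-doctrinaire on $\times_{j \neq i} \Delta(\mathbb{S}_j)$. The resolution is to observe that because observations in $\mathcal{N}$ only reveal $\hat{\mathcal{P}}_i(s)$, which is informationally equivalent to $\mathcal{P}_i(\mathbf{z}(s))$, agents' Bayesian inference in $\mathcal{N}$ depends on $\hat g_i$ only through its induced marginal on the signal kernel; one therefore chooses $\hat g_i$ to be any non-doctrinaire prior on $\times_{j \neq i} \Delta(\mathbb{S}_j)$ whose signal-kernel marginal agrees with that of $g_i$, and symmetrically in the other direction. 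Verifying that this choice is always possible, and that the bijection on steady states survives the patient, long-lived limit, is the main technical content.
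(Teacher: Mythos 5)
Your argument follows the paper's proof essentially step for step: translate non-doctrinaire priors between behavior-strategy and mixed-strategy form via Kuhn's theorem, then observe that the $\mathcal{P}$-equivalent partitions together with the measurability of $u_i$ with respect to $\mathcal{P}_i$ make the two dynamic optimization problems identical, so the steady-state sets coincide at every $(\delta,\gamma)$ and the patient, long-lived limits agree. The subtlety you isolate---that the pushforward of a behavior-strategy prior sits on the lower-dimensional set of product measures---is the same one the paper addresses by drawing a mixed strategy and then randomizing over its Kuhn-equivalence class, and your fix of matching only the induced signal kernel is a legitimate variant, since the agent's decision problem depends on the prior only through that kernel.
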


Intuitively,  the  definition of  $\mathcal{\hat{P}}_{i}$ implies agents have the same feedback in the two games,  so the problems are information invariant, and the normal form and extensive form are decision invariant. We formally show this in Appendix \ref{Proof of Proposition prop:naive_terminal_nodes}.\footnote{Note that unlike the ``normal form information sets'' of \citet{mailath1991extensive}, the equivalent terminal node partition cannot be derived from the normal form alone.}

\subsection{Coarser Terminal Partitions May Eliminate Patiently Stable Profiles}\label{sec:coarser}

Sections \ref{sec:BI_simple_games} and \ref{sec:iterative} show that coarser observations of opponents' strategies can expand the set of patiently stable profiles. 
% Proposition \ref{prop:simple_game_BI} says that learning from the normal form of a simple game with full observability  implies backward induction, but we know from \citet{Fudenberg2006} that learning from the extensive form of the same game (which is the same as learning from the normal form with  the equivalent terminal node partitions)  may lead to outcomes that violate backward induction. 
But this is not always true, and coarser terminal node partitions can shrink rather than expand the set of patiently stable profiles in other games. 

\begin{figure}
    \centering
    \includegraphics[scale=.3]{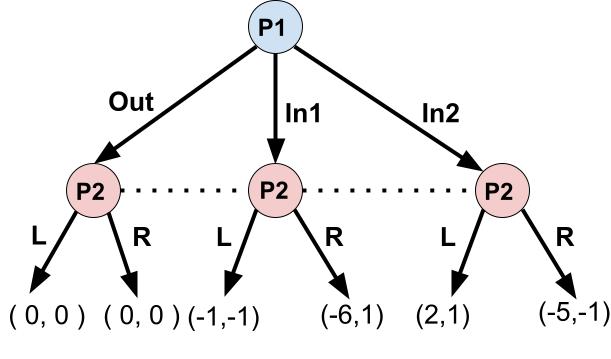}
    \includegraphics[scale=.3]{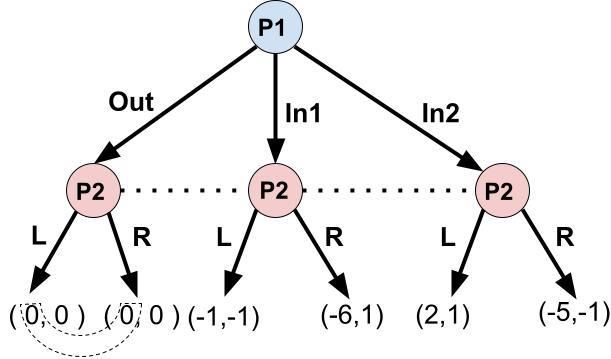}
    \caption{In the game on the left, both players observe the terminal node. In the game on the right, P1 does not observe P2's play if they choose \textbf{Out}. The (\textbf{Out}, \textbf{R}) profile is patiently stable for the game on the left, but not for the game on the right.}
    \label{fig:comparing_partitions}
\end{figure}

Consider the two games in Figure \ref{fig:comparing_partitions} that only differ in the terminal node partition of P1. In the game on the left with the finer terminal node partitions, (\textbf{Out}, \textbf{R}) is patiently stable. It is easy to see that if P1 and P2 both start with a strong prior belief in the (\textbf{Out}, \textbf{R}) equilibrium and P2 thinks \textbf{In1} is more likely than \textbf{In2} when they have only seen P1s play \textbf{Out}, then it is a steady state under any $0\le \delta, \gamma < 1$  for  (\textbf{Out}, \textbf{R}) to be played in every match. 

But,   (\textbf{Out}, \textbf{R}) is not patiently stable in the game on the right with the coarser terminal node partitions, as we show in Claim \ref{claim:coarser_partition}.\footnote{Technically, Claim \ref{claim:coarser_partition} imposes additional restrictions on the P2 prior, but the class of priors allowed is broad and includes those with densities that are strictly positive and continuous everywhere as well as Dirichlet distributions.} The proof idea, which we rigorously demonstrate in Appendix \ref{Proof of Claim claim:coarser_partition}, is that patient P1 players will spend many periods experimenting with \textbf{In2}, since they cannot learn P2's play by choosing \textbf{Out}. This teaches P2s that P1s are much more likely to use \textbf{In2} than \textbf{In1}, so that they should not play \textbf{R}. 

\begin{claim}
\label{claim:coarser_partition}
For the game on the right in Figure \ref{fig:comparing_partitions}, suppose P2's prior belief satisfies Condition $\mathcal{P}$
from \citet{fudenberg2017bayesian}. Then, every patiently stable profile satisfies $\pi(R)=0.$
\end{claim}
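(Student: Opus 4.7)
The plan is to argue by contradiction. Suppose $\pi^*$ is patiently stable with $\pi^*_2(\textbf{R}) > 0$, witnessed by sequences $\delta_j \to 1$ and $\gamma_{j,k} \to 1$ with associated steady states $\pi_{j,k} \to \pi^*$. First I would catalog P1's payoffs: \textbf{Out} gives the constant $0$, \textbf{In1} gives the constant $-1$, and \textbf{In2} gives $2 - 7q$ where $q := \pi_2(\textbf{R})$. Under the coarser partition, \textbf{Out} reveals nothing about $q$, while both \textbf{In1} and \textbf{In2} reveal P2's action. For a P1 agent with current belief $\hat{q}$, the immediate cost (relative to \textbf{Out}) of experimenting with \textbf{In1} is the constant $1$, while that of \textbf{In2} is $\max(0, 7\hat{q} - 2)$; hence \textbf{In2} is the strictly cheaper source of information whenever $\hat{q} < 3/7$, and is strictly myopically preferred to \textbf{Out} whenever $\hat{q} < 2/7$.

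I would then dispose of the case $\pi^*_2(\textbf{R}) < 2/7$ directly: as $\gamma_{j,k}\to 1$ the mass of old P1 agents tends to $1$, and a Diaconis--Freedman-style concentration argument (as in \citet{clark2021justified}) forces their posteriors to concentrate near the true $\pi_{2,j,k}$. Those agents then myopically play \textbf{In2}, so $\pi_{1,j,k}(\textbf{In2})\to 1$; by the same logic, old P2 agents learn to strictly prefer \textbf{L}, contradicting $\pi_2^*(\textbf{R})>0$. The substantive case is $\pi^*_2(\textbf{R}) \geq 2/7$, in which old P1s myopically prefer \textbf{Out}. My plan for this case is to show that in the double limit the aggregate nevertheless satisfies $\pi_{1,j,k}(\textbf{In2}) > \pi_{1,j,k}(\textbf{In1})$. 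The key intuition, following the text, is that because \textbf{Out} is completely uninformative, any patient P1 with positive residual uncertainty about $q$ must pick an \textbf{In} experiment; the cost comparison above then selects \textbf{In2} for every belief $\hat{q}<3/7$, while \textbf{In1} is chosen only on the narrow band $\hat{q}\in(3/7,1)$ where the information value also exceeds $1$. Since the non-doctrinaire P1 prior assigns positive density throughout $(0,1)$ and $\delta_j\to 1$ drives the experimentation horizon to grow against the age distribution set by $\gamma_{j,k}$, the aggregate number of periods a typical P1 agent spends in \textbf{In2} during their belief trajectory dominates the number spent in \textbf{In1}.

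Finally I would invoke Condition $\mathcal{P}$ on P2's prior, which is precisely what converts this aggregate asymmetry into a behavioral conclusion for P2: although most of P2's observations are of \textbf{Out} (and so uninformative about the \textbf{In1}-versus-\textbf{In2} split), Condition $\mathcal{P}$ guarantees that after a long enough history, the sign of $\mathbb{E}[\pi_1(\textbf{In2}) - \pi_1(\textbf{In1})]$ under P2's posterior agrees with the sign of the empirical difference. Combined with the previous step this yields $\mathbb{E}[\pi_1(\textbf{In2})] > \mathbb{E}[\pi_1(\textbf{In1})]$ for sufficiently old P2s, so that \textbf{L} strictly dominates \textbf{R}, forcing $\pi_{2,j,k}(\textbf{R}) \to 0$ and contradicting $\pi^*_2(\textbf{R}) > 0$. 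The step I expect to be the main obstacle is the aggregate comparison of \textbf{In1} and \textbf{In2} plays in the substantive case: both quantities are of order $1 - \gamma$, and controlling their ratio requires a careful accounting of how Bayesian belief trajectories under an arbitrary non-doctrinaire P1 prior distribute mass across the regions $\hat{q} < 3/7$ and $\hat{q} \geq 3/7$, which is exactly where the order of limits (patient $\delta \to 1$ after long-lived $\gamma \to 1$) does the crucial work.
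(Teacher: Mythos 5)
Your high-level mechanism is the paper's own: because \textbf{Out} is uninformative under the coarse partition, patient P1s must experiment with an \textbf{In} action, the cheaper experiment is \textbf{In2}, and P2s therefore come to believe \textbf{In2} is far more likely than \textbf{In1} and switch to \textbf{L}. But there are two genuine gaps, only one of which you flag. The step you identify as the main obstacle is real, but you are also aiming at the wrong quantitative target: the paper never compares $\pi_1(\textbf{In2})$ with $\pi_1(\textbf{In1})$. Instead it (i) fixes $\epsilon$, reads off constants $x$ and $N$ from Proposition 1 and Theorem 1 of \citet{fudenberg2017bayesian} applied to the \emph{P2 prior}, sets $K=16Nx/\epsilon$, and argues that for $\delta,\gamma$ above a threshold every P1 agent plays \textbf{In2} for their first $K$ periods, yielding the lower bound $\pi_{j,k}(\textbf{In2})\ge\frac{1}{2}(1-\gamma_{j,k})K$; and (ii) uses that P2 agents essentially never observe \textbf{In1}, so that Proposition 1 caps their posterior mean on \textbf{In1} at $2x/(n+x)$. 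The ratio of the two posterior means is then forced above one by the choice of $K$.

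Both halves of that construction matter because your final step misuses Condition $\mathcal{P}$. A P2 agent's lifetime count of \textbf{In}-observations is bounded in expectation (each \textbf{In} probability is $O(1-\gamma)$ against a lifespan of order $1/(1-\gamma)$), so the posterior comparison between \textbf{In1} and \textbf{In2} is a contest between a bounded sample and the prior. A lopsided Dirichlet prior with pseudo-weights $(1,100,1)$ on $(\textbf{Out},\textbf{In1},\textbf{In2})$ satisfies Condition $\mathcal{P}$, and for such a prior the sign of the empirical difference does \emph{not} determine the sign of the posterior difference: seeing \textbf{In2} three times and \textbf{In1} never still leaves the posterior favoring \textbf{In1}. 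Merely establishing $\pi_1(\textbf{In2})>\pi_1(\textbf{In1})$ therefore cannot force $\pi_2(\textbf{R})\to 0$; you need the number of \textbf{In2} observations to scale with the prior's strength $x$ (this is exactly what $K=16Nx/\epsilon$ buys) together with \textbf{In1} being essentially unobserved. A smaller issue: in your first case, old P1s' posteriors concentrate on the true $\pi_2$ only because they have experimented --- under the coarse partition, age alone generates no data about P2 --- so that case also rests on the experimentation argument rather than on Diaconis--Freedman concentration directly, and the case split itself is unnecessary in the paper's uniform treatment.
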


Note that \textbf{In1} is strictly dominated for P1, and if P2 thinks that P1 never plays \textbf{In1}, then \textbf{L} is strictly better than \textbf{R} for P2 given any conjecture that puts positive probabilities on both \textbf{Out} and \textbf{In2}. Thus (\textbf{Out}, \textbf{R}) is ruled out by iterated elimination of weakly dominated strategies, and stable learning outcomes in the example violate this refinement with a finer terminal node partition, but not with a coarser one. 

\section{Conclusion}
The implications of learning depend crucially on the structure of the game and on what agents observe about others’ play. When the game and the feedback structure make it profitable for patient players to experiment with dominated strategies (for instance, when agents get no information from choosing a safe action but can use a worse safe action to learn about the consequences of a risky action), patiently stable profiles may violate forward induction or iterated weak dominance. When agents must experiment to learn about off-path play, patiently stable profiles may violate backward induction. But if agents  observe opponents’ strategies regardless of their own play, patiently stable profiles always satisfy backward induction in simple games. This shows that ruling out some Nash equilibria requires close attention to the details of the game and the learning environment. 

As in previous work, we have assumed that agents have non-doctrinaire priors in order to appeal to the \citet{Diaconis1990} result on the speed of convergence of Bayesian posteriors to the empirical distribution. \citet{pathwise} extends their convergence result to priors without full support, but if the true state is outside of the support of the priors then agents need not stop experimenting in finite time, as shown by \citet{fudenberg2017active}. This  raises a suite of new issues, as  patiently stable states might not be  Nash equilibria. 

Also, the assumption that agents have non-doctrinaire priors over aggregate play in the other populations rules out settings where agents place probability $0$ on opponent strategies that they believe are strictly dominated. Since much of the refinements literature implicitly assumes all players know the payoff functions of the others,  it is natural to wonder if adding some forms of restrictions on the priors would bring the patiently stable outcomes closer to the predictions of classic equilibrium refinements. In the case of signaling games with independent priors, \citet{fudenberg2020payoff} shows that the answer is ``yes,'' but  the implications of payoff information in general games are unclear. One issue is that, as we have seen, agents may choose to use dominated strategies for their information value, and an agent whose prior gave these strategies probability $0$ would be unable to form a Bayesian posterior.\footnote{This problem does not arise in signaling games with independent  priors, as there the senders would never experiment with dominated strategies, and receivers never experiment at all.} 
Of course, this problem does not arise with myopic players, for they will never pay a current cost to obtain information. But with myopic players there is no reason to expect learning to lead to a Nash equilibrium, let alone a refinement of it. 

% We leave these extensions for future research.

\newpage

\setstretch{1}
%\bibliographystyle{plainnat}
%\bibliography{References}
\printbibliography

\setstretch{1.5}
\newpage
\appendix
\begin{center}
{\LARGE{}Appendix}{\LARGE\par}
\par\end{center}
\section{Omitted Proofs}
\label{Appendix}

\subsection{Proof of Claim \ref{claim:split}}
\label{Proof of Claim claim:split}

Suppose we have
$A_{h_{i}^{'}}=\{a_{1},...,a_{m},a_{\text{pass}}\}$ with the action
$a_{\text{pass}}$ leading to $h_{i}''$, $A_{h_{i}''}=\{a_{m+1},...,a_{m+n}\}$,
and $A_{h_{i}^{\star}}=\{a_{1},...,a_{m},a_{m+1},...,a_{m+n}\}.$ (For example, in Figure \ref{fig:COA}, $m=1$ and $n=2$.) Let $\Delta^{\circ}(X)$ the distributions on $X$ that
assign strictly positive probability to each very element in $X.$ We
define $\phi:\Delta^{\circ}(A_{h_{i}^{'}})\times\Delta^{\circ}(A_{h_{i}''})\to\Delta^{\circ}(A_{h_{i}^{\star}})$,
such that $\phi(\alpha_{h_{i}^{'}},\alpha_{h_{i}^{''}})(a_{k})=\alpha_{h_{i}^{'}}(a_{k})$
for $1\le k\le m$, while $\phi(\alpha_{h_{i}^{'}},\alpha_{h_{i}^{''}})(a_{k})=\alpha_{h_{i}^{'}}(a_{\text{pass}})\cdot\alpha_{h_{i}^{''}}(a_{k})$
for $m+1\le k\le m+n.$ That is, $\phi(\alpha_{h_{i}^{'}},\alpha_{h_{i}^{''}})$
is a way to choose an element of $A_{h_{i}^{\star}}$ by using $\alpha_{h_{i}^{'}}$
and $\alpha_{h_{i}^{''}}$ sequentially: first draw an element from
$A_{h_{i}^{'}}$ according to $\alpha_{h_{i}^{'}}$ and then, if the
chosen element is $a_{\text{pass}},$ draw an element from $A_{h_{i}^{''}}$
according to $\alpha_{h_{i}^{''}}.$ The map $\phi$ is one-to-one,
because $\phi(\alpha_{h_{i}^{'}},\alpha_{h_{i}^{''}})$ and $\phi(\beta_{h_{i}^{'}},\beta_{h_{i}^{''}})$
generate different distributions on $\{a_{1},...,a_{m})$ if $\alpha_{h_{i}^{'}}\ne\beta_{h_{i}^{'}}$,
while $\phi(\alpha_{h_{i}^{'}},\alpha_{h_{i}^{''}})$ and $\phi(\alpha_{h_{i}^{'}},\beta_{h_{i}^{''}})$
generate different distributions on $\{a_{m+1},...,a_{m+n}\}$ if
$\alpha_{h_{i}^{''}}\ne\beta_{h_{i}^{''}}$ and $\alpha_{h_{i}^{'}}$
a positive probability to $a_{\text{pass}}.$ Also, $\phi$ is onto,
because for a given $\alpha_{h_{i}^{\star}}\in\Delta^{\circ}(A_{h_{i}^{\star}}),$
let $\alpha_{h_{i}'}\in\Delta^{\circ}(A_{h_{i}^{'}})$ be such that
$\alpha_{h_{i}'}(a_{k})=\alpha_{h_{i}^{\star}}(a_{k})$ for $1\le k\le m$,
$\alpha_{h_{i}'}(a_{\text{pass}})=1-\sum_{k=1}^{m}\alpha_{h_{i}^{\star}}(a_{k}),$
and $\alpha_{h_{i}''}(a_{k})=\frac{\alpha_{h_{i}^{\star}}(a_{k})}{\sum_{j=m+1}^{m+n}\alpha_{h_{i}^{\star}}(a_{j})}$
for $m+1\le k\le m+n.$ It is clear that by construction, $\phi(\alpha_{h_{i}'},\alpha_{h_{i}''})=\alpha_{h_{i}^{\star}}.$
We have $\phi(\alpha_{h_{i}^{'}},\alpha_{h_{i}^{''}})=\alpha_{h_{i}^{\star}}$
if and only if $(\alpha_{h_{i}^{'}},\alpha_{h_{i}^{''}})$ and $\alpha_{h_{i}^{\star}}$
generate the same choice probabilities over the final actions $\{a_{1},...,a_{m},a_{m+1},...,a_{m+n}\}.$

For each agent $j$ in game $\mathcal{G}$, let $g_{j}:\times_{h\in\mathcal{H}_{-j}}\Delta(A_{h})\to\mathbb{R}_{+}$
be $j$'s prior prior density over $-j$'s 
strategies. Let $\hat{\mathcal{H}}_{i}$ represent $i$'s information
sets in $\hat{\mathcal{G}}$, and continue to use $\mathcal{H}_{j}$ for
$j$'s information sets in $\hat{\mathcal{G}}$ for agents $j\ne i.$ Let
$\hat{g}_{j}:\times_{h\in\mathcal{\hat{H}}_{-j}}\Delta(A_{h})\to\mathbb{R}_{+}$
be a density of $j$'s belief about $-j$'s play in $\hat{\mathcal{G}}$ such that (1) if $j\ne i,$
then $\hat{g}_{j}(\alpha_{h_{i}^{\star}},(\alpha_{h})_{h\in\mathcal{\hat{H}}_{-j}\backslash\{h_{i}^{\star}\}})=g_{j}(\phi^{-1}(\alpha_{h_{i}^{\star}}),(\alpha_{h})_{h\in\mathcal{H}_{-j}\backslash\{h_{i}',h_{i}''\}})/(\phi^{-1}(\alpha_{h_{i}^{\star}})(a_{\text{pass}}))^{n-1}$
for all strictly mixed actions $\alpha_{h_{i}^{\star}},(\alpha_{h})_{h\in\mathcal{\hat{H}}_{-j}\backslash\{h_{i}^{\star}\}}$;
(2) for $i,$ $\hat{g}_{i}=g_{i}$. That is, $\hat{g}_{j}$ is over
a different domain than $g_{j}$ since $i$ has one fewer information
set in $\hat{\mathcal{G}}$ than $\mathcal{G},$ but we identify each strictly
mixed $\alpha_{h_{i}^{\star}}$ in the domain of $\hat{g}_{j}$ with
$\phi^{-1}(\alpha_{h_{i}^{\star}})$ in the domain of $g_{j}$ and re-normalize appropriately. Note $g_{j}$ is strictly positive
on the interior and $0<\phi^{-1}(\alpha_{h_{i}^{\star}})(a_{\text{pass}})<\infty,$
so $\hat{g}_{j}$ is also strictly positive on the interior. This shows the constructed
prior $\hat{g}$ is non-doctrinaire.

By
the definition of $\phi$, each action in $\{a_{1},...,a_{m},a_{m+1},...,a_{m+n}\}$
has the same likelihood under $\alpha_{h_{i}^{\star}}$ and $\phi^{-1}(\alpha_{h_{i}^{\star}})$. Also, for every open set $E\subseteq\Delta^{\circ}(A_{h_{i}^{'}})\times\Delta^{\circ}(A_{h_{i}''})$,
the probability that $g_{j}$ assigns to $E$ is the same as the probability
that $\hat{g}_{j}$ assigns to $\phi(E)\subseteq\Delta^{\circ}(A_{h_{i}^{\star}}).$
 Note that for each $\alpha_{h_{i}^{'}}\in\Delta^{\circ}(A_{h_{i}^{'}})$,
the projection $E_{\alpha_{h_{i}^{'}}}:=\{\alpha_{h_{i}^{''}}\in\Delta^{\circ}(A_{h_{i}''}):(\alpha_{h_{i}^{'}},\alpha_{h_{i}^{''}})\in E\}$
can be viewed as a subset of $\Delta_{n}^{1}:=\{x_{m+1},...,x_{m+n-1}\ge0\text{ s.t. }x_{m+1}+...+x_{m+n-1}\le1\}\subseteq\mathbb{R}^{n-1}$.
On the other hand, the image of this projection $\phi(\{\alpha_{h_{i}^{'}}\}\times E_{\alpha_{h_{i}^{'}}})$
can be viewed as a subset of $\Delta_{n}^{\alpha_{h_{i}^{'}}(a_{\text{pass}})}:=\{x_{m+1},...,x_{m+n-1}\ge0\text{ s.t. }x_{m+1}+...+x_{m+n-1}\le\alpha_{h_{i}^{'}}(a_{\text{pass}})\}\subseteq\mathbb{R}^{n-1}$.
Both $\Delta_{n}^{1}$ and $\Delta_{n}^{\alpha_{h_{i}^{'}}(a_{\text{pass}})}$
are $n-1$ dimensional polytopes, and the latter's volume is $(\alpha_{h_{i}^{'}}(a_{\text{pass}}))^{n-1}$
that of the former. The normalizing factor $1/(\phi^{-1}(\alpha_{h_{i}^{\star}})(a_{\text{pass}}))^{n-1}$
ensures $g_j(E) = \hat{g}_j(\phi(E))$,

Combining the two observations in the previous paragraph, we see that
no matter which terminal node is observed, the posterior of $g_{j}$
will again assign the same probability to $E$ as the posterior of
$\hat{g}_{j}$ assigns to $\phi(E).$  This discussion shows that for any $0\le\delta,\gamma<1,$
the set of steady states with $\hat{g}$ in $\hat{\mathcal{G}}$ is the
same as the set of steady states with $g$ in $\mathcal{G}$.

Conversely, given a prior density $\hat{g}_{j}$ for every
agent $j$ in the game $\hat{\mathcal{G}}$, we can consider a prior density
$g_{j}$ in $\mathcal{G}$ where $g_{j}(\alpha_{h_{i}^{'}},\alpha_{h_{i}^{''}},(\alpha_{h})_{h\in\mathcal{H}_{-j}\backslash\{h_{i}',h_{i}''\}})=\hat{g}_{j}(\phi(\alpha_{h_{i}^{'}},\alpha_{h_{i}^{''}}),(\alpha_{h})_{h\in\mathcal{\hat{H}}_{-j}\backslash\{h_{i}^{\star}\}})\cdot(\alpha_{h_{i}^{'}}(a_{\text{pass}}))^{n-1}$
for $j\ne i.$ The same argument as above shows for any $0\le\delta,\gamma<1,$
the set of steady states with $g$ in $\mathcal{G}$ is the same as the
set of steady states with $\hat{g}$ in $\hat{\mathcal{G}}$.

\subsection{Patiently Stable Profiles for Figure \ref{fig:doubly_dom} Are Nash Equilibria}\label{sec:auxiliary_game_proof}

\begin{figure}[H]
    \centering
    \includegraphics[scale=.45]{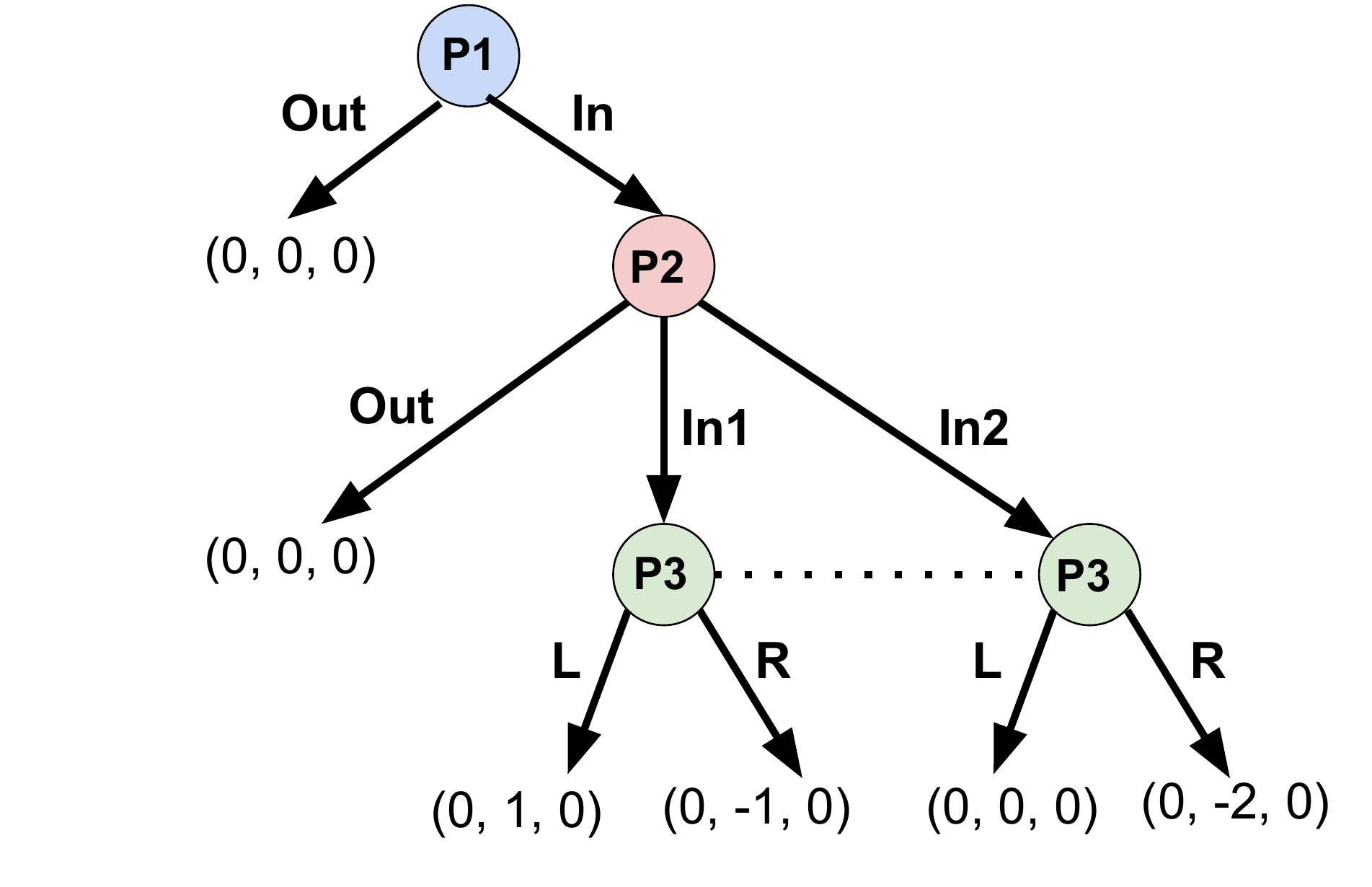}
    \caption{A game with fully revealing terminal node partitions in which the learning problem of a P2 agent with a given prior is identical to the learning problem of a P2 agent with the same prior in the Figure \ref{fig:doubly_dom} game.}
    \label{fig:learning_equivalent_doubly_dom}
\end{figure}
\begin{proof}
Consider the auxiliary game depicted in Figure \ref{fig:learning_equivalent_doubly_dom}, which modifies payoffs to make P1 and P3  indifferent between all terminal nodes, and ends the game immediately if P2 chooses \textbf{Out}. For any $0\le \delta, \gamma < 1$, any P1 or P3 policy used in the original steady state is an optimal policy for the corresponding agent in the auxiliary game, so any steady state profile $\pi^* \in \Pi^*(g,\delta,\gamma)$ for the original game is also a steady state profile of the auxiliary game. And P2 faces the same learning problem in the auxiliary game as in the original game, since each strategy profile gives them the same payoff and the same feedback in both games. But we know in every patiently stable profile in the auxiliary game, P2 must not have a profitable deviation, so the same must apply to the patiently stable profiles of the original game.  Likewise, we can construct auxiliary games for P1 and P3 to show that they must not have profitable deviations in patiently stable profiles of the original game.
\end{proof}

% The argument used for Figure \ref{fig:doubly_dom} can be used to show that patient stability selects only Nash equilibria whenever the game and feedback structure are such that, for each player role, there is an auxiliary game with a discrete terminal node partition that leads to the same learning problem for agents in that role. In Appendix ??, we show that this condition is satisfied for many games with terminal node partitions from \citet{Fudenberg2015} and \citet{fudenberg2021player}.\footnote{We c}

\subsection{Examples of Games with Terminal Node Partitions That Have Auxiliary
Games with Discrete Partitions}\label{sec:example_auxiliary}

The argument in the previous section can be used to show that patient stability selects only Nash equilibria whenever the game and feedback structure are such that, for each player role, there is an auxiliary game with a discrete terminal node partition that leads to the same learning problem for agents in that role. We give some examples of games from the previous literature that meet this condition below.

In \citet{Fudenberg2015}, Figure 1 Game B and Figure 3 present two games where three
players P1, P2, and P3 simultaneously choose actions, P2 and P3 always
see the terminal node, and P1 sees the terminal node when they choose
\textbf{In} but not when they choose \textbf{Out}. P1 always gets
0 payoff from choosing \textbf{Out}. For P2 and P3, consider the auxiliary
game where every player always observes the terminal node. This clearly
does not affect P2 and P3's learning problems. For P1, consider an
auxiliary game where P1 moves first and the game ends with P1 getting
0 payoff if P1 chooses \textbf{Out}. If P1 chooses \textbf{In}, then
P2 and P3 choose their actions simultaneously as before. All players
observe terminal nodes. P1's learning problem in the auxiliary game
is the same as in the original game. Thus for Figure
1 Game B and Figure 
3 with their original terminal node partitions,
patiently stable profiles are Nash equilibria.

Figure 5 of  \citet{Fudenberg2015} is a three-player game where P1, P2 and P3 simultaneously
choose \textbf{In} or \textbf{Out}. When P1 chooses \textbf{In}, they
learn P2 and P3's choices, but P1 does not learn how others play if
they choose \textbf{Out}. P2 always learns how P1 plays, but they
only learn how P3 plays if they choose \textbf{In} rather than \textbf{Out}.
Similarly, P3 always learns how P1 plays, but they only learn how
P2 plays if they choose \textbf{In} rather than \textbf{Out}. Players
who choose \textbf{Out} always get 0. For P1, consider the auxiliary
game where they move first and choose \textbf{In} or \textbf{Out}.
If they choose \textbf{Out}, the game ends with them getting 0. If
they choose \textbf{In}, then P2 and P3 simultaneously choose \textbf{In}
or \textbf{Ou}t. All players observe terminal nodes. This is the same
learning problem as in the original game for P1. Next, consider the
auxiliary game where P1 and P2 move simultaneously at the start of
the game. If P2 chooses \textbf{Out}, the game ends with P2 getting
0. If P2 chooses \textbf{In}, then P3 chooses between \textbf{In}
or \textbf{Out} without knowing P1's choice. All players observe terminal
nodes. This is the same learning problem as in the original game for
P2, because the terminal node always reveals P1's play, even when
P2 chooses \textbf{Out}. But if P2 chooses \textbf{Out}, then the
terminal node does not show what P3 would have played. Similarly,
we can construct an analogous auxiliary game for P3. This shows that
for the game in Figure 5, patiently stable profiles are Nash equilibria.

Section 5.1.1 of \citet{fudenberg2021player} studies the ``restaurant game'' where three
players P1, P2, and P3 move simultaneously. P1 is a restaurant that
chooses between \textbf{high} and \textbf{low} ingredient qualities,
while P2 and P3 are two potential customers who decide whether to
go to the restaurant (\textbf{In}) or eat at home (\textbf{Out}).
P1 always sees P2 and P3's choices. P2 sees how others play if they
choose \textbf{In}, but not if they choose \textbf{Out}. Similarly,
P3 sees how others play if they choose \textbf{In}, but not if they
choose \textbf{Out}. Choosing \textbf{Out} always gives 0 payoff.
For P1, the auxiliary game where everyone sees the terminal node does
not affect their learning problem. For P2, consider the auxiliary
game where they move first, choosing between \textbf{In} and \textbf{Out}.
If they choose \textbf{Out}, the game ends with payoff 0 for them.
If they choose \textbf{In}, then P1 and P3 move simultaneously. All
players observe terminal nodes. This auxiliary game presents the same
learning problem for P2 as in the original game. Similarly, there
is an analogous auxiliary game with discrete terminal node partitions
that preserves P3's learning problem, so patiently stable profiles
are Nash equilibria in this game. 
\subsection{Proof of Proposition \ref{Dominated Action Stability General Result}}
\label{Proof of Proposition Dominated Action Stability General Result}

We first formally define supportive priors that are used to facilitate the proof.

\begin{definition}
Priors $g_{1}$ and $g_{2}$ are \textbf{\emph{supportive}} priors for $\pi^{*}$ if, for every off-path P2 information set $h_{2}$, (1) $\mathbb{E}_{g_{1}}[u_{1}(a_{1}^{*}(h_{2}),a_{2}(h_{2}))|y_{1}] \geq \mathbb{E}_{g_{1}}[u_{1}(a_{1},a_{2}(h_{2}))|y_{1}]$ for all $a_{1} \in \rho(h_{2})$ and P1 histories $y_{1}$ that have never recorded a P2 agent play some action other than $a_{2}^{*}(h_{2})$ at $h_{2}$,
and (2) $\mathbb{E}_{g_{2}}[u_{2}(a_{1},a_{2}^{*}(h_{2}))|y_{2},h_{2}] \geq \mathbb{E}_{g_{2}}[u_{2}(a_{1},a_{2}))|y_{2},h_{2}]$ for all $a_{2} \in \mathcal{A}_{2}(h_{2})$ and histories $y_{2}$ that have never recorded a P1 agent play any $a_{1}\in\rho(h_{2})\backslash\{a_{1}^{*}(h_{2})\}$. 
\label{Supportive Prior Definition 1}
\end{definition}

\begin{proof}[Proof of Proposition \ref{Dominated Action Stability General Result}]

Throughout this proof, we think of P1 actions that end the game as leading to singleton P2 information sets where P2 only has one action. Denote the P2 information set reached when P1 plays $a_{1}^{*}$ with $h_{2}^{*}$, and use $\mathcal{H}_{2}^{off} = \mathcal{H}_{2} \setminus \{h_{2}^{*}\}$ to denote the set of P2 information sets that are off-path under $\pi^{*}$. Let $\widetilde{\Pi}_{1} = \{\pi_{1} \in \Pi_{1} : \forall h_{2} \in \mathcal{H}_{2}^{off}, ~ \pi_{1}(a_{1}) = 0 ~ \forall a_{1} \in \rho(h_{2}) \setminus \{a_{1}^{*}(h_{2})\}\}$ be the set of P1 behavior strategies that, for every $h_{2} \in \mathcal{H}_{2}^{off}$, put probability $0$ on any action in $\rho(h_{2})$ that is not $a_{1}^{*}(\alpha_{1})$. Further, let $\widetilde{\Pi}_{2} = \{\pi_{2} \in \Pi_{2} : \forall h_{2} \in \mathcal{H}_{2}^{off}, ~ \pi_{2}(a_{2}^{*}(h_{2})|h_{2}) = 1\}$ be the set of P2 behavior strategies which respond with $a_{2}^{*}(h_{2})$ at any off-path information set $h_{2}$. Throughout the proof, we restrict attention to strategy profiles $\pi \in \widetilde{\Pi}_{1} \times \widetilde{\Pi}_{2}$.

By continuity, there is an $\eta > 0$ such that (1) for any $\pi_{1} \in \Pi_{1}$ satisfying $\pi_{1}(a_{1}^{*}) \geq 1 - \eta$, the unique optimal action for P2 to play at $h_{2}^{*}$ is $a_{2}^{*}(h_{2}^{*})$, and (2) for any $\pi_{2} \in \widetilde{\Pi}_{2}$ for which $\pi_{2}(a_{2}^{*}(h_{2}^{*})|h_{2}^{*}) \geq 1 - \eta$, the unique P1 best response is $a_{1}^{*}$. We focus on steady state profiles in which the aggregate probabilities that P1 plays $a_{1}^{*}$ and that P2 plays $a_{2}^{*}(h_{2}^{*})$ at $h_{2}^{*}$ both exceed $1 - \eta$. We argue that such steady state profiles exist in the limit, and that the corresponding aggregate probabilities that P1 plays $a_{1}^{*}$ and P2 plays $a_{2}^{*}(h_{2})$ in response to any information set $h_{2}$ converge to $1$.

Let $\xi: \widetilde{\Pi}_{1} \rightarrow \widetilde{\Pi}_{1}$ be the continuous mapping given by
\begin{equation*}
\xi(\pi_{1})(a_{1}) =
\begin{cases}
\max\{\pi_{1}(a_{1}^{*}),1 - \eta\} & \text{if } a_{1} = a_{1}^{*} \\
\left(1 - \mathbbm{1}(\pi_{1}(a_{1}^{*}) < 1 - \eta) \frac{1 - \eta - \pi_{1}(a_{1}^{*})}{1 - \pi_{1}(a_{1}^{*})}\right) \pi_{1}(a_{1}) & \text{if } a_{1} \neq a_{1}^{*}
\end{cases} .
\end{equation*}
This function transforms each $\pi_{1}$ into a P1 behavior strategy that puts probability at least $1 - \eta$ on $a_{1}^{*}$ and satisfies $\xi(\pi_{1}) = \pi_{1}$ whenever $\pi_{1}(a_{1}^{*}) \geq 1 - \eta$. Similarly, let $\phi: \widetilde{\Pi}_{2} \rightarrow \widetilde{\Pi}_{2}$ be the continuous mapping such that
\begin{equation*}
\phi(\pi_{2})(a_{2}|h_{2}^{*}) =
\begin{cases}
\max\{\pi_{2}(a_{2}^{*}(h_{2}^{*})|h_{2}^{*}),1 - \eta\} & \text{if } a_{2} = a_{2}^{*}(h_{2}^{*}) \\
\left(1 - \mathbbm{1}(\pi_{2}(a_{2}^{*}(h_{2}^{*})|h_{2}^{*}) < 1 - \eta) \frac{1 - \eta - \pi_{2}(a_{2}^{*}(h_{2}^{*})|h_{2}^{*})}{1 - \pi_{2}(a_{2}^{*}(h_{2}^{*})|h_{2}^{*})}\right) \pi_{2}(a_{2}|h_{2}^{*}) & \text{if } a_{2} \neq a_{2}^{*}(h_{2}^{*})
\end{cases} .
\end{equation*}
This takes each $\pi_{2}$ into a P2 behavior strategy that uses $a_{2}^{*}(h_{2}^{*})$ at $h_{2}^{*}$ with probability at least $1 - \eta$. Note that $\phi$ coincides with the identity mapping whenever $\pi_{2}(a_{2}^{*}(h_{2}^{*})|h_{2}^{*}) \geq 1 - \eta$.

Since $g_{1}$ is supportive of $\pi^{*}$, for any $\pi_{2} \in \widetilde{\Pi}_{2}$, $\mathscr{R}_{1}^{\delta,\gamma} (\pi_{2})(a_{1}) = 0$ for all $a_{1} \in \rho(h_{2})$ for all $h_{2} \in \mathcal{H}_{2}^{off}$. This means that $\mathscr{R}_{1}^{\delta,\gamma}(\pi_{2}) \in \widetilde{\Pi}_{1}$ for all $\pi_{2} \in \widetilde{\Pi}_{2}$. Likewise, since $g_{2}$ is supportive of $\pi^{*}$, for any $\pi_{1} \in \widetilde{\Pi}_{1}$, $\mathscr{R}_{2}^{\delta, \gamma}(\pi_{1})(a_{2}|h_{2}) = 0$ for all $a_{2} \neq a_{2}^{*}(h_{2})$ for all $h_{2} \in \mathcal{H}_{2}^{off}$. Thus, $\mathscr{R}_{2}^{\delta, \gamma}(\pi_{1}) \in \widetilde{\Pi}_{2}$ for all $\pi_{1} \in \widetilde{\Pi}_{1}$. Consequently, $\mathscr{R}^{\delta,\gamma}$ maps $\widetilde{\Pi}_{1} \times \widetilde{\Pi}_{2}$ into itself regardless of $\delta,\gamma \in [0,1)$, so the mapping $\widetilde{\mathscr{R}}^{\delta,\gamma}: \widetilde{\Pi}_{1} \times \widetilde{\Pi}_{2} \rightarrow \widetilde{\Pi}_{1} \times \widetilde{\Pi}_{2}$ given by $\widetilde{\mathscr{R}}^{\delta,\gamma}(\pi_{1},\pi_{2}) = (\xi(\mathscr{R}_{1}^{\delta,\gamma}(\pi_{2})), \phi(\mathscr{R}_{2}^{\delta, \gamma}(\pi_{1})))$ is well-defined. Since this mapping is continuous, Brouwer's fixed point theorem guarantees the existence of a fixed point $\pi^{\delta,\gamma} = (\pi_{1}^{\delta,\gamma},\pi_{2}^{\delta,\gamma})$ for any $\delta,\gamma \in [0,1)$.

Consider some arbitrary collection of parameter sequences $\{\delta_{j}\}_{j \in \mathbb{N}}$, $\{\gamma_{j,k}\}_{j,k \in \mathbb{N}}$ such that $\lim_{j \rightarrow \infty} \delta_{j} = 1$, $ \lim_{k \rightarrow \infty} \gamma_{j,k} = 1$ for all $j \in \mathbb{N}$, and $\lim_{j \rightarrow \infty} \lim_{k \rightarrow \infty} \pi^{\delta_{j},\gamma_{j,k}} = \hat{\pi}$ for some $\hat{\pi} \in \widetilde{\Pi}_{1} \times \widetilde{\Pi}_{2}$. Using the fact that the unique P1 best response is $a_{1}^{*}$ to any $\pi_{2} \in \widetilde{\Pi}_{2}$, $\lim_{j \rightarrow \infty} \lim_{k \rightarrow \infty} \mathscr{R}_{1}^{\delta_{j},\gamma_{j,k}}(\pi_{2}^{\delta_{j},\gamma_{j,k}})(a_{1}^{*}) = 1$ can be shown using a similar auxiliary game argument to the one given when arguing that patient stability selects Nash equilibria in the Figure \ref{fig:doubly_dom} game. Thus, $\lim_{j \rightarrow \infty} \lim_{k \rightarrow \infty} \mathscr{R}_{1}^{\delta_{j},\gamma_{j,k}}(\pi_{2}^{\delta_{j},\gamma_{j,k}}) =  \pi_{1}^{*}$. As $\xi(\pi_{1}) = \pi_{1}$ if $\pi_{1}(a_{1}^{*}) \geq 1 - \eta$, it follows that $\mathscr{R}_{1}^{\delta_{j},\gamma_{j,k}}(\pi_{2}^{\delta_{j},\gamma_{j,k}}) = \pi_{1}^{\delta_{j},\gamma_{j,k}}$ whenever $j$ is sufficiently large and $k$ is sufficiently large given $j$. Similarly, since $a_{2}^{*}(h_{2}^{*})$ is the uniquely optimal action to use at $h_{2}^{*}$ given any $\pi_{1} \in \widetilde{\Pi}_{1}$, $\lim_{j \rightarrow \infty} \lim_{k \rightarrow \infty} \mathscr{R}_{2}^{\gamma_{j,k}}(\pi_{1}^{\delta_{j},\gamma_{j,k}})(a_{2}^{*}(h_{2}^{*})|h_{2}^{*}) = 1$ must hold. This means that $\lim_{j \rightarrow \infty} \lim_{k \rightarrow \infty} \mathscr{R}_{2}^{\gamma_{j,k}}(\pi_{1}^{\delta_{j},\gamma_{j,k}}) = \pi_{2}^{*}$. Moreover, as $\phi(\pi_{2}) = \pi_{2}$ if $\pi_{2}(a_{2}^{*}(h_{2}^{*})|h_{2}^{*}) \geq 1 - \eta$, it follows that $\mathscr{R}_{2}^{\delta_{j},\gamma_{j,k}}(\pi_{1}^{\delta_{j},\gamma_{j,k}}) = \pi_{2}^{\delta_{j},\gamma_{j,k}}$ whenever $j$ is sufficiently large and $k$ is sufficiently large given $j$. Collecting these findings reveals that $\pi^{\delta_{j},\gamma_{j,k}}$ is a fixed point of the aggregate response mapping $\mathscr{R}^{\delta_{j},\gamma_{j,k}}$, and thus a steady state profile by Proposition \ref{Steady State Existence Result}, whenever $j$ is sufficiently large and $k$ is sufficiently large given $j$. Since $\lim_{j \rightarrow \infty} \lim_{k \rightarrow \infty} \pi^{\delta_{j},\gamma_{j,k}} = \pi^{*}$, we conclude that $\pi^{*}$ is stable. \end{proof}

\subsection{Proof of Proposition \ref{Relatively Experienced P1 Stability Result}}
\label{Proof of Proposition Relatively Experienced P1 Stability Result}

We first establish three lemmas. 

\begin{lemma}
Fix $\delta \in [0,1)$ and a non-doctrinaire P1 prior $g_{1}$. For each $\gamma \in [0,1)$, fix a P1 policy that is optimal given $g_{1}$ and never prescribes \textbf{In} after it has previously prescribed \textbf{Out}. There is some $\kappa \in \mathbb{R}_{+}$ such that, for arbitrary $\gamma \in [0,1)$, when the aggregate P3 strategy puts probability $\pi_{3}^{\delta,\gamma}(\textbf{L}) \leq 1/4$ on $\textbf{L}$, the aggregate P1 strategy satisfies $\pi_{1}^{\delta,\gamma}(\textbf{In}) / (1 - \gamma) \leq \kappa$.
\label{P1 In Probability Bound Result}
\end{lemma}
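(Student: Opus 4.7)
The plan is to reduce the claim to a uniform-in-$\gamma$ bound on a stopping time. Since the fixed policy never prescribes \textbf{In} after once prescribing \textbf{Out}, let $T$ denote the random age at which the agent first plays \textbf{Out} (so $T=\infty$ means the agent always plays \textbf{In}). A P1 agent of age $t$ then plays \textbf{In} if and only if $T>t$, and because the steady-state geometric age distribution puts mass $(1-\gamma)\gamma^t$ on age $t$,
\[
\pi_1^{\delta,\gamma}(\textbf{In}) \;=\; (1-\gamma)\sum_{t=0}^\infty \gamma^t \Pr(T>t) \;\le\; (1-\gamma)\,\mathbb{E}[T].
\]
It suffices therefore to find some $\kappa$ with $\mathbb{E}[T]\le\kappa$ for every $\gamma\in[0,1)$ and every steady state in which $\pi_3^{\delta,\gamma}(\textbf{L})\le 1/4$, where the expectation is under the true data-generating process induced by $(\pi_2^{\delta,\gamma},\pi_3^{\delta,\gamma})$ and the fixed optimal policy.

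To bound $\mathbb{E}[T]$, I would combine a bounded-option-value argument with posterior concentration. First, because the effective discount $\delta\gamma\le\delta<1$, the continuation value of any feasible future policy is uniformly bounded by $M/(1-\delta)$ with $M=\max_z|u_1(z)|$; in particular the ``option value'' of playing \textbf{In} one more time and then continuing optimally is bounded by $\delta M/(1-\delta)$. Second, the payoffs in Figure \ref{fig:doubly_dom} are such that when $\pi_3(\textbf{L})\le 1/4$, the true expected one-period payoff from \textbf{In} against any $\pi_2$ is at most $-c$ for some $c>0$ intrinsic to the game---this is the reason the threshold $1/4$ was chosen. Hence by the Bellman equation, once the agent's posterior is concentrated enough that $\mathbb{E}_b[u_1(\textbf{In},\cdot)]$ is within $c/2$ of its true value, continuing to play \textbf{In} is no longer optimal and the agent must switch to \textbf{Out}.

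Finally, I would translate ``concentrated enough'' into an age bound via a Diaconis--Freedman-style argument in the spirit of \citet{clark2021justified}. Because $g_1$ is non-doctrinaire, the posterior on the aggregate opponent behavior concentrates around the truth exponentially fast: after $n$ informative periods the probability that the posterior assigns to an $\varepsilon$-neighborhood of the truth is at least $1-Ce^{-\alpha n}$, with $C$ and $\alpha$ depending only on $g_1$, $\varepsilon$, and the payoff range---in particular not on $\gamma$. Combined with the previous paragraph this yields a finite $N=N(\delta,g_1)$ such that $\Pr(T>N+n)\le Ce^{-\alpha n}$ for all $n$, so $\mathbb{E}[T]\le N+C/\alpha=:\kappa$. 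The main obstacle I anticipate is the partial observability of P3: a P1 agent playing \textbf{In} only sees P3's move when the matched P2 chooses \textbf{In1} or \textbf{In2}, so when $\pi_2^{\delta,\gamma}(\textbf{Out})$ is close to $1$ direct information about $\pi_3$ arrives slowly. I would resolve this by noting that what matters for optimality is the posterior on the \emph{joint} distribution $(\pi_2,\pi_3)$ as it enters the expected \textbf{In} payoff: the event ``matched P2 plays \textbf{Out}'' is observed every \textbf{In}-period, so $\pi_2^{\delta,\gamma}(\textbf{Out})$ is learned at the usual exponential rate, and once it is known to be large the expected \textbf{In} payoff is governed by the known $u_1(\textbf{In},\textbf{Out},\cdot)$, which by the game's payoffs is already low enough to trigger the switch.
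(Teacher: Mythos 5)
Your reduction to a uniform bound on $\mathbb{E}[T]$ is sound and matches the skeleton of the paper's proof, which bounds $\Pr(T>t)$ by $1$ for $t<N$ and by a Hoeffding tail $e^{-ct}$ for $t\ge N$ and then sums against the geometric age distribution. The gap is in the step ``Hence by the Bellman equation, once the agent's posterior is concentrated enough that $\mathbb{E}_b[u_1(\textbf{In},\cdot)]$ is within $c/2$ of its true value, continuing to play \textbf{In} is no longer optimal.'' That inference does not follow: a one-period expected loss of $c/2$ does not rule out playing \textbf{In} for its information value, and the only bound you offer on that option value is $\delta M/(1-\delta)$, a fixed constant that in general dwarfs $c/2$. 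Closeness of the posterior \emph{mean} payoff to the truth is the wrong sufficient condition; what is needed is that the posterior puts enough \emph{mass} on a set of opponent profiles on which experimentation is worthless.

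That is exactly how the paper closes the argument. By Diaconis--Freedman, an agent with at least $N$ \textbf{In}-observations whose empirical frequency of \textbf{R} is at least $2/3$ assigns posterior probability $p>3/(4-\delta)$ to the set of opponent profiles on which the true \textbf{R}-probability is at least $2/3$; on that set the per-period payoff of \textbf{In} is at most $-1/3$ \emph{and} the optimal continuation value is at most $0$ (play \textbf{Out} forever). Hence the total normalized value of \textbf{In} is at most $(1-\delta)(1-4p/3)+\delta(1-p)$, which is negative precisely when $p>3/(4-\delta)$ --- a fixed threshold bounded away from $1$ since $\delta$ is fixed, so the required $N$ is finite and independent of $\gamma$. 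Your proposal contains the right ingredient (your Diaconis--Freedman statement is about posterior mass, not just the mean), but the value comparison must be run conditional on the bad set rather than against the crude $\delta M/(1-\delta)$ option-value bound. Two smaller points: your worry about partial observability is moot, since in this game P1 observes the terminal node, hence the matched P3's action, in every period they play \textbf{In}, regardless of P2's choice; and your assertion that $\pi_3(\textbf{L})\le 1/4$ makes \textbf{In} yield at most $-c$ against any $\pi_2$ is consistent with what the paper uses (with $c=1/3$), the slack between $1/4$ and the empirical threshold $1/3$ being what the Hoeffding step consumes.
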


\begin{proof}

We first establish that there is some $N \in \mathbb{N}$ such that all P1 agents who have lived at $t \geq N$ periods and have been matched with P3 agents that would play \textbf{L} fewer than $t/3$ periods would play \textbf{Out}. By Theorem 4.2 of \citet{Diaconis1990}, there is an $N \in \mathbb{N}$ such that a P1 agent who has played \textbf{In} at least $N$ times and for whom, when they have played \textbf{In}, the share of times they have observed their P3 opponent play \textbf{R} is at least $2/3$,  will put probability at least $3 / (4 - \delta)$ on the true probability with which a randomly selected P3 agent plays \textbf{R} being weakly more than $2/3$. Such an agent thus puts at least probability $3 / (4 - \delta)$ on aggregate opponent behavior strategy profiles for which the expected payoff from playing \textbf{In} is no more than $-1/3$. This $N$ satisfies the desired properties given at the beginning of the paragraph. To see this, consider a P1 agent who has lived at least $N$ periods and for whom the fraction of time periods where they were matched with a P3 agent that would play \textbf{L} that periods is less than $1/3$. Then, either that agent has played \textbf{Out} in the past, in which case they will again play \textbf{Out}, or that agent has actually observed the consequences of playing \textbf{In} at least $N$ times. Restricting attention to the latter case, the agent must have a posterior belief that puts probability $p > 3 / (4 - \delta)$ on aggregate opponent behavior strategy profiles for which the expected payoff from playing \textbf{In} is no more than $-1/3$. An upper bound on the agent's expected discounted future lifetime payoff from playing \textbf{In} is $(1 - \delta) (1 - 4 p / 3) + \delta (1 - p)$, since the expected current period payoff to playing \textbf{In} is weakly less than $p(-1/3) + 1 - p = 1 - 4 p / 3$ and the agent's continuation payoff is bounded above by $\delta(1 - p)$, since P1's maximum payoff is $1$. As $p > 3 / (4 - \delta)$, it follows that $(1 - \delta) (1 - 4 p / 3) + \delta (1 - p) < 0$, so such an agent must play \textbf{Out}.

We now combine this fact with Hoeffding's inequality to derive the desired constraint on the P1 aggregate strategy. By Hoeffding's inequality, there is some $c > 0$ such that, for any aggregate P3 strategy satisfying $\pi_{3}^{\delta,\gamma}(\textbf{L}) \leq 1/4$, the share of P1 agents who have lived $n$ periods and for whom the fraction of time periods where they were matched with a P3 agent that would play \textbf{L} is less than $1/3$ is at least $1 - e^{- c n}$. Thus, we have that
\begin{equation*}
\begin{split}
\frac{\pi_{1}^{\delta,\gamma}(\textbf{In})}{1 - \gamma} & \leq \frac{1}{1 - \gamma} \left(1 - \gamma^{N} + \sum_{t = N}^{\infty} (1 - \gamma) \gamma^{t} e^{- c t}\right) \\
& = \frac{1 - \gamma^{N}}{1 - \gamma} + \frac{\gamma^{N} e^{- c N}}{1 - \gamma e^{-c}} .
\end{split}
\end{equation*}
Observe that the right-hand side of the inequality converges to $N + 1 / (e^{c N} - e^{c (N - 1)})$ as $\gamma \rightarrow 1$. Since $\pi_{1}^{\delta,\gamma}(\textbf{In}) / (1 - \gamma)$ can never be more than $1 / (1 - \gamma)$, it follows that $\pi_{1}^{\delta,\gamma}(\textbf{In}) / (1 - \gamma)$ must be uniformly bounded from above by some $\kappa \in \mathbb{R}_{+}$.
\end{proof}

\begin{lemma}
Fix $\delta \in [0,1)$ and a non-doctrinaire P2 prior $g_{2}$ under which the expected probability of $\textbf{L}$ is strictly less than $1/2$. Consider a sequence of steady states such that the probability of \textbf{In} under the aggregate P1 strategy satisfies $\pi_{1}^{\delta,\gamma}(\textbf{In}) / (1 - \gamma) \leq \kappa$ for all $\gamma \in [0,1)$ for some $\kappa \in \mathbb{R}_{+}$. Then $\lim_{\gamma \rightarrow 1} \pi_{2}^{\delta,\gamma}(\textbf{Out}) = 1$.
\label{P2 Out Probability Bound Result}
\end{lemma}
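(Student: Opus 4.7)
\textbf{Proof plan for Lemma \ref{P2 Out Probability Bound Result}.}

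My plan is to establish something stronger than the stated conclusion: for all $\gamma$ sufficiently close to $1$, every optimal P2 policy plays $\textbf{Out}$ at every reached decision node, so $\pi_{2}^{\delta,\gamma}(\textbf{Out})=1$ exactly. The key tension is that P2's information set is reached with probability only $\pi_{1}^{\delta,\gamma}(\textbf{In})\le\kappa(1-\gamma)$ per period, and since $\delta<1$ is held fixed the discounted expected number of future opportunities to act on any information vanishes, while the myopic cost of experimenting with $\textbf{In1}$ at the prior stays bounded below by a positive constant.

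I would first dispose of $\textbf{In2}$: it is doubly dominated by $\textbf{In1}$, being strictly payoff-inferior at every terminal node and informationally identical through P2's terminal node partition. Swapping $\textbf{In1}$ for $\textbf{In2}$ at any history strictly raises expected continuation payoff, so no optimal P2 policy uses $\textbf{In2}$, and the only live comparison at each reached decision node is $\textbf{Out}$ versus $\textbf{In1}$.

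Next I would bound the value function. Let $W(h)$ be the expected discounted continuation payoff of the optimal P2 policy starting from history $h$, and let $M$ bound absolute per-period payoffs. Because P2 collects a nonzero payoff only when reached and $\pi_{1}^{\delta,\gamma}(\textbf{In})\le\kappa(1-\gamma)$,
\[
 0\;\le\;W(h)\;\le\;\sum_{t=0}^{\infty}(\delta\gamma)^{t}\,\pi_{1}^{\delta,\gamma}(\textbf{In})\,M\;\le\;\frac{\kappa(1-\gamma)M}{1-\delta\gamma},
\]
which tends to $0$ as $\gamma\to 1$ for fixed $\delta<1$ (lower bound because the always-$\textbf{Out}$ policy achieves $0$). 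Writing $p(h)$ for P2's posterior on $\pi_{3}(\textbf{L})$, the value of $\textbf{In1}$ at $h$ minus the value of $\textbf{Out}$ at $h$ equals the myopic one-shot gap $\mathbb{E}_{p(h)}[u_{2}(\textbf{In1})-u_{2}(\textbf{Out})]$ plus a continuation difference bounded in absolute value by $\kappa(1-\gamma)M/(1-\delta\gamma)$, which is $o(1)$ uniformly in $h$ as $\gamma\to 1$.

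Finally I would close an induction on histories. At the empty history, the hypothesis $\mathbb{E}_{g_{2}}[\pi_{3}(\textbf{L})]<1/2$ combined with the payoffs of Figure \ref{fig:doubly_dom} (whose indifference threshold between $\textbf{Out}$ and $\textbf{In1}$ is exactly $\pi_{3}(\textbf{L})=1/2$) yields a strictly negative myopic gap $\mathbb{E}_{g_{2}}[u_{2}(\textbf{In1})-u_{2}(\textbf{Out})]\le-c$ for some $c>0$. For $\gamma$ large enough that the continuation bound above is below $c$, $\textbf{Out}$ is strictly optimal at the prior. Because both $\textbf{Out}$ and ``not reached'' reveal nothing about $\pi_{3}$ through P2's terminal node partition, a P2 agent who plays $\textbf{Out}$ at birth retains the prior belief forever, and the same strict inequality shows $\textbf{Out}$ remains uniquely optimal at every subsequent decision node. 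Hence every P2 agent always plays $\textbf{Out}$, so $\pi_{2}^{\delta,\gamma}(\textbf{Out})=1$ for $\gamma$ sufficiently close to $1$. The main obstacle I anticipate is extracting the strict positive gap $c$ rigorously from the prior hypothesis, but this should follow from the specific payoffs in Figure \ref{fig:doubly_dom}.
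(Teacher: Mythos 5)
There is a genuine gap, and it is the central one. Your value-function bound
\[
0\le W(h)\le\sum_{t=0}^{\infty}(\delta\gamma)^{t}\,\pi_{1}^{\delta,\gamma}(\textbf{In})\,M\le\frac{\kappa(1-\gamma)M}{1-\delta\gamma}
\]
uses the \emph{objective} steady-state probability that P2's node is reached. But a P2 agent's policy is determined by their \emph{subjective} expected continuation payoff, computed under their posterior over $\pi_{1}(\textbf{In})$ --- and that posterior is pinned down by the non-doctrinaire prior $g_{2}$ plus the agent's own observations, not by the true value $\pi_{1}^{\delta,\gamma}(\textbf{In})\le\kappa(1-\gamma)$. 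A newborn P2 agent may assign substantial prior probability to $\pi_{1}(\textbf{In})$ being large (and, conditional on actually being reached, updates further in that direction), so their subjective continuation value of experimenting with $\textbf{In1}$ is on the order of $\delta\,\mathbb{E}_{g_{2}}[\pi_{1}(\textbf{In})]\,M/(1-\delta\gamma)$, which does \emph{not} vanish as $\gamma\to1$ for fixed $\delta$. Consequently your claimed strengthening --- that every P2 agent plays $\textbf{Out}$, so $\pi_{2}^{\delta,\gamma}(\textbf{Out})=1$ exactly --- is false in general: young P2 agents who happen to be reached early in life may rationally experiment. Your induction ``the same strict inequality shows $\textbf{Out}$ remains uniquely optimal at every subsequent decision node'' inherits the same problem, since the relevant belief for the experimentation decision is the belief about $\pi_{1}$, which evolves with every period's observation of whether P1 entered.

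The paper's proof is built precisely around the learning step you skip. It uses the \citet{Diaconis1990} concentration result to produce an $N_{0}$ (and inductively a sequence $N_{j}$) such that a P2 agent who has observed at least $N_{j}$ periods with sufficiently few instances of P1 playing $\textbf{In}$ has a posterior expected reach probability below $(1-\delta)\epsilon/\delta$, at which point the subjective continuation value is dominated by the myopic loss and $\textbf{Out}$ becomes optimal. It then does a counting argument over the population: using $\pi_{1}^{\delta,\gamma}(\textbf{In})/(1-\gamma)\le\kappa$, the mass of P2 agents who are either too young or have seen too many entries to have reached this conclusion vanishes as $\gamma\to1$, giving $\lim_{\gamma\to1}\pi_{2}^{\delta,\gamma}(\textbf{Out})=1$ as a limit rather than an identity. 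Your treatment of $\textbf{In2}$ via double dominance and your identification of the myopic gap from $\mathbb{E}_{g_{2}}[\pi_{3}(\textbf{L})]<1/2$ are both fine; what is missing is the argument that agents must \emph{learn} that their information set is rarely reached before they stop valuing experimentation, together with the demographic accounting of how many agents have done so.
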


\begin{proof}

Fix an $\epsilon > 0$ such that the expected probability of $\textbf{L}$ under $g_{2}$ is weakly less than $(1 - \epsilon) / 2$. We first establish that there is some $N_{0} \in \mathbb{N}$ such that, regardless of $\gamma \in [0,1)$, every P2 agent who has lived at least $N_{0}$ periods and has never observed a P1 agent play \textbf{In} will play \textbf{Out}. Theorem 4.2 of \citet{Diaconis1990} implies that there is an $N \in \mathbb{N}$ such that, under the posterior belief of a P2 agent who has at least $N$ observations of P1 agents playing \textbf{Out} and no observations of a P1 agent playing \textbf{In}, the expected value of the probability with which a randomly selected P1 agent will play \textbf{In}, $\nu$, is strictly less than $(1 - \delta) \epsilon / \delta$. This $N$ satisfies the desired properties given at the beginning of the paragraph. To see this, consider a P2 agent who has lived at least $N$ periods and has never observed a P1 agent play \textbf{In}. An upper bound on the agent's expected discounted future lifetime payoff from playing \textbf{In1} is $- (1 - \delta) \epsilon + \delta \nu$, which is strictly negative since $\nu < (1 - \delta) \epsilon / \delta$, so such an agent must play \textbf{Out}.

Inductively applying similar arguments to the one given above shows that there is a sequence of $\{N_{j}\}_{j \in \mathbb{N}} \subseteq \mathbb{N}$ such that the following holds. For all $\gamma \in [0,1)$ and $j \in \mathbb{N}$, every P2 agent who has lived at least $N_{j}$ periods, has at most $j$ observations of P1 agents playing \textbf{In}, and witnessed no P1 agents playing \textbf{In} in their first $N_{j}$ observations will play \textbf{Out}. Observe that, when the probability of a randomly selected P1 agent playing \textbf{In} is $\pi_{1}(\textbf{In})$, the share of P2 agents who have lived at least $N_{j} + j$ periods and have exactly $j$ observations of P1 agents playing \textbf{In}, all of which came after their first $N_{j}$ periods, is
\begin{equation*}
\begin{split}
& \sum_{t = N_{j} + j}^{\infty} (1 - \gamma) \gamma^{t} \frac{(t - N_{j})!}{(t - N_{j} - j)! j!} \pi_{1}^{\delta,\gamma}(\textbf{In})^{j} \left(1 - \pi_{1}^{\delta,\gamma}(\textbf{In})\right)^{t - j} \\
= & \gamma^{N_{j} + j} \left(1 - \pi_{1}^{\delta,\gamma}(\textbf{In})\right)^{N_{j}} \frac{\pi_{1}^{\delta,\gamma}(\textbf{In})^{j}}{\left(1 - \gamma + \gamma \pi_{1}^{\delta,\gamma}(\textbf{In})\right)^{j + 1}} \\
= & \gamma^{N_{j} + j} \left(1 - \pi_{1}^{\delta,\gamma}(\textbf{In})\right)^{N_{j}} \left(\frac{1 + \frac{\pi_{1}^{\delta,\gamma}(\textbf{In})}{1 - \gamma}}{1 + \gamma\frac{\pi_{1}^{\delta,\gamma}(\textbf{In})}{1 - \gamma}}\right)^{j + 1} \frac{\left(\frac{\pi_{1}^{\delta,\gamma}(\textbf{In})}{1 - \gamma}\right)^{j}}{\left(1 + \frac{\pi_{1}^{\delta,\gamma}(\textbf{In})}{1 - \gamma}\right)^{j + 1}} .
\end{split}
\end{equation*}
It thus follows that, for a given $\gamma \in [0,1)$ and steady-state strategy profile,
\begin{equation*}
\begin{split}
\pi_{2}^{\delta,\gamma}(\textbf{Out}) & \geq \sum_{j = 0}^{\infty} \gamma^{N_{j} + j} \left(1 - \pi_{1}^{\delta,\gamma}(\textbf{In})\right)^{N_{j}} \left(\frac{1 + \frac{\pi_{1}^{\delta,\gamma}(\textbf{In})}{1 - \gamma}}{1 + \gamma\frac{\pi_{1}^{\delta,\gamma}(\textbf{In})}{1 - \gamma}}\right)^{j + 1} \frac{\left(\frac{\pi_{1}^{\delta,\gamma}(\textbf{In})}{1 - \gamma}\right)^{j}}{\left(1 + \frac{\pi_{1}^{\delta,\gamma}(\textbf{In})}{1 - \gamma}\right)^{j + 1}} \\
& = 1 - \sum_{j = 0}^{\infty} \left(1 - \gamma^{N_{j} + j} \left(1 - \pi_{1}^{\delta,\gamma}(\textbf{In})\right)^{N_{j}} \left(\frac{1 + \frac{\pi_{1}^{\delta,\gamma}(\textbf{In})}{1 - \gamma}}{1 + \gamma \frac{\pi_{1}^{\delta,\gamma}(\textbf{In})}{1 - \gamma}}\right)^{j + 1} \right) \frac{\left(\frac{\pi_{1}^{\delta,\gamma}(\textbf{In})}{1 - \gamma}\right)^{j}}{\left(1 + \frac{\pi_{1}^{\delta,\gamma}(\textbf{In})}{1 - \gamma}\right)^{j + 1}} \\
& \geq 1 - \sum_{j = K + 1}^{\infty} \frac{\left(\frac{\pi_{1}^{\delta,\gamma}(\textbf{In})}{1 - \gamma}\right)^{j}}{\left(1 + \frac{\pi_{1}^{\delta,\gamma}(\textbf{In})}{1 - \gamma}\right)^{j + 1}} \\
& ~~ - \sup \left\{1 - \gamma^{N_{j} + j} \left(1 - \pi_{1}^{\delta,\gamma}(\textbf{In})\right)^{N_{j}} \left(\frac{1 + \frac{\pi_{1}^{\delta,\gamma}(\textbf{In})}{1 - \gamma}}{1 + \gamma \frac{\pi_{1}^{\delta,\gamma}(\textbf{In})}{1 - \gamma}}\right)^{j + 1}\right\}_{j \in \{0,1,...,K\}}
\end{split}
\end{equation*}
for arbitrary $K \in \mathbb{N}$. Fix an arbitrary $\eta > 0$ and take $K$ to be large enough so that $\sum_{j = K + 1}^{\infty} \kappa^{j} / (1 + \kappa)^{j + 1} < \eta$. Then the right-hand side of the first line of the final inequality is greater than $1 - \eta$ for all $\pi_{1}^{\delta,\gamma}(\textbf{In}) / (1 - \gamma) \leq \kappa$. Observe that the elements of the set over which the supremum is taken in the final line converge to $0$ as $\gamma \rightarrow 1$ uniformly over $\pi_{1}^{\delta,\gamma}(\textbf{In}) / (1 - \gamma) \leq \kappa$. Thus, $\liminf_{\gamma \rightarrow 1} \pi_{2}^{\delta,\gamma}(\textbf{Out}) \geq 1 - \eta$. Since this holds for all $\eta > 0$, we have $\lim_{\gamma \rightarrow 1} \pi_{2}^{\delta,\gamma}(\textbf{Out}) = 1$. \end{proof}

\begin{lemma}
Fix $\delta \in [0,1)$ and a non-doctrinaire P3 prior $g_{3}$ that leads a P3 agent to only play \textbf{L} when they have previously observed a P2 agent play \textbf{In1}. Consider a sequence of steady-states such that the probability of \textbf{In} under the aggregate P1 strategy satisfies $\pi_{1}^{\delta,\gamma}(\textbf{In}) / (1 - \gamma) \leq \kappa$ for all $\gamma \in [0,1)$ for some $\kappa \in \mathbb{R}_{+}$ and $\lim_{\gamma \rightarrow 1} \pi_{2}^{\delta,\gamma}(\textbf{In1}) = 0$. Then $\lim_{\gamma \rightarrow 1} \pi_{3}^{\delta,\gamma}(\textbf{L}) = 0$.
\label{P3 L Probability Bound Result}
\end{lemma}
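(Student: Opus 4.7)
The plan is to bound $\pi_3^{\delta,\gamma}(\textbf{L})$ above by the steady-state share of P3 agents who have ever observed a P2 opponent play $\textbf{In1}$, and then to show that this share vanishes as $\gamma \to 1$ using the two hypotheses.

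First I would invoke the assumption on $g_3$: at any point in the steady state, a P3 agent plays $\textbf{L}$ only if their personal history contains at least one observation of a P2 agent playing $\textbf{In1}$. Consequently, $\pi_3^{\delta,\gamma}(\textbf{L})$ is at most the share of P3 agents whose history contains such an observation. Because matching is uniform and i.i.d.\ across periods, the per-match probability that a P3 agent observes their P2 opponent play $\textbf{In1}$ equals $p_\gamma := \pi_1^{\delta,\gamma}(\textbf{In}) \cdot \pi_2^{\delta,\gamma}(\textbf{In1})$; this uses the fact that P3 always sees the terminal node, which reveals P2's choice exactly when P1 plays $\textbf{In}$ and P2 enters.

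Next, I would combine this with the steady-state geometric age distribution. A P3 agent of age $t$ has had $t$ independent matches, so by a union bound the probability that they have observed $\textbf{In1}$ at least once is at most $t p_\gamma$. Integrating against the geometric age distribution and using the bound $\pi_1^{\delta,\gamma}(\textbf{In})/(1-\gamma) \le \kappa$,
\[
\pi_3^{\delta,\gamma}(\textbf{L}) \;\le\; \sum_{t=0}^{\infty} (1-\gamma)\gamma^{t}\, t\, p_\gamma \;=\; \frac{\gamma\, p_\gamma}{1-\gamma} \;=\; \gamma\, \pi_2^{\delta,\gamma}(\textbf{In1}) \cdot \frac{\pi_1^{\delta,\gamma}(\textbf{In})}{1-\gamma} \;\le\; \gamma\,\kappa\, \pi_2^{\delta,\gamma}(\textbf{In1}).
\]
Since $\pi_2^{\delta,\gamma}(\textbf{In1}) \to 0$ by hypothesis, the right-hand side vanishes, delivering $\lim_{\gamma \to 1}\pi_3^{\delta,\gamma}(\textbf{L}) = 0$. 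There is no serious obstacle here beyond bookkeeping: the two preceding lemmas already ensure that the per-match probability of the informative event $\textbf{In1}$ is of order $(1-\gamma)\cdot o(1)$, which is more than enough to overwhelm the $\Theta(1/(1-\gamma))$ expected P3 lifetime.
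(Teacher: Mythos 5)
Your proposal is correct and takes essentially the same route as the paper: both bound $\pi_{3}^{\delta,\gamma}(\textbf{L})$ by the steady-state share of P3 agents who have ever observed a matched P2 agent play \textbf{In1}, and show that this share is of order $\kappa\,\pi_{2}^{\delta,\gamma}(\textbf{In1})\to 0$. The only cosmetic difference is that you pass through the union bound $1-(1-p)^{t}\le tp$ before summing against the geometric age distribution, whereas the paper evaluates $\sum_{t}(1-\gamma)\gamma^{t}\bigl(1-(1-p)^{t}\bigr)$ in closed form; both yield the same conclusion.
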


\begin{proof}

Observe that
\begin{equation*}
\begin{split}
\pi_{3}^{\delta,\gamma}(\textbf{L}) & \leq \sum_{t = 0}^{\infty} (1 - \gamma) \gamma^{t} (1 - (1 - \pi_{1}^{\delta,\gamma}(\textbf{In}) \pi_{2}^{\delta,\gamma}(\textbf{In1}))^{t}) \\
& = 1 - \frac{1 - \gamma}{1 - \gamma (1 - \pi_{1}^{\delta,\gamma}(\textbf{In}) \pi_{2}^{\delta,\gamma}(\textbf{In1}))} \\
& = 1 - \frac{1}{1 + \gamma \frac{\pi_{1}^{\delta,\gamma}(\textbf{In})}{1 - \gamma} \pi_{2}^{\delta,\gamma}(\textbf{In1})}
\end{split}
\end{equation*}
since the right-hand side of the inequality is the share of P3 agents who have previously observed a P2 agent play \textbf{In1}. By Lemma \ref{P1 In Probability Bound Result}, there is some $\kappa \in \mathbb{R}_{+}$ such that $\limsup_{\gamma \rightarrow 1} \pi_{1}^{\delta,\gamma} / (1 - \gamma) \leq \kappa$, while $\lim_{\gamma \rightarrow 1} \pi_{2}^{\delta,\gamma}(\textbf{In1}) = 0$ by Lemma \ref{P2 Out Probability Bound Result}. It follows that $\lim_{\gamma \rightarrow 1} 1 + \gamma (\pi_{1}^{\delta,\gamma}(\textbf{In})) / (1 - \gamma) \pi_{2}^{\delta,\gamma}(\textbf{In1}) = 1$, which implies $\lim_{\gamma \rightarrow 1} \pi_{3}^{\delta,\gamma}(\textbf{L}) = 0$. \end{proof}

\begin{proof}[Proof of Proposition \ref{Relatively Experienced P1 Stability Result}]

Lemmas \ref{P1 In Probability Bound Result}, \ref{P2 Out Probability Bound Result}, and \ref{P3 L Probability Bound Result} together imply that, for fixed $\delta \in [0,1)$, the aggregate response mapping maps the set of aggregate strategy profiles where $\pi_{3}^{\delta,\gamma}(L) \leq 1/4$ into itself when $\gamma$ is close enough to $1$. Brouwer's fixed point theorem then guarantees the existence of a steady state profile satisfying this inequality for all sufficiently high $\gamma$. Lemmas \ref{P1 In Probability Bound Result}, \ref{P2 Out Probability Bound Result}, and \ref{P3 L Probability Bound Result} further imply that, in the $\gamma \rightarrow 1$ limit of such a sequence of steady state profiles, $\lim_{\gamma \rightarrow 1} \pi_{1}^{\delta,\gamma}(\textbf{Out}) = 1$, $\lim_{\gamma \rightarrow 1} \pi_{2}^{\delta,\gamma}(\textbf{Out}) = 1$, and $\lim_{\gamma \rightarrow 1} \pi_{3}^{\delta,\gamma}(\textbf{R}) = 1$ must be satisfied. Since $\delta \in [0,1)$ is arbitrary, we conclude that $(\textbf{Out},\textbf{Out},\textbf{In})$ is patiently stable. \end{proof}

\subsection{Proof of Proposition \ref{prop:normal_form_elimination}}
\label{Proof of Proposition prop:normal_form_elimination}

We first state a supporting lemma that shows that, with enough data, a given agent's posterior beliefs will, for every opponent population, put high probability on the empirical distribution of strategies they have previously observed agents in that population use. 
% which concerns convergence of Bayesian posteriors to empirical distributions when a single $k$-sided die is repeatedly thrown. In our context, there is an added complication in that, when they have multiple opponents, agents effectively observe the outcomes of multiple die throws each period, and their beliefs over the true probability distributions for each die may be correlated.

\begin{lemma}
For any fixed non-doctrinaire prior and every $\eta > 0$, there is some $M$ such that, whenever an agent has at least $M$ observations, for each opponent population, the agent's posterior belief puts probability $1 - \eta$ on strategy distributions within $\eta$, under the sup norm, of the empirical distribution they have observed.
\label{Adaptation of Diaconis-Freedman}
\end{lemma}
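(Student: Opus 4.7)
The plan is to apply the \citet{Diaconis1990} concentration bound marginally to each opponent population and then take a union bound. In the setting where the lemma is applied (Section \ref{sec:iterative}), the agent's observations after each match include the pure strategies played by all opponents, so after $M$ periods the agent has $M$ i.i.d.\ tuples $(s_j^{(t)})_{j \neq i}$, where each $s_j^{(t)}$ is drawn from opponent $j$'s aggregate mixed strategy $\sigma_j$, independently across $j$ and $t$ conditional on $\sigma_{-i}$.

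The first step is to characterize the marginal posterior on a single opponent's strategy. Because $g_i$ is a continuous density strictly positive on the interior of $\times_{h \in \mathcal{H}_{-i}} \Delta(A_h)$, Bayes' rule combined with Fubini gives
\[
p(\sigma_j \mid D) \;\propto\; h(\sigma_j) \prod_{t} \sigma_j(s_j^{(t)}), \qquad h(\sigma_j) \;=\; \int g_i(\sigma_j, \sigma_{-ij}) \prod_{t} \prod_{k \neq j} \sigma_k(s_k^{(t)}) \, d\sigma_{-ij}.
\]
Because $g_i$ is continuous and strictly positive on the interior, so is the effective prior $h(\sigma_j)$; moreover, on any compact subset of the interior the ratio $h(\sigma_j)/h(\sigma_j')$ is bounded uniformly over all possible data about other populations by a ratio of values of $g_i$. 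Thus $h$ belongs to a family of priors with uniform positive lower and upper density bounds on compact subsets of the interior.

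The second step is to invoke Theorem 4.2 of \citet{Diaconis1990}, or its extension in \citet{pathwise}, applied to the effective prior $h$ together with the i.i.d.\ multinomial observations $(s_j^{(t)})_t$ of opponent $j$. Because the Diaconis-Freedman concentration rate depends on the prior only through such uniform bounds, there exists $M_j$ depending only on $g_i$ and $\eta$ such that after $M_j$ observations of opponent $j$, the marginal posterior on $\sigma_j$ places probability at least $1 - \eta/(I-1)$ on mixed strategies within $\eta$, in sup norm, of the empirical distribution $\hat{\sigma}_j$.

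Finally, setting $M = \max_{j \neq i} M_j$ and applying a union bound over the $I-1$ opponent populations yields the desired conclusion: with posterior probability at least $1 - \eta$, for every opponent $j$, the mixed strategy $\sigma_j$ lies within $\eta$ of the observed empirical distribution $\hat{\sigma}_j$ in sup norm. The main technical obstacle is handling the joint, possibly non-product prior $g_i$; this is resolved via the uniform bound on the effective prior $h$ above, after which the result is a direct extension of the single-population Diaconis-Freedman bound.
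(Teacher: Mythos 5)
Your proposal is correct in substance but takes a genuinely different route from the paper. The paper's own proof is a one-line appeal to the \citet{pathwise} extension of \citet{Diaconis1990}: it views each observation as a single draw of the joint opponent profile $s_{-i}\in S_{-i}$, so the agent's prior lives on $\Delta(S_{-i})$ but is concentrated on the lower-dimensional set of product distributions (those generated by independent randomizations of the opponents), which is exactly why the non-full-support extension is needed; concentration around the empirical joint distribution then delivers concentration of each marginal. You instead avoid the support issue entirely by marginalizing population by population: the posterior on $\sigma_j$ factors as $j$'s multinomial likelihood times a data-dependent effective prior $h(\sigma_j)$, to which the classical full-support Diaconis--Freedman bound applies, and a union bound finishes. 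What your route buys is that it needs only the original \citet{Diaconis1990} theorem rather than its extension; what it costs is the extra step of controlling $h$ uniformly over the other populations' data and over $M$. That step is where you should be slightly more careful: the bound $h(\sigma_j)/h(\sigma_j')\le\sup_{\sigma_{-ij}}g_i(\sigma_j,\sigma_{-ij})/g_i(\sigma_j',\sigma_{-ij})$ is finite uniformly only if the density $g_i$ is bounded above and below, not merely strictly positive and continuous on the interior (a density that spikes or vanishes near the boundary of the $\sigma_{-ij}$ coordinates can make the ratio blow up). Since essentially the same boundedness is what Theorem 4.2 of \citet{Diaconis1990} requires of the prior anyway, this is a regularity caveat shared with the paper's own (very terse) argument rather than a genuine gap, but it deserves an explicit hypothesis or a sentence of justification in your write-up.
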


This follows from the \citet{pathwise} extension of the pathwise concentration result of \citet{Diaconis1990} to priors that do not have full support. The support restriction arise because the agent's prior is concentrated on distributions that can be generated by independent randomizations of their opponents.

\begin{proof}[Proof of Proposition \ref{prop:normal_form_elimination}]

Fix a discount factor $\delta \in [0,1)$, and consider a sequence of survival probabilities $\{\gamma_{k}\}_{k=1}^{\infty}$ with an associated sequence of steady-state strategy profiles $\{\pi_{k}\}_{k=1}^{\infty}$
such that $\pi_{k}\to\pi^{*}$. We show that $\pi_{i}^{*}(s_{i})=0$ for each $i$ and each $s_{i}\notin S_{i}^{*}$.

First, it is clear that $\pi_{i}^{*}(s_{i})=0$ for each $i$ and
each $s_{i}\notin S_{i}^{(0)}$. This is because agents have full-support
posterior beliefs after every history and their observations do not
depend on their play, so they never use weakly dominated strategies.

By Lemma \ref{Adaptation of Diaconis-Freedman}, for a given $\eta > 0$, there is some $M$ such that, whenever a player has at least $M$ observations, their posterior belief over the prevailing strategy distribution in each of their opponent populations puts probability $1 - \eta$ on strategy distributions within $\eta$ of the empirical distribution they have observed. By the law of large numbers, we can choose this $M$ to be such that the posterior beliefs of an agent in an arbitrary player role $i$ who has lived at least $M$ periods will be accurate with high probability in the following sense. With probability $1 - 2 \eta$, at the end of the period, following any possible observation in the period itself, the agent's posterior belief puts probability at least $1 - 2 \eta$ on strategy distributions within $2 \eta$ of the true prevailing distribution for each opponent role $j \neq i$. % Let $L=\max(|S_{1}|,...,|S_{n}|)$ be the largest cardinality of the strategy sets. Let $\epsilon>0$ be small enough so that for each $i$, each $s_{i}\in D_{i}^{(t+1)}$ does not best respond to any strictly mixed $(\sigma_{j})_{j\ne i}$ with $\sum_{s_{j}\in S_{j}^{(t)}}\sigma_{j}(s_{j})\ge1-\epsilon$ for every $j\ne i$. We can find large enough $K_{1}$ so that for each $k>K_{1},$ we have $\pi_{i,k}(s_{i})<\epsilon/8L$ for each $i$ and $s_{i}\notin S_{i}^{(t)}.$ By \citet*{fudenberg2017bayesian} Proposition 1, there exists some $c>0$ so that for any $i\ne j$ and after any history of length $n$ where $i$ observes $m$ instances of $j$ playing some strategy $s_{j},$ the mean posterior belief about population $j$ playing $s_{j}$ is no larger than $\frac{2(m+c)}{n+c}$. 

Now suppose inductively that $\pi_{i}^{*}(s_{i})=0$ for each $i$
and $s_{i}\notin S_{i}^{(m)}$ for some $m$. Fix arbitrary $\epsilon, \nu > 0$ and restrict attention to $k$ large enough so that $|\pi_{i,k} - \pi_{i}^{*}| < \epsilon / 2$ for all player roles $i$. By the preceding argument, we know that for all sufficiently large $k$, in a given period the share of player $i$ agents whose posterior beliefs at the end of the period, regardless of their observations during the period, for each opponent role $j$ put probability at least $1 - \epsilon$ on strategy distributions within $1 - \epsilon$ of $\Delta(S_{j}^{(m)})$ will exceed $1 - \nu$. Let $\sigma \in \Delta(S_{-i})$ be the expectation held by such an agent about the play of their opponents in the current period. The properties of the agent's beliefs imply that $\sigma$ is full support and that $\sigma(S_{j}^{(m)}|s_{-ij}) \geq (1 - \epsilon)^{2}$ for all $s_{-ij} \in S_{-ij}$. For all sufficiently small $\epsilon$, every $s_{i} \not\in S_{i}^{(m)}$ must be suboptimal for such an agent, so $\pi_{i}^{*}(s_{i}) \leq \nu$ must hold for each $i$ and $s_{i}\notin S_{i}^{(m + 1)}$. Since this is true for all $\nu > 0$, we conclude that $\pi_{i}^{*}(s_{i}) = 0$ for each $i$ and $s_{i}\notin S_{i}^{(m + 1)}$.

% Let $\Delta>0$ be given. There exists $N_{1}$ so that for each $n>N_{1},$
% if $\frac{m}{n}\le\epsilon/4L$, then $\frac{2(m+c)}{n+c}\le\epsilon/L$.
% There exists $N_{2}$ so that for each $n>N_{2},$ the probability
% of seeing $(\epsilon/4L)\cdot n$ or more successes from $n$ i.i.d.
% binary trials each with success probability lower than $\epsilon/8L$
% is less than $\Delta.$ There exists $K_{2}$ so that for $k>K_{2},$
% each $\gamma_{i,k}$ is near enough 0 such that the proportion of
% the population younger than age $\max(N_{1},N_{2})$ is smaller than
% $\Delta.$ Then, for each $k>\max(K_{1},K_{2}),$ we get $\pi_{i,k}(s_{i})<2\Delta$
% for each $i$ and $s_{i}\notin S_{i}^{(t+1)}.$ This is because an
% agent $i$ with age $n>\max(N_{1},N_{2})$ whose sample is ``typical''
% in the sense that they see fewer than $(\epsilon/4L)\cdot n$ instances
% of each opponent strategy $s_{j}\notin S_{j}^{(t)}$ (which have steady-state
% frequencies lower than $\epsilon/8L$, by the inductive hypothesis)
% have a mean posterior belief that is full-support and assigns lower
% than $\epsilon/L$ probability to each opponent strategy $s_{j}\notin S_{j}^{(t)}$.
% The set $S_{i}^{(t+1)},$ by definition, must contain $i$'s best
% response to such a belief. 

% This means $\pi_{i}^{*}(s_{i})<2\Delta$ for each $i$ and each $s_{i}\notin S_{i}^{(t+1)}.$
% Since $\Delta>0$ is arbitrary, this shows $\pi_{i}^{*}(s_{i})=0$
% for each $i$ and each $s_{i}\notin S_{i}^{(t+1)}.$ 

% By induction, we conclude that $(\pi_{i}^{*})$ is supported on $\Pi_{i}S_{i}^{*}$.
\end{proof}

\subsection{Proof of Proposition \ref{prop:simple_game_BI}}
\label{Proof of Proposition prop:simple_game_BI}

\begin{proof}
Let $D_{i}^{(m)}$ be those extensive-form strategies of $i$ that
choose an action inconsistent with backward induction at
a decision node that is $m+1$ steps away from the terminal nodes,
but do not do so at any decision nodes closer to the terminal nodes.
To see these choices are valid for the iterative procedure, first
note $D_{i}^{(0)}$ are weakly dominated for $i$: For any $s_{i}\in D_{i}^{(0)},$
consider a different strategy $s_{i}'$ that changes one of the non-backward-induction
actions at one of $i$'s decision nodes $h_{i}$ one step away from
terminal nodes to a backward-induction action. Then $u_{i}(s_{i}',s_{-i})\ge u_{i}(s_{i},s_{-i})$
for all $s_{-i}$. Moreover, there exists at least one $s_{-i}^{*}$ such that
$h_{i}$ is reached (since the game is simple), so $u_{i}(s_{i}',s_{-i}^{*})>u_{i}(s_{i},s_{-i}^{*}).$

By definition, $S_{i}^{(m)}$ are the strategies where $i$ uses the backward-induction
action at all decision nodes $m+1$ steps or fewer away from terminal
nodes. To see that each $s_{i}\in D_{i}^{(m+1)}$ fails to be a best
response for $i$ to full-support conjectures of  their opponents' play that put high
conditional probabilities on $S_{j}^{(m)}$ for each $j \neq i$, let $h_{i}$ be a decision node
$m+2$ steps away from terminal nodes where $s_{i}$ does not choose
the backward-induction action, and let $s_{i}'$ be the strategy that only
differs from $s_{i}$ in that $s_{i}'$ by selecting a backward-induction
action at $h_{i}.$ Let $J$ be the set of players who have decision
nodes in the subgame starting at $h_{i}$, not counting $h_{i}$ itself.
Because the game is simple, $i\notin J$. For the same reason,  whether  play reaches $h_{i}$ does not depend on the strategy of $i$ or the strategies of the players in $J$.  
 Let $S_{-iJ}^{reach}\subseteq S_{-iJ}$
be the set of strategies of $-iJ$ that reach $h_{i}$.

For any $s_{-i}\in S_{-i}^{(m)},$ $i$'s payoff in the subgame starting
at $h_{i}$ is strictly higher with $s_{i}'$ than with $s_{i},$
which we write as $u_{i}(s_{i},s_{-i}\mid h_{i})<u_{i}(s_{i}',s_{-i}\mid h_{i}).$
This uses the fact that $i$'s payoff in the subgame starting
at $h_{i}$ only depends on $i$'s action at $h_{i}$ and on the strategies
of players in $J$. Consider any strictly mixed profile $\sigma_{-i}$ where
$\sigma_{-i}(S_{J}^{(m)}|s_{-iJ})\ge1-\epsilon$ for each
$s_{-iJ} \in S_{-iJ}$. When $-iJ$ choose a strategy profile in $S_{-iJ}^{reach}$,
$i$'s payoff is equal to $i$'s payoff in the subgame starting at
$h_{i}.$ When $-iJ$ choose a strategy outside of $S_{-iJ}^{reach}$,
$i$ is indifferent between $s_{i}$ and $s_{i}'.$ Therefore, $u_{i}(s_{i},\sigma_{-i})-u_{i}(s_{i}',\sigma_{-i})=\sigma_{-i}(S_{-iJ}^{reach})\cdot[\mathbb{E}[u_{i}(s_{i},s_{-i}\mid h_{i})|S_{-iJ}^{reach}]-u_{i}(s_{i}',s_{-i}\mid h_{i})|S_{-iJ}^{reach}]].$
We have $\sigma_{-iJ}(S_{-iJ}^{reach})>0$ since each opponent's strategy
is strictly mixed, and we have $u_{i}(s_{i},(\sigma_{j})_{j\in J}\mid h_{i})-u_{i}(s_{i}',(\sigma_{j})_{j\in J}\mid h_{i})<0$
for all sufficiently small $\epsilon > 0$. \end{proof}

\subsection{Proof of Proposition \ref{prop:naive_terminal_nodes}}
\label{Proof of Proposition prop:naive_terminal_nodes}
\begin{proof}
First note that for every non-doctrinaire prior $g$ over behavior
strategies in $\mathcal{G},$ there is a non-doctrinaire prior $\hat{g}$
over mixed strategies in $\mathcal{G}$ that generates the same set of
steady states for every $0\le\delta,\gamma<1$. This is because each
$-i$ behavior strategy $(\alpha_{h_{-i}})_{h_{-i}\in\mathcal{H}_{i}}$
is associated with an equivalent mixed strategy $\sigma_{-i}\in\Delta(\mathbb{S}_{-i}),$
defined by $\sigma_{-i}(s_{-i})=\times_{h_{-i}\in\mathcal{H}_{-i}}\alpha_{h_{-i}}(s_{-i}(h_{-i}))$
for each $s_{-i}\in\mathbb{S}_{-i}$. This association  maps the interior of the set of behavior strategies onto the interior
of the set of mixed strategies, so the non-doctrinaire $g_{i}$ generates
a non-doctrinaire $\hat{g}_{i}$ over $-i$'s mixed strategies. Conversely,
if we start with a non-doctrinaire prior $\hat{g}$ over mixed strategies
in $\mathcal{G}$, then by applying Kuhn's theorem in a game with perfect
recall, we can identify a non-empty set of equivalent behavior strategies
for every mixed strategy. Consider the prior $g_{i}$ over $-i$ behavior
strategies where $i$ believes $-i$ first draw a mixed strategy $\sigma_{-i}$
according to $\hat{g}_{i}$, and then randomize uniformly over all
behavior strategies equivalent to it. Then $g_{i}$ is strictly positive on the interior because $\hat{g}_{i}$ enjoys
the same property.

Learning with a non-doctrinaire prior $\hat{g}$ over mixed strategies
in $\mathcal{G}$ with terminal node partitions $\mathcal{P}$ and learning with the same $\hat{g}$ in $\mathcal{N}$
with the $\mathcal{P}$-equivalent partitions generate the same set
of steady states for every $0\le\delta,\gamma<1$. This
is because both environments generate the same dynamic optimization
problem for each agent: in both environments, they start with the
same prior beliefs, receive the same payoffs for each strategy profile
$(s_{i},s_{-i})$ played, and observe the same information (up to
identifying elements of the $\mathcal{P}_i$ partition with those in the equivalent $\hat{\mathcal{P}}_i$ partition.) 
\end{proof}

\subsection{Proof of Claim \ref{claim:coarser_partition}}
\label{Proof of Claim claim:coarser_partition}

\begin{proof}

Suppose there is a prior $g$ satisfying the hypotheses of the claim,
parameters $\{\delta_{j}\}_{j\in\mathbb{N}},$ $\{\gamma_{j,k}\}_{j,k\in\mathbb{N}}$,
and associated steady-state profiles $\{\pi_{j,k}\in\Pi^{*}(g,\delta_{j},\gamma_{j,k})\}$
such that $\lim_{j\to\infty}\delta_{j}=1,$ $\lim_{k\to\infty}\gamma_{j,k}=1$
for each $j$, and $\lim_{j\to\infty}\lim_{k\to\infty}\pi_{j,k}=\pi$.

For arbitrary $\epsilon>0$, we will show $\pi(\textbf{R})<2\epsilon$.
The idea is to show that all P2 agents except the very young and those
with unusual samples will have seen enough instances of P1 choosing
\textbf{In2} and no instance of P1 choosing \textbf{Out }as to play
\textbf{L} at their information set. 

By Proposition 1 from \citet{fudenberg2017bayesian}, there exists some $x\ge1$ so that
if a P2 agent has $n$ observations of P1's play and in each of their observations P1 never chose
\textbf{In1, }then their mean posterior probability of P1 choosing
\textbf{In1} is lower than $\frac{2x}{n+x}$. By Theorem 1 from \citet{fudenberg2017bayesian}, there exists $N\ge1$ such that in any steady state $\hat{\pi}$
where $\hat{\pi}({\textbf{In2}})\ge q$, with probability
at least $1-\epsilon$ a P2 agent with age at least $N/q$ will have
a mean posterior belief of P1 playing\textbf{ In2} that is higher
than $(1-\epsilon)q$.

Define the constant $K=\frac{16Nx}{\epsilon }$ and find some $\beta<1$
so that, whenever $\delta\ge\beta$ and $\gamma\ge\beta$, a P1
agent will always choose \textbf{In2} in the first $K$ periods of
life.

Consider any $j$ large enough so that 
 $\delta_{j}\ge\beta$. For large $k$, using the fact that P1
agents experiment with \textbf{In2} for at least $K$ periods, $\pi_{j,k}({\textbf{In2}})\ge(1+\gamma_{j,k}+...+\gamma_{j,k}^{K-1})\cdot(1-\gamma_{j,k})\ge\frac{1}{2}(1-\gamma_{j,k})\cdot K$.
So, a P2 agent aged at least $\frac{N}{\frac{1}{2}(1-\gamma_{j,k})\cdot K}=(\frac{\epsilon}{2}\frac{1}{1-\gamma_{j,k}})\cdot\frac{1}{4\cdot x}$
has at least $1-\epsilon$ chance of believing that P1 plays \textbf{In2}
with probability at least $\frac{1}{4}(1-\gamma_{j,k})\cdot K$.
This age is no larger than $\frac{\epsilon}{2}$
times the expected P2 lifespan, which contains at least $1-\epsilon$
fraction of the P2 population.  Also, a P2 agent with age at least $\frac{\epsilon}{2}\frac{1}{1-\gamma_{j,k}}$
has a mean posterior belief about P1 playing \textbf{In1} that is
always smaller than $\frac{2x}{\frac{\epsilon}{2}\frac{1}{1-\gamma_{j,k}}+x}=\frac{2x(1-\gamma_{j,k})}{\frac{1}{2}\epsilon +(1-\gamma_{j,k})x}\le\frac{4x}{\epsilon }(1-\gamma_{j,k})$.
Taking the ratio of the mean posterior beliefs assigned to P1 playing
\textbf{In2} and \textbf{In1}, we get $\frac{\frac{1}{4}(1-\gamma_{j,k})\cdot K}{\frac{4x}{\epsilon }(1-\gamma_{j,k})}=\frac{1}{16}K\cdot\frac{\epsilon}{x}=N\ge1.$ 

Therefore, except for a mass of smaller than $\epsilon$ of P2s younger
than $\frac{\epsilon}{2}\cdot\frac{1}{1-\gamma_{j,k}}$ and another
mass $\epsilon$ of P2s with unusual samples, P2s   respond to
\textbf{In1} and \textbf{In2} with \textbf{L}. This shows in the steady
state with $\delta_{j}$ and $k$ large enough, $\pi_{j,k}(\textbf{R})<2\epsilon$.
This implies also that $\lim_{k\to\infty}\pi_{j,k}(\textbf{R})<2\epsilon$ for all large enough
$j,$ therefore $\pi(\textbf{R})<2\epsilon.$ 
\end{proof}

\end{document}